\newcommand\blfootnote[1]{%
  \begingroup
  \renewcommand\thefootnote{}\footnote{#1}%
  \addtocounter{footnote}{-1}%
  \endgroup
}
\theoremstyle{definition}
\newtheorem{definition}{Definition}
\newtheorem{remark}{Remark}
\newtheorem{observation}{Observation}
\theoremstyle{plain}
\newtheorem{lemma}{Lemma}
\newtheorem{theorem}{Theorem}
\newtheorem{proposition}{Proposition}
\title{Unrooted non-binary tree-based phylogenetic networks}
\author[1, $\ast$]{Mareike Fischer}
\author[1,3]{Lina Herbst}
\author[1]{Michelle Galla}
\author[2]{Yangjing Long}
\author[1]{Kristina Wicke}
\affil[1]{Institute of Mathematics and Computer Science, University of Greifswald, Greifswald, Germany}
\affil[2]{School of Mathematics and Statistics, Central China Normal University, Wuhan, Hubei, China}
\affil[3]{Transmission, Infection, Diversification \& Evolution Group, Max Planck Institute for the Science of Human History, Jena, Germany}
\begin{document}
\maketitle

\begin{abstract} 
Phylogenetic networks are a generalization of phylogenetic trees allowing for the representation of non-treelike evolutionary events such as hybridization. Typically, such networks have been analyzed based on their `level', i.e.  based on the complexity of their 2-edge-connected components. However, recently the question of how `treelike' a phylogenetic network is has become the center of attention in various studies. This led to the introduction of \emph{tree-based networks}, i.e. networks that can be constructed from a phylogenetic tree, called the \emph{base tree}, by adding additional edges. While the concept of tree-basedness was originally introduced for rooted phylogenetic networks, it has recently also been considered for unrooted networks. In the present study, we compare and contrast findings obtained for unrooted \emph{binary} tree-based networks to unrooted \emph{non-binary} networks. In particular, while it is known that up to level 4 all unrooted binary networks are tree-based, we show that in the case of non-binary networks, this result only holds up to level 3.
\end{abstract}

\textit{Keywords:}
phylogenetic tree, phylogenetic network, tree-based network, level-$k$ network, Hamiltonian path

\blfootnote{$^\ast$Corresponding author\\ \textit{Email address:} \texttt{email@mareikefischer.de} (Mareike Fischer)}

\section{Introduction}
Traditionally, phylogenetic trees have been used in order to represent the evolutionary history of sets of species. However, the evolution of species is not always tree-like, in particular if they have been subject to reticulation events like hybridization or horizontal gene transfer. Thus, phylogenetic networks have come to the fore as a mathematical generalization of phylogenetic trees, allowing for the representation of non-tree-like evolutionary events. 

Mathematically, unrooted phylogenetic networks are connected, simple graphs containing degree-1 vertices (representing present-day species) that are not necessarily acyclic, as reticulation events may lead to 2-edge-connected components in the graph. In this regard, \citet{Choy2005} introduced the concept of \emph{level-$k$ networks}. These are networks in which in each 2-edge-connected component the number of edges which need to be removed to turn them into trees is at most $k$. In this sense, the level measures the complexity of a network. So the smaller the level, the more `treelike' the network. 

However, while a network can contain various trees, not all networks are suitable to explain the evolution of present-day species to the same extent. This is due to the fact that some networks contain no \emph{support tree} (\cite{Francis2015}). A support tree is a spanning tree that has the same leaf set as the network, i.e. it represents the evolution of the same present-day species as the given network. Biologically, these networks are most relevant, because the leaves typically represent the species for which one has data (e.g. DNA sequences) and on which the reconstruction of evolution is based.

So given a phylogenetic network, it is of high interest to study its `treelikeness', and in terms of support trees, this reduces to the question whether the network is merely a tree with additional edges. While \cite{Francis2015} introduced the concept of tree-basedness for rooted binary phylogenetic networks, recently \cite{Francis2018} extended it to unrooted binary networks, \cite{Jetten2018} to rooted non-binary networks and \cite{Hendriksen2018} to unrooted non-binary networks. 

In the present manuscript, we focus on unrooted non-binary networks and link and contrast our findings to results recently obtained for unrooted binary networks by \citet{Francis2018}. In particular, we show that while up to level 4 all unrooted binary networks are tree-based (Theorem 1 in \citet{Francis2018}, in case of non-binary networks, this result only holds up to level 3 (Theorem \ref{level_non-binary} of the present manuscript). More precisely, there are non-binary level-4 networks that are not tree-based. This provides an answer to Question 5.3 posed in \citet{Hendriksen2018}, asking whether there are networks of level less than 5 that are not tree-based.\footnote{Note that our definition of tree-based networks corresponds to the definition of \emph{loosely tree-based} networks in \citet{Hendriksen2018} (see Section 2.1 in the present manuscript), so the question is posed there in a slightly different way.}

Our analysis of the relationship between the level of a non-binary network and its tree-basedness relies on certain properties of the graph underlying the network (e.g. the number of vertices in the `blobs' of the network). A similar analysis can also be used to show that all unrooted \emph{binary} networks up to level 4 are tree-based, i.e. to re-prove Theorem 1 of \citet{Francis2018}. We provide the details of this alternative proof in the Appendix. Importantly, this allows us to show that there are precisely 2 `minimal' (in the number of vertices) proper unrooted binary phylogenetic level-5 networks that are not tree-based; i.e. such networks are very rare.

The remainder of this manuscript is organized as follows. In Section \ref{sec_preliminaries}, we introduce some basic phylogenetic and graph-theoretical concepts and terminology.
In Section \ref{sec_basic} we then consider several basic results obtained for unrooted binary networks in \citet{Francis2018} and generalize them to non-binary networks. We then use these results to state the main theorem of this paper saying that all unrooted non-binary networks up to level 3 are tree-based in Section \ref{sec_non-binary}. In Section \ref{sec_binary} we re-visit the corresponding result for unrooted binary networks obtained by \citet{Francis2018} (Theorem 1 therein) and show that there are exactly two unrooted binary level-5 networks with 12 vertices that are not tree-based. Moreover, we provide an alternative proof of Theorem 1 in \citet{Francis2018} in the Appendix.

\section{Preliminaries} \label{sec_preliminaries}
\subsection{Phylogenetic Concepts}
\subsubsection*{Phylogenetic Networks}
Throughout this manuscript we assume that $X$ is a finite set (e.g. of taxa or species) with $\vert X \vert \geq 1$.
A \emph{non-binary unrooted phylogenetic network} $N$ (on $X$) is a connected, simple graph $G=(V,E)$ with $X \subseteq V$ and no vertices of degree 2, where the set of degree 1 vertices (referred to as the \emph{leaves} or \emph{taxa} of the network) is bijectively labeled by and thus identified with $X$. 
An unrooted phylogenetic network is called \emph{unrooted binary} if every non-leaf vertex $u \in V \setminus X$ has degree 3.\footnote{Note that, somewhat counterintuitively, the class of binary phylogenetic networks is contained in the class of non-binary ones, as non-binary ones are simply more general. This definition is common in the phylogenetic literature.} In the following, we denote by $\mathring{E}$ the set of interior edges of $N$, i.e. those edges that are {\em not} incident to a leaf.

Moreover, note that an unrooted \emph{phylogenetic tree} is an unrooted phylogenetic network whose underlying graph structure is a tree.

\subsubsection*{Edge subdivision and vertex suppression}
In order to discuss the notion of tree-basedness, we now need to introduce the concepts of \emph{subdividing an edge} and \emph{suppressing a vertex}. Therefore, let $N$ be a phylogenetic network with some edge $e=\{u,v\}$. Then, we say that we \emph{subdivide} $e$ by deleting $e$, adding a new vertex $w$ and adding the edges $\{u,w\}$ and $\{w,v\}$. The new degree 2 vertex $w$ is sometimes also referred to as an \emph{attachment point}. Note that we often also refer to the vertex adjacent to a leaf $x$ as the attachment point of $x$, even if this vertex has degree higher than two.

In contrast to adding vertices to a network, given a degree 2 vertex $w$ with adjacent vertices $u$ and $v$, by \emph{suppressing $w$} we mean deleting $w$ and its two incident edges $\{u,w\}$ and $\{w,v\}$ and adding a new edge $\{u,v\}$. Note that the resulting graph can be a multigraph, i.e. a graph that may contain parallel edges and/or loops (cf. Figure \ref{fig_suppressing}). 

\begin{figure}[htbp] 
	\centering
	\includegraphics[scale=0.15]{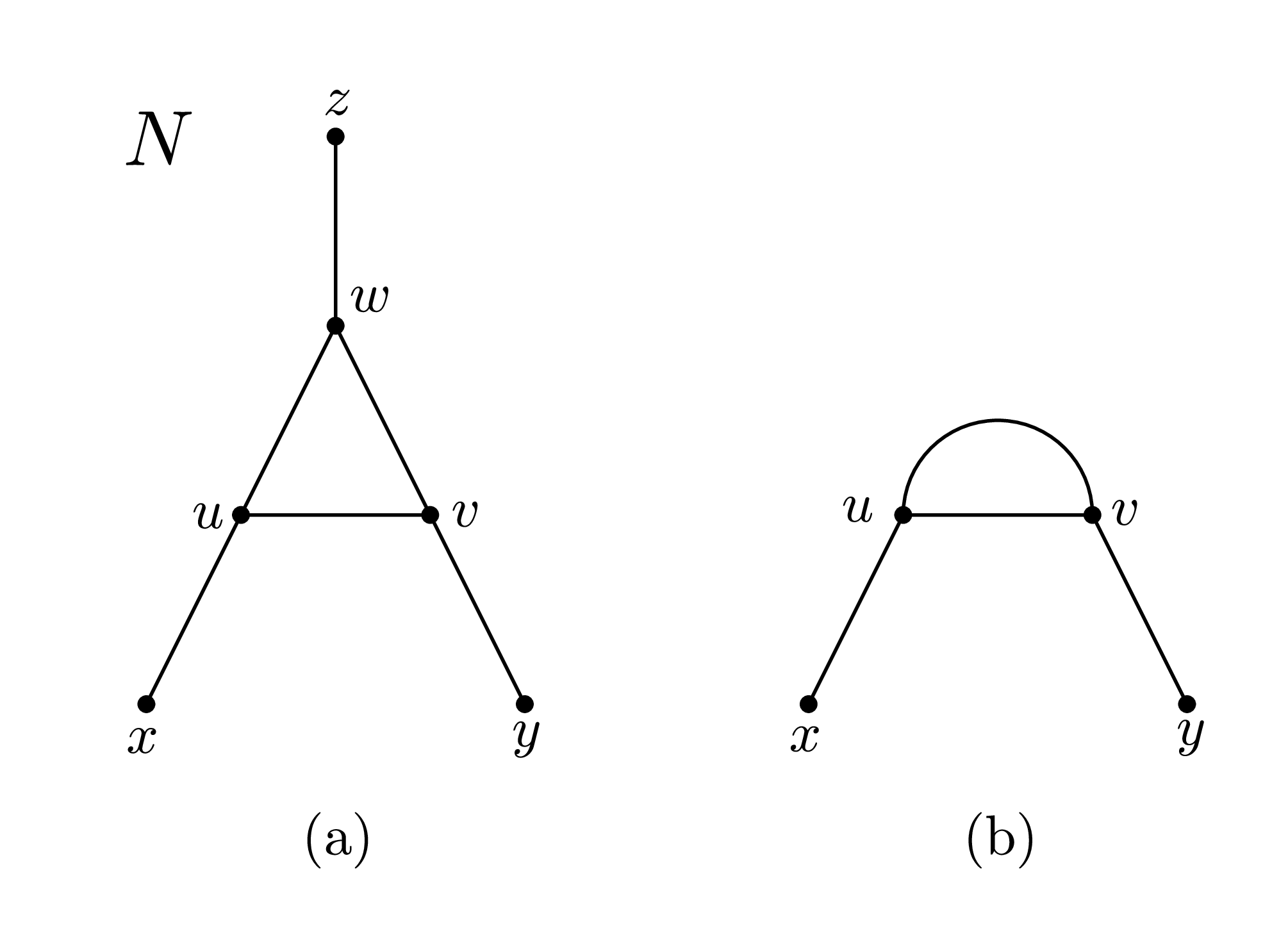}
	\caption{Unrooted phylogenetic network $N$ on three leaves (a). Deleting leaf $z$ and suppressing the resulting degree 2 vertex $w$, i.e. deleting $w$ and its incident edges $\{u,w\}$ and $\{w,v\}$ and adding a new edge $\{u,v\}$ results in a multigraph, because $u$ and $v$ are already connected by an edge (b).}
	\label{fig_suppressing}
\end{figure}

\subsubsection*{Tree-based networks}
Edge subdivisions and attachment points play a fundamental role in the concept of tree-based networks. 
When tree-basedness was first introduced for rooted binary phylogenetic networks by \citet{Francis2015}, tree-based networks were constructed from rooted binary phylogenetic trees by subdividing edges and adding new edges between such pairs of attachment points. However, tree-based networks (whether they are rooted or unrooted, binary or non-binary) can alternatively be defined as follows. For technical purposes (e.g. for Lemma \ref{N-x} in the Appendix), we first define tree-basedness for general (multi)graphs before defining it for phylogenetic networks (which are by definition \emph{simple} graphs).

\begin{definition} \label{def_tree-based}
Let $G=(V,E)$ be a connected (multi)graph and with leaf set $V^1$, i.e. $V^1=\{v \in V: \, deg(v) \leq 1\}$. $G$ is called \emph{tree-based} if there is a spanning tree $T=(V,E')$ in $G$ (with $E' \subseteq E$) whose leaf set is equal to $V^1$. $T$ is then called a \emph{support tree} (for $G$). 
Moreover, tree $T'$ which can be obtained from $T$ by suppressing potential degree 2 vertices is called a \emph{base tree} (for $G$). 
If $N$ is a phylogenetic network with leaf set $X$ and underlying graph $G$ and $G$ is tree-based, then we call $N$ a \emph{tree-based network} with support tree $T$ (and base tree $T'$).
\end{definition}

Note that the existence of a support tree $T$ for $G$ implies the existence of a base tree $T'$ for $G$.

Moreover, note that while we consider only one kind of tree-basedness for unrooted non-binary networks, one might want to distinguish between several forms of tree-based networks, depending on different ways of introducing additional edges (cf. \citet{Hendriksen2018}). For instance, \citet{Hendriksen2018} distinguished between strictly tree-based, tree-based and loosely tree-based networks, where the class of loosely tree-based networks is the most general one, which contains the other two. The latter class, i.e. the one that \citet{Hendriksen2018} calls the class of loosely tree-based, is precisely the class of networks we call tree-based in the present manuscript.

\subsubsection*{Cut edges/vertices, blobs, level-$k$ and proper networks} \label{blobs}
We will see in subsequent sections that it is often useful to decompose a phylogenetic network into simpler pieces, which can then be analyzed individually. Therefore, recall the following definitions from \citet{Gambette2012}. Let $N$ be an unrooted network. 
A \emph{cut edge}, or \emph{bridge}, of $N$ is an edge $e$ whose removal disconnects the graph, i.e. an edge $e$ such that $N-e$ is disconnected. Similarly, we call a vertex $v$ a \emph{cut vertex}, if deleting $v$ and all its incident edges disconnects the graph.

A cut edge is called \emph{trivial} if one of the connected components induced by the removal of the cut edge is a single vertex (which must necessarily be a leaf). Following \citet{Gambette2012}, we call $N$ a \emph{simple} network if all of its cut edges are trivial.
A \emph{blob} in a network is a maximal connected subgraph that has no cut edge. If a blob consists only of one vertex, we call the blob \emph{trivial}. Note that this implies that we can consider a network as a \enquote{tree-like structure} with blobs as vertices (cf. Figure \ref{blob}). The idea of \enquote{blobbed trees} has already been introduced for rooted phylogenetic networks in \citet{Gusfield2005} and we use it for unrooted ones in the following. Moreover, note that in a binary network it can be easily seen that a blob not only contains no cut edges, but it contains no cut vertices either (because in binary networks, every cut vertex is incident to a cut edge, and these are excluded from blobs, cf. Lemma \ref{binarycutvertex} in the Appendix). In contrast, in the non-binary setting, a blob may contain cut vertices. An example for such a blob can be seen in Figure \ref{blob}. 

\begin{figure}[htbp] 
	\centering
	\includegraphics[scale=0.3]{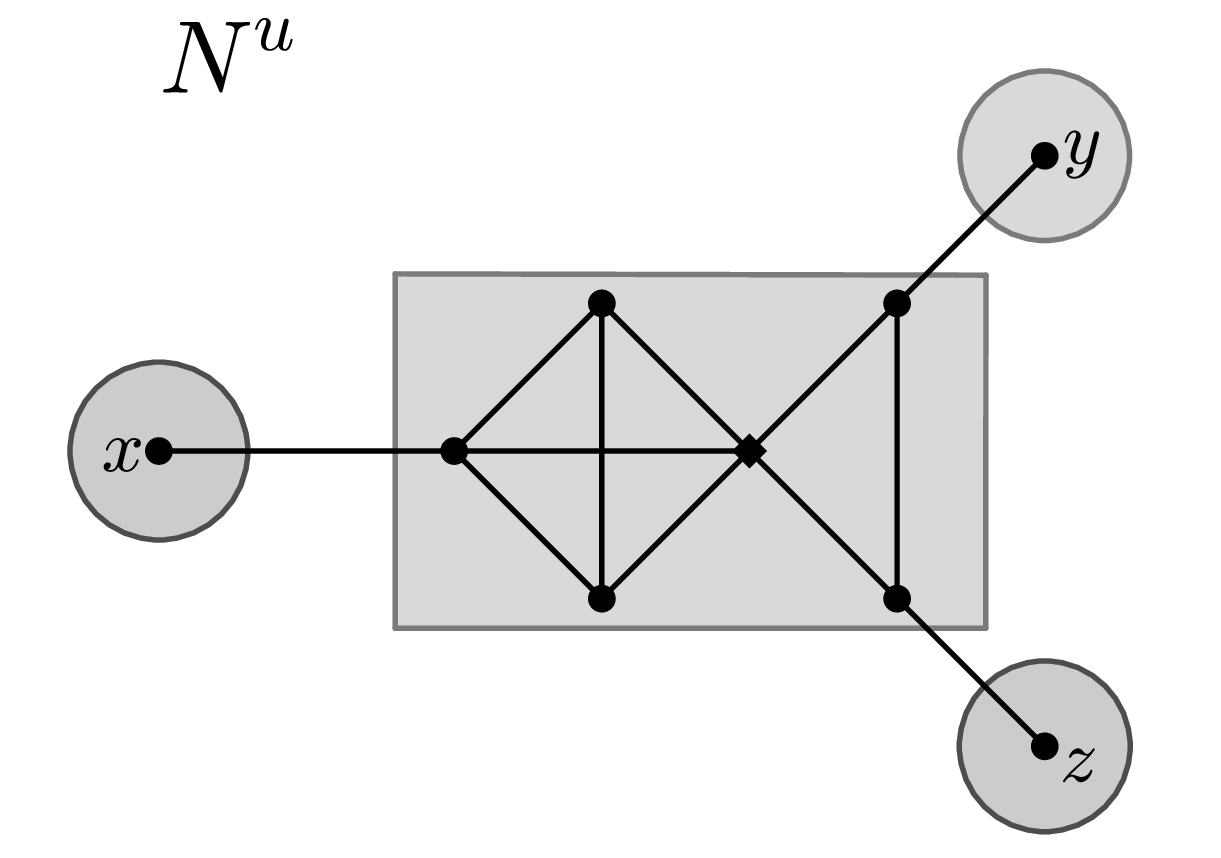}
	\caption{Unrooted non-binary phylogenetic network $N$ on taxon set $X=\{x,y,z\}$. The gray areas correspond to the blobs of $N$. Note that $N$ consists of three trivial blobs and one non-trivial blob and this non-trivial blob contains a cut vertex (depicted as a square vertex). Moreover, note that the cut edges and blobs in $N$ induce a \enquote{tree-like structure}, i.e. $N$ can be considered as a (graph-theoretical) tree with blobs as vertices.}
	\label{blob}
\end{figure}

Recall that a binary network $N$ on $X$ is called \emph{proper} (\citet{Francis2018}) if every cut edge induces a \emph{split} of $X$, i.e. a bipartition of $X$ into two non-empty subsets. Here, however, we call a network \emph{proper} if the removal of any cut edge or cut vertex present in the network leads to connected components containing at least one leaf each. 
Note that this definition of proper networks generalizes the one given in \citep{Francis2018} to the non-binary case. There, only cut edges were considered for proper networks. However, the alteration of the definition in the non-binary case is needed in order to exclude some networks which cannot be tree-based. In particular, we exclude non-tree networks in which all leaves are attached to the same interior vertex.

\begin{remark}  \label{all_leaves_one_vertex}
A network which is not a tree and for which $|V|>|X|\geq 1$ such that all leaves are attached to the same interior vertex cannot be tree-based. This is due to the fact that any spanning tree will induce additional leaves that are not part of $X$, i.e. no spanning tree of the network is a support tree (cf. Figure \ref{Fig_LeafAttach}). 
Moreover, note that a network with $\vert X \vert = 1$ and $\vert V \vert > \vert X \vert$ cannot be proper, because if $\vert X \vert = 1$ no cut edge or cut vertex can induce a partition of the taxon set. 
\end{remark}

\begin{figure}[htbp]
\centering
\includegraphics[scale=0.4]{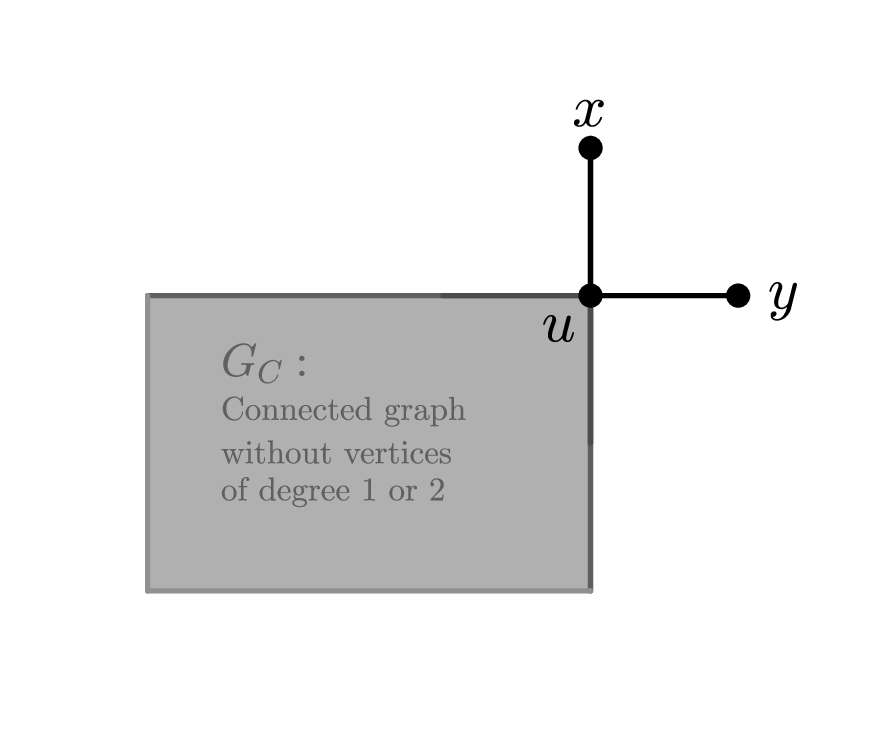}
\caption{Network $N$ with $\vert V \vert > \vert X \vert \geq 1$ such that all leaves, i.e. $x$ and $y$, are attached to the same interior vertex $u$. Any support tree for $N$ would have to cover all vertices of $G_C$, as well as $u$, $x$ and $y$. Here, $G_C$ is a connected graph without vertices of degree 1 or 2 (in particular $G_C$ is not a tree). Thus, any spanning tree would induce at least one additional leaf besides $x$ and $y$. In particular, it would not be a support tree and thus $N$ cannot be tree-based.}
\label{Fig_LeafAttach}
\end{figure}

Given a network $N$ on $X$ and an integer $k \geq 0$, we call $N$ a \emph{level-}$k$ network if at most $k$ edges have to be removed from each blob of $N$ to obtain a tree. 

Moreover, following \citet{Francis2018}, given a network $N$ and a blob $B$ in $N$, we define a simple network $B_{N}$ by taking the union of $B$ and all cut edges in $N$ incident with vertices in $B$, where the leaf set of $B_{N}$ is simply the set of end vertices of these cut edges that are not already a vertex in $B$.

\subsection{Graph-theoretical concepts} \label{graphtheo}
Besides the phylogenetic terminology, we need to introduce some basic concepts from classical graph theory before we can proceed with analyzing the tree-basedness of unrooted phylogenetic networks. 
In particular, we need the notion of \emph{cubic graphs} and \emph{Hamiltonian paths/cycles.}

A \emph{cubic graph} is a graph $G=(V,E)$ such that all vertices have degree 3.
We will now show that both cubic graphs and unrooted binary phylogenetic networks necessarily contain an even number of vertices, a property that will be useful later on (e.g. in the proof of Lemma \ref{12vertices}). 

\begin{proposition} \label{prop_cubic_even}
Let $G=(V,E)$ be a cubic graph. Then, $G$ contains an even number of vertices. Furthermore, if $N$ is an unrooted binary phylogenetic network with $n > 1$ leaves, $N$ also contains an even number of vertices.
\end{proposition}

\begin{proof}
The first part of the statement follows from the so-called handshaking lemma \citep[Theorem 1.1]{Harris2000}\label{handshaking}, which states that $\sum\limits_{v \in V} deg(v) = 2 \vert E \vert$. Thus, as $G=(V,E)$ is a cubic graph, we have that three times the number of vertices of $G$ equals two times the number of edges in $G$, i.e. $3 \vert V \vert  = 2 \vert E \vert$. Therefore, we have $|V|=\frac{2}{3}|E|$, which (as $|V|$ is an integer) implies that $E=3m$ for some $m \in \mathbb{N}$. Thus, we have $|V|=\frac{2}{3}|E|=\frac{2}{3}\cdot 3m=2m$, which shows that $G$ has an even number of vertices.

Similarly, by applying the handshaking lemma for an unrooted binary phylogenetic network $N=(V,E)$ on $X$ with $|X|=n>1$ , we have that 
$$ 2|E| = \sum\limits_{v \in V} deg(v) = \sum\limits_{v \in X} 1 + \sum\limits_{v \in \mathring{V}} 3 = n + 3 |\mathring{V}|.$$
In particular, $2(|E|-|\mathring{V}|)=2|E|-2|\mathring{V}|=n+|\mathring{V}|=|V|$, and thus the number of vertices in $N$ is even. 
\end{proof}

Another well-known graph theoretical concept we wish to introduce here is a \emph{Hamiltonian path}. A Hamiltonian path is simply a path in a graph that visits each vertex exactly once. If this path is a cycle, the Hamiltonian path is called a \emph{Hamiltonian cycle}. A graph that contains a Hamiltonian cycle is called a \emph{Hamiltonian graph}. As has been noted in \citet{Francis2018} and as we will elaborate in the present manuscript, Hamiltonian paths play an important role concerning the tree-basedness of phylogenetic networks.

\section{Results}
The main aim of this section is to show that all unrooted non-binary phylogenetic networks up to level 3 are tree-based, whereas unrooted non-binary level-4 networks are not necessarily tree-based. However, we first state some preliminary results from the literature for binary networks and show that they also hold for non-binary networks.

\subsection{Basic results} \label{sec_basic}
In this section, we first state some basic results that will be needed throughout this manuscript. Some of them are extensions of the results presented in \citet{Francis2018} from the binary to the non-binary case.

We start by considering the following lemma by \citet{Francis2018}, which states a close relationship for {\em binary } networks between the properties of being tree-based and being proper. In particular, a non-proper phylogenetic network cannot be tree-based. 

\begin{lemma}[\citet{Francis2018}]\label{binarytree-basedproper}
If $N$ is an unrooted binary tree-based network, then every cut edge of $N$ induces a split of $X$, i.e. $N$ is proper.
\end{lemma} 

Note that in the non-binary case, the absence of cut edges need not imply that the network is proper, because it might still contain cut vertices. So we now generalize this lemma to non-binary networks.

\begin{lemma}\label{nonbinarytree-basedproper}
If $N$ is an unrooted tree-based network (binary or not), then $N$ is proper.  
\end{lemma}

\begin{proof} 
Let $N$ be an unrooted tree-based network. Then, it contains a support tree $T$ whose leaf set coincides with $X$. As $T$ is a spanning tree of $N$, its vertex set contains any possible cut vertex $v$, and if $e$ is a cut edge of $N$, it must also occur in the spanning tree (otherwise $T$ could not be connected). Let $T'$ be the base tree corresponding to $T$. Now, as any phylogenetic tree is clearly a proper phylogenetic network, in particular $T'$ is proper. Note that this implies that $T$, too, must have the property that the removal of any cut edge or cut vertex will subdivide $X$, because $T$ is like $T'$, just possibly with additional degree-2 vertices. Thus, removing any cut vertex or cut edge from $T$ results in connected components containing at least one leaf of $T$ (and thus of $N$) each. But as $T$ contains all cut vertices and cut edges of $N$, the same must hold for $N$. In particular, $N$ is proper.
\end{proof}

We next state some general results concerning the decomposition of unrooted networks into simpler parts and the number of vertices in a blob. 

\citet{Francis2018} prove a decomposition theorem for tree-based unrooted binary networks based on the simple networks $B_{N}$ associated with blobs introduced above. This can directly be generalized to non-binary networks and we have the following statement.

\begin{proposition} \label{blob_decomposition}
Suppose $N$ is an unrooted network. Then $N$ is tree-based if and only if $B_{N}$ is tree-based for every blob $B$ in $N$.
\end{proposition}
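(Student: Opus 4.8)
The plan is to prove the biconditional by constructing support trees in both directions using the ``blobbed tree'' structure of $N^u$, that is, the decomposition of the network into blobs joined by cut edges. Throughout, recall that a support tree is precisely a spanning tree whose leaf set coincides with the leaf set of the graph it spans, so the task reduces to matching up leaf sets carefully across the blob decomposition.

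For the forward direction, suppose $N^u$ is treebased with support tree $T$. First I would observe that every cut edge of $N^u$ must lie in $T$ (otherwise $T$ would be disconnected), and hence the restriction of $T$ to any blob $B$ together with the cut edges incident to $B$ yields a spanning subtree of $B_{N^u}$. The key point to verify is that this restricted tree is a \emph{support} tree for $B_{N^u}$, i.e. that its leaves are exactly the leaves of $B_{N^u}$ (the endpoints of the incident cut edges lying outside $B$). Since those endpoints have degree $1$ in $B_{N^u}$ they are necessarily leaves of any spanning tree; the content is to rule out that $T$ restricted to $B_{N^u}$ creates any \emph{spurious} leaf, i.e. a vertex of $B$ that has degree $\geq 2$ in $N^u$ but degree $1$ in the restriction. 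I would argue this cannot happen because such a vertex, being internal in the blob, is not a leaf of $N^u$, so $T$ must keep it of degree $\geq 2$, and all of its $T$-edges stay within $B_{N^u}$ (they cannot be cut edges leaving $B$ unless they themselves are the incident cut edges already counted).

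For the reverse direction, suppose $B_{N^u}$ is treebased for every blob $B$, with support tree $T_B$. The plan is to glue these local support trees along the cut-edge tree structure to form a global support tree $T$ for $N^u$. Concretely, I would take the union of all the $T_B$, identifying shared cut edges, and argue that the result is a spanning tree of $N^u$: it is connected because the blobbed tree is connected and each $T_B$ contains all cut edges incident to $B$, and it is acyclic because any cycle would have to lie within a single blob (cut edges are bridges and cannot participate in cycles), contradicting that each $T_B$ is a tree. Finally I would check the leaf condition: a vertex is a leaf of $T$ iff it is a leaf of some $T_B$ that is not ``promoted'' to internal degree by an incident cut edge in a neighbouring piece; since each $T_B$ has exactly the leaves of $B_{N^u}$, and the non-original leaves of $B_{N^u}$ are precisely attachment points of cut edges (which become internal in the glued tree), the leaves of $T$ reduce exactly to the genuine taxa in $X$.

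The main obstacle I anticipate is the bookkeeping at the interface vertices, i.e. the endpoints of cut edges that are shared between a blob and its neighbour. These vertices appear as leaves of $B_{N^u}$ from the blob's perspective but are internal to $N^u$, so the delicate step is confirming that gluing raises their degree to at least $2$ in $T$ (so they do not remain leaves) while simultaneously confirming in the forward direction that the true taxa in $X$ are never accidentally suppressed or merged. Handling the non-binary subtlety that a blob may contain a cut \emph{vertex} (noted in the text and illustrated in Figure~\ref{blob}) requires care, but since Proposition~\ref{blob_decomposition} is phrased purely in terms of cut \emph{edges} and the simple networks $B_{N^u}$, a cut vertex internal to a blob simply lives inside a single $B_{N^u}$ and is handled by the local support tree $T_B$; thus it poses no additional gluing difficulty, and the argument above goes through essentially as in the binary case verified by \citet{Francis2018}.
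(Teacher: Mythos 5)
Your proposal is correct and takes essentially the same route as the paper's own proof: in the forward direction it restricts the support tree of $N^u$ to each blob together with its incident cut edges (which must all lie in any support tree), and in the reverse direction it glues the local support trees of the $B_{N^u}$ along the shared cut edges into a global support tree. Your version merely spells out the leaf-set bookkeeping (no spurious leaves inside a blob; interface vertices becoming internal after gluing) that the paper's terser argument leaves implicit.
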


Our proof of this proposition is very similar to the proof of Proposition 1 in \citet{Francis2018} and Proposition 2.9 in \citet{Hendriksen2018}. However, as we explicitly allow blobs to contain cut vertices, we shortly outline the proof in the Appendix.

Roughly speaking, Proposition \ref{blob_decomposition} states that it is sufficient to analyze all blobs of an unrooted network individually in order to decide whether a network $N$ is tree-based or not. 

As we will show subsequently, it also often makes sense to consider (simple) networks with only two leaves. Therefore, we first recall a useful observation of \citet{Francis2018}, which is, again, also valid for non-binary networks.

\begin{lemma} \label{N-x}
Let $N$ be a network on $X$ with $\vert X \vert \geq 2$. For any $x \in X$ let $N - x$ denote the network obtained from $N$ by deleting $x$ and its incident edge, and suppressing the potentially resulting degree-2 vertex. Then, if $N - x$ is tree-based, so is $N$.
\end{lemma}

The proof of this lemma is very similar to the proof of Lemma 3 in \citet{Francis2018}. However, as we are not only considering binary networks, but also non-binary ones, and as we -- unlike \citet{Francis2018} -- explicitly consider the case of parallel edges, we give the proof again in the Appendix.
In any case it should be noted that $N-x$ might not be a phylogenetic network as it might contain parallel edges. This is why in Definition \ref{def_tree-based} tree-basedness is also defined for (multi)graphs.

\begin{remark}\label{rem_construction}
Note that the converse does not necessarily hold, i.e. if $N$ is tree-based, $N-x$ does not necessarily have to be tree-based, whether $N$ is binary or not. To see this, consider the network depicted in Figure \ref{Fig_nottree-based}. Note that this example is extreme in the following sense: It shows that even if a tree-based network $N$ is binary and contains only one blob and even if $N$ does not contain any cut vertices and no cut edges other than the ones incident to leaves, it might not be possible to remove any leaf and suppress the resulting degree-2 vertex without losing tree-basedness. In particular, there exists no pair of two leaves in this network such that it is still tree-based.

Note that this extreme example is based on the graph shown in Figure \ref{constructionidea}(a), which we found in \citet{Zamfirescu1976}. There, it is proven that any longest path in this graph of 12 vertices has length 10. 

Consider Figure \ref{constructionidea}(b). Note that this graph of 9 vertices, which contains three degree-2 vertices, is Hamiltonian, i.e. it contains a Hamiltonian cycle.\footnote{In fact, as proven in \citet{Zamfirescu1976}, deleting all leaves of this graph corresponds to deleting one vertex from the famous Petersen graph, which is hypohamiltonian.} However, it does \emph{not} contain a Hamiltonian path from either one of these three degree-2 vertices to another one of those: If it did, this path would have length 9 as it would have to cover all 9 vertices, and it could easily be extended by 2 in the graph of Figure \ref{constructionidea}(a) by re-attaching the leaves. This would give a total path of length 11 in the Zamfirescu graph of Figure \ref{constructionidea}(a), which does not exist as the maximal path length there is 10.

Now our network $N$ in Figure \ref{Fig_nottree-based} basically consists of two copies $C_1$ and $C_2$ of the graph of Figure \ref{constructionidea}(b), which are connected by three paths. One of the connecting paths does not have an adjacent leaf. Suppose there was a support tree using only edges of two of these connecting paths. Then this support tree would imply a path through, say, $C_1$, which would imply a Hamiltonian path in the graph of Figure \ref{constructionidea}(b) from one degree-2 vertex to another one -- a contradiction, as we have seen above that such a path cannot exist. Therefore, it is clear that in $N$, edges of all three paths connecting $C_1$ and $C_2$ must be used by any support tree. In fact, as this argument holds for both $C_1$ and $C_2$, in each of the three paths, any support tree must contain an edge incident to $C_1$ \emph{and} an edge incident to $C_2$. As the dashed edge in Figure \ref{Fig_nottree-based} is therefore contained in any support tree of $N$, the induced spanning trees of $C_1$ and $C_2$ are already connected. Thus, using an entire path of the remaining two connecting paths would necessarily induce a cycle, which is not allowed. So there must be at least one edge on both these remaining connecting paths from $C_1$ to $C_2$ which is not covered by any support tree (this is, in fact, the edge between the attachment points of the two leaves of each of these paths). Thus, removing any leaf and suppressing its respective attachment point would necessarily destroy tree-basedness.
\end{remark}

\begin{figure}[htbp]
	\centering
	\includegraphics[scale=0.45]{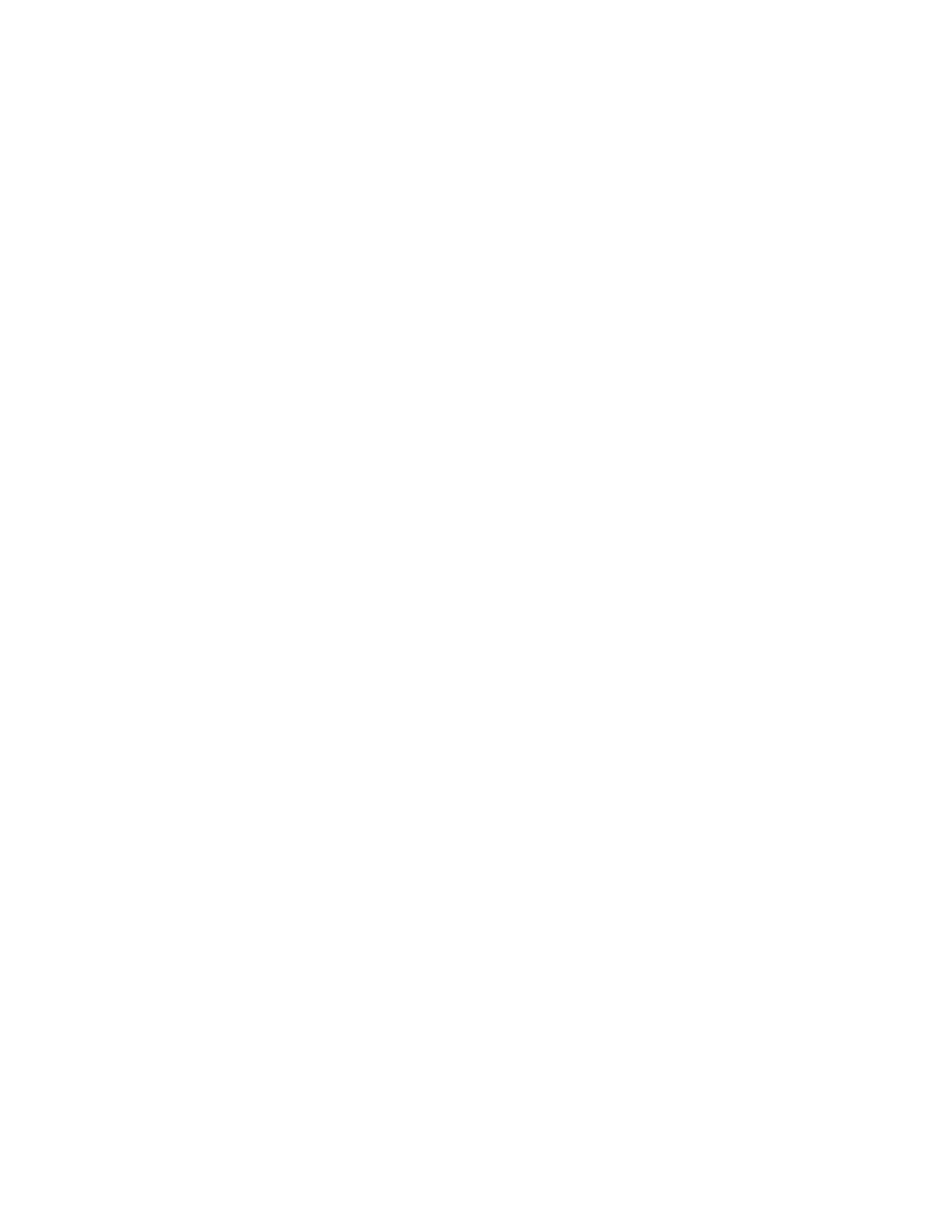}
	\caption{Unrooted binary tree-based phylogenetic network $N$ on $X=\{x_1,x_2,x_3,x_4\}$. A corresponding support tree is highlighted in bold. $N-x_i$ is not tree-based for $i=1,\ldots,4$, because there is no spanning tree in $N-x_i$ whose leaf set is equal to $X\setminus \{x_i\}$. The dashed edge plays a specific role as explained in Remark \ref{rem_construction}. }
	\label{Fig_nottree-based}
\end{figure}

\begin{figure}[htbp]
	\centering
	\includegraphics[scale=0.35]{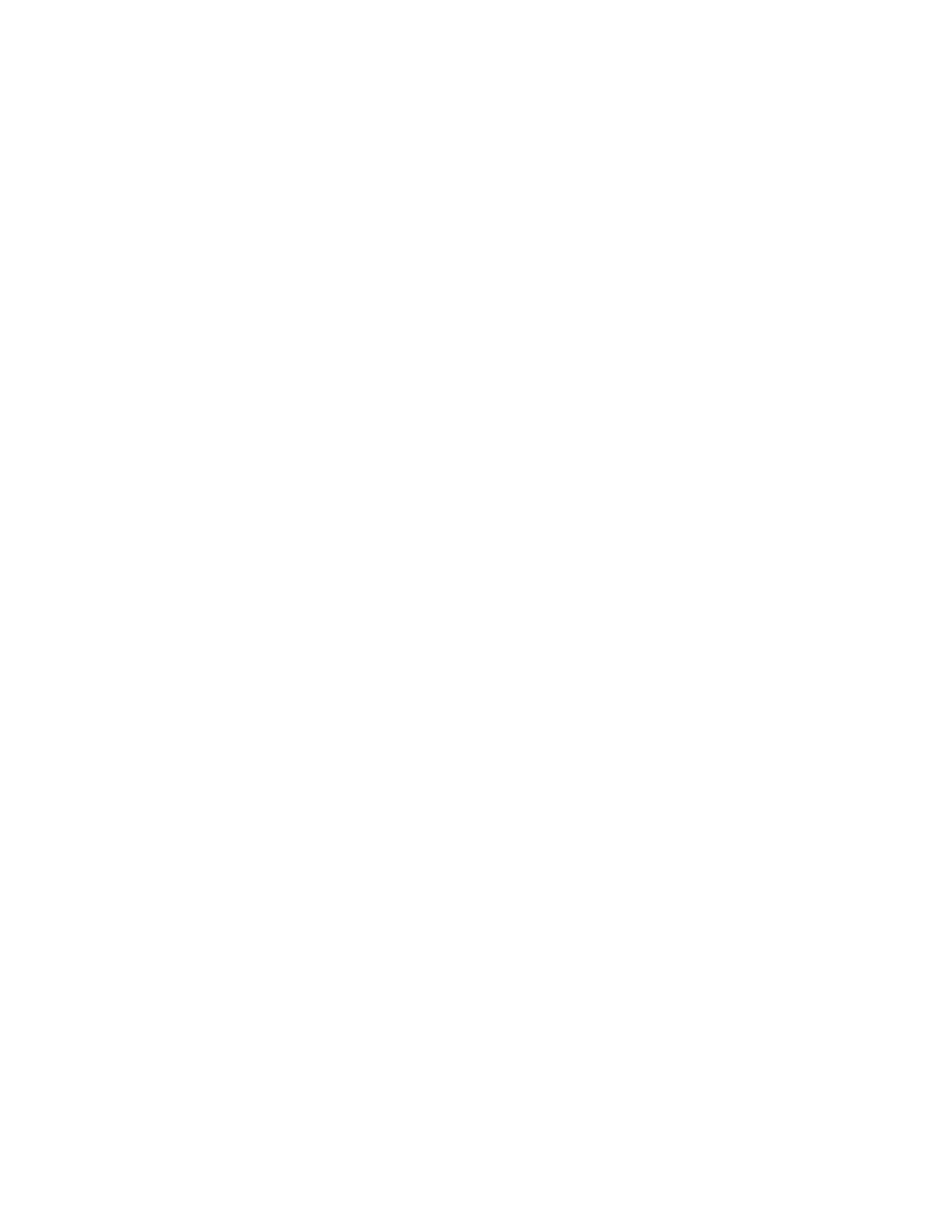}
		\caption{(a) Graph from \citet{Zamfirescu1976} on which our construction of the example in Figure \ref{Fig_nottree-based} is based. There, it is proven that any longest path in this graph of 12 vertices has length 10. However, if we consider this graph as a tree-based phylogenetic network (a support tree is highlighted in bold), removing a leaf vertex and suppressing the resulting degree 2 vertices turns this again into a tree-based phylogenetic network.
		(b) 'Leafless version' of the Zamfirescu graph needed for the construction of Figure \ref{Fig_nottree-based}.}
	\label{constructionidea}
\end{figure}

We now present another observation that will be useful in the following, namely that the number of vertices in a non-trivial blob incident with at most two cut edges is bounded by $2k$ in a level-$k$ network, where $k\geq 2$. 

\begin{lemma} \label{blob_nodes}
Let $N$ be an unrooted level-$k$ network (not necessarily binary) with $k\geq 2$, let $B$ be a non-trivial blob, and assume that there are at most two cut edges in $N$ incident to $B$. 
Let $n = \vert V(B) \vert$ denote the number of vertices in $B$. Then, 
$$ 3 < n \leq 2k.$$ 
\end{lemma}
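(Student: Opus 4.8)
The plan is to treat the two bounds separately, using edge-counting inside $B$ as the common engine, after first recording two structural facts. (i) Since $B$ is a non-trivial blob it is bridgeless, hence $2$-edge-connected, so every vertex of $B$ has at least two incident edges lying inside $B$; in particular no vertex of $B$ is a leaf. (ii) Every edge of $N^u$ with exactly one endpoint in $B$ is a cut edge: if such an edge $\{v,w\}$ with $v\in B$, $w\notin B$ lay on a cycle $C$, then $B\cup C$ would be a connected bridgeless subgraph strictly larger than $B$ (the union of two bridgeless subgraphs sharing the vertex $v$ is again bridgeless, since every edge of the union lies on a cycle within one of the two pieces), contradicting the maximality of $B$. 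Combining (i) with the fact that $N^u$ has no degree-$2$ vertices, I conclude that every vertex of $B$ has degree at least $3$ in $N^u$.

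For the lower bound I would show $n\geq 3$ and then rule out $n=3$. A non-trivial blob in a simple graph has at least three vertices, since two vertices can be joined only by a single edge, which would be a cut edge. If $n=3$, the only bridgeless simple graph is the triangle, in which each vertex has degree exactly $2$ inside $B$; by the no-degree-$2$ condition each of the three vertices must then carry a cut edge by (ii), forcing at least three cut edges incident to $B$ and contradicting the hypothesis of at most two. Hence $n\geq 4$, i.e. $n>3$.

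For the upper bound I would compare two estimates on $m:=\vert E(B)\vert$. On one hand, the level-$k$ condition says $B$ becomes a tree after deleting at most $k$ edges, and since $B$ is connected on $n$ vertices this cyclomatic count gives $m\leq n-1+k$. On the other hand, summing degrees inside $B$ and using that the total degree lost to cut edges is at most $2$ (at most two cut edges, each with a single endpoint in $B$ by (ii)), I obtain $2m=\sum_{v\in B}\deg_B(v)\geq 3n-2$. Substituting the first inequality into the second gives $3n-2\leq 2(n-1+k)$, which simplifies to $n\leq 2k$.

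The part needing the most care is fact (ii) together with the bookkeeping of how cut edges reduce in-blob degrees, since the paper explicitly allows blobs to contain cut vertices; I must ensure the degree-sum argument counts each cut edge's contribution exactly once and never implicitly assumes that $B$ is $2$-vertex-connected. Everything else is routine arithmetic, and the hypothesis $k\geq 2$ is precisely what makes the two conclusions $n>3$ and $n\leq 2k$ mutually compatible.
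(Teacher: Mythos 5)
Your proof is correct, and the upper bound is obtained exactly as in the paper: the degree-sum estimate $2m \geq 3n-2$ (each vertex of $B$ has degree at least $3$ in $N^u$, and at most two units of degree are lost to cut edges) combined with the cyclomatic bound $m \leq n-1+k$ from the level-$k$ condition. Where you diverge is the lower bound $n>3$. The paper stays entirely within the counting framework: it pairs the same estimate $m \geq \tfrac{3}{2}n-1$ with the simple-graph bound $m \leq \binom{n}{2}$, yielding the quadratic inequality $n^2-4n+2\geq 0$ and hence $n \geq 2+\sqrt{2}$, so $n>3$. You instead argue structurally: $n\geq 3$ because a two-vertex bridgeless subgraph is impossible in a simple graph, and $n=3$ is excluded because the triangle is the only candidate and each of its three vertices, having in-blob degree exactly $2$, would need an incident cut edge by the no-degree-$2$ condition, contradicting the hypothesis of at most two cut edges. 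Both routes are valid; the paper's has the virtue of reusing one inequality for both halves of the claim, while yours explains \emph{why} $n=3$ fails (the triangle forces three cut edges) rather than deriving it from arithmetic. One small point in your favor: your phrasing that the \emph{total} degree lost to cut edges is at most $2$ cleanly covers the case where both cut edges attach to the same vertex of $B$ (in-blob degree possibly $1$), whereas the paper's formulation \enquote{at most two degree $2$ vertices} implicitly assumes distinct endpoints; the numerical bound $3n-2$ is the same either way, so this is a matter of bookkeeping precision, not correctness.
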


\begin{proof}
We first show that $n > 3$. Let $m = \vert E(B) \vert$ denote the number of edges in $B$.
As we assumed that there are at most two cut edges in $N$ incident to $B$, $B$ can have at most two degree-2 vertices. All other vertices are of degree at least $3$ (they can be higher, as we do not assume that the network is binary).  Thus, using the handshaking lemma, we have
	\begin{equation*}
	m = \frac{1}{2} \sum\limits_{v \in V(B)} deg(v) 
	\geq \frac{1}{2} \big(2 \cdot 2 + (n-2) \cdot 3 \big) 
	= \frac{1}{2} \big(3n - 2) 
	= \frac{3}{2} n -1.
	\end{equation*}
On the other hand, as we are considering simple graphs, $B$ can have at most ${n \choose 2}=\frac{n(n-1)}{2}$ edges. Thus,
$$ \frac{3}{2} n -1 \leq m \leq \frac{n(n-1)}{2},$$  
which leads to $$n^2-4n+2 \geq 0.$$ This inequality is fulfilled if $n\geq2+\sqrt{2}$ or if $n\leq 2-\sqrt{2}$. As $n$ is a positive integer, we conclude $n > 3$. \\

Next, we need to show that $n \leq  2k$ for $k \geq 2$. Recall that an unrooted tree on $n$ vertices has $n-1$ edges. 
As $N$ is a level-$k$ network, at most $k$ edges have to be removed from $B$ to obtain a tree.
Thus, $m - (n-1) \leq k$. In other words, $m \leq k+n-1$.
Using the lower bound for $m$ from above, we derive
\begin{align*}
\frac{3}{2} n -1 \leq m \leq k+n-1, 
\end{align*}
and thus in particular $n \leq 2k$. This completes the proof. 
\end{proof}

\subsection{Unrooted non-binary tree-based networks} \label{sec_non-binary}
We are now in the position to state the main result of this paper, namely the fact that all proper unrooted non-binary phylogenetic networks up to level 3\footnote{Note that by definition, all networks of level less than 3 are contained in the class of level-3 networks.} are tree-based, whereas unrooted non-binary level-4 networks are not necessarily tree-based.

\begin{theorem} \label{level_non-binary}
All proper unrooted non-binary unrooted level-3 are tree-based. Moreover, unrooted non-binary networks of level greater than 3 need not be tree-based.
\end{theorem}

In order to prove Theorem \ref{level_non-binary}, we need the following lemma, which gives a lower bound on the number of vertices needed to destroy tree-basedness.

\begin{lemma} \label{minima_non-binary}
Any minimal (in the number of vertices) proper unrooted non-binary non-tree-based network has 6 interior vertices and 2 leaves, i.e. 8 vertices in total.
\end{lemma}

\begin{proof}
It suffices to show that all proper unrooted non-binary networks on less than 8 vertices are tree-based. In this regard, we performed an exhaustive search using Mathematica \citep{Mathematica}. In the following we explain the details of this exhaustive search.

First of all, we obtained a list containing all simple graphs with up to 7 vertices from the \enquote{House of Graphs} database (cf. \citet{Brinkmann2013}).

We then filtered these graphs for networks and thus excluded unconnected graphs, graphs without leaves (i.e. vertices of degree 1)
as well as graphs with vertices of degree 2. 

We then further excluded all non-proper networks by checking whether the removal of any cut edge or any cut vertex present in the network resulted in connected components containing at least one leaf each. 

This left us with 28 proper unrooted non-binary networks, which can all easily be shown to be tree-based. We depicted all of them in the Appendix (see catalog of all proper tree-based networks with up to 7 vertices in Figures \ref{Catalog1} and \ref{Catalog2}).

This shows that all unrooted non-binary networks with up to 7 vertices are tree-based. However, for 8 vertices there exist unrooted non-binary networks which are not tree-based. One such example is depicted in Figure \ref{Fig_4l-non-binary-nontreebas}. This network has 6 interior vertices and 2 leaves and it can be seen not to be tree-based as follows. If it was tree-based, then there would be a path from $x$ to $y$ visiting each vertex exactly once. Trivially, this path must start with edge $\{x,a\}$ and end with edge $\{f,y\}$. Now, the first three vertices of the path must either be $(x,a,b), (x,a,d)$ or $(x,a,e)$. In all cases, the 4th vertex must be $c$, because the only other vertex (except from $a$) that can be reached from $b,d$ or $e$ is $f$; however, $f$ has to be visited second last. Thus, so far we either have a path starting $(x,a,b,c), (x,a,d,c)$ or $(x,a,e,c)$. Consider the first case, i.e. $(x,a,b,c)$. From $c$ we can either go to $d$ or to $e$. If we go to $d$, i.e. if we have $(x,a,b,c,d)$, the only \enquote{free} vertex reachable from $d$ is $f$. This leads to a contradiction as vertex $e$ has not been visited.
If we instead go to $e$, i.e. if we have $(x,a,b,c,e)$, the only \enquote{free} vertex reachable from $e$ again $f$, which causes a contradiction as $d$ has not been visited. Similar contradictions follow for all other cases.
This completes the proof.
\end{proof}

\begin{figure}[hbtp]
	\centering
	\includegraphics[scale=0.2]{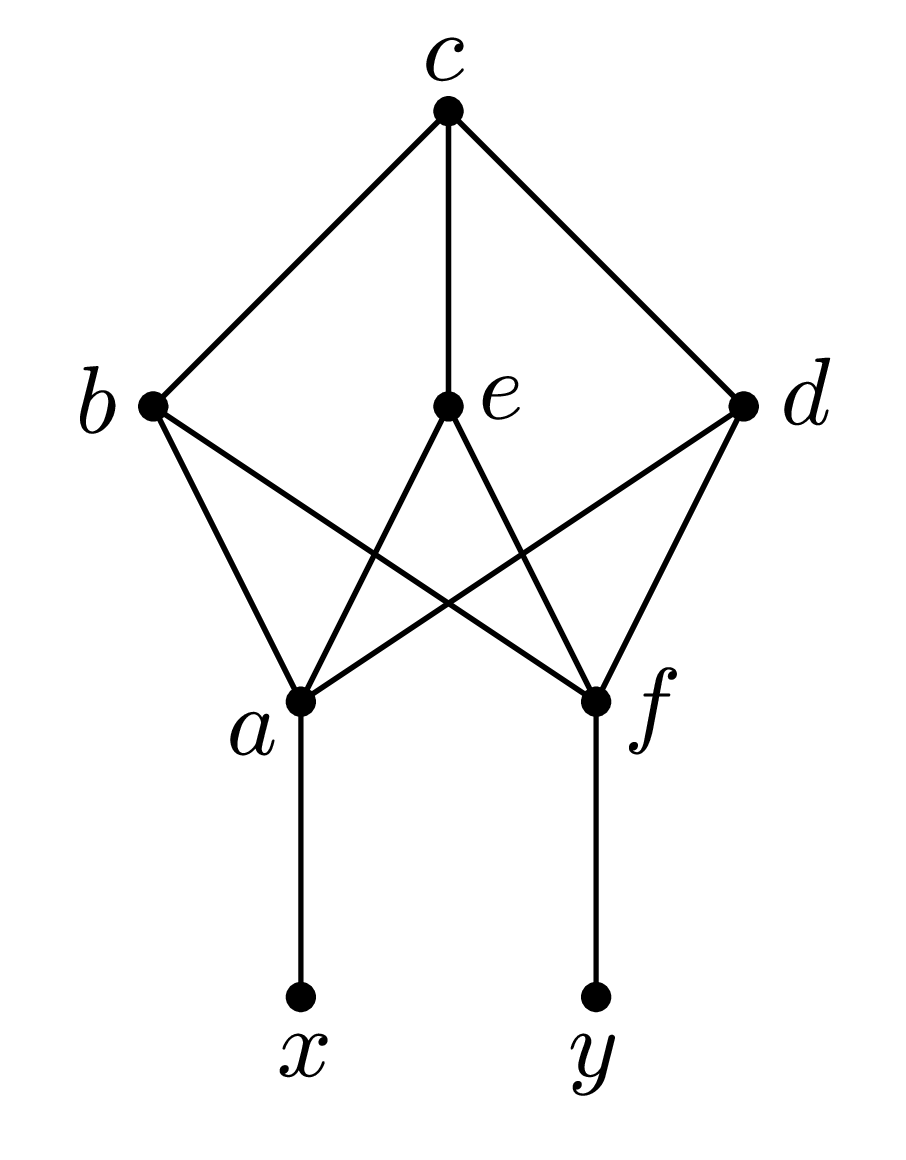}
	\caption{Unrooted non-binary level-4 network that is not tree-based (this network is adapted from \citet{Jetten2018}, where it is used in a different context).}
	\label{Fig_4l-non-binary-nontreebas}
\end{figure}

We are now in the position to prove Theorem \ref{level_non-binary}.

\begin{proof}[Proof (Theorem \ref{level_non-binary})]
We have to show that all proper unrooted non-binary phylogenetic networks of levels $k=0, \ldots, 3$ are tree-based. Note that if $X$ contains only one leaf, either the underlying network consists of a single vertex (and is thus trivially tree-based) or it is not proper (Remark \ref{all_leaves_one_vertex}). This is why we now consider only networks with $|X|\geq 2$. 

Recall that we can decompose $N$ into a collection of simple networks $B_N$ associated with the non-trivial blobs in $N$, each having at least two leaves, and if each of these simple networks is tree-based, then so is $N$ by Proposition \ref{blob_decomposition}.  (Note that trivial blobs are generally tree-based, which is why we only consider non-trivial blobs and their associated networks $B_N$).
Moreover, if we remove all but 2 leaves from each of these simple networks $B_N$ and obtain a tree-based network, then $B_N$ must have been tree-based due to Lemma \ref{N-x}.

Now, for $k=0$, the statement holds, as a level-0 network is a tree and thus tree-based. For $k=1$, we know that at most one edge has to be removed from each non-trivial blob of $N$ to obtain a tree. Removing at most one edge from each such non-trivial blob of $N$, however, cannot induce any new leaves (if an interior vertex of some blob became a leaf after removing one edge from this blob, this would imply that this vertex is a degree-2 vertex in $N$, which is not allowed. Thus, the tree that we obtain from removing at most one edge from each non-trivial blob in a level-1 network can directly be considered a support tree for $N$. This implies that level-1 networks are always tree-based.

Let us now consider $k \geq 2$. By Lemma \ref{minima_non-binary}, we know that any simple network $B_{N}$ that leads to a proper non-binary non-tree-based network has to have at least 6 interior vertices and 2 leaves, i.e. at least 8 vertices in total. Moreover, by Lemma \ref{blob_nodes} we know that for $k \geq 2$, the number of vertices in a non-trivial blob associated with a network $B_{N}$ is bounded from above by $2k$, i.e. in case of level 2, the maximum number of vertices in $B_N$ is 4. This implies that there cannot exist a level-2 network that is not tree-based. 

In order to prove that there also does not exist a non-tree-based level-3 network, we exhaustively generated all unrooted proper phylogenetic networks on 8 vertices, analyzed them for tree-basedness and computed their level.
This exhaustive search was conducted in the following way: We used the \enquote{House of Graphs} database (cf. \citet{Brinkmann2013}) to obtain a list of all graphs with 8 vertices (12346 in total). These were then filtered for unrooted proper phylogenetic networks in the same way as described in the proof of Lemma \ref{minima_non-binary}, resulting in a total of 197 unrooted proper phylogenetic networks. These were then analyzed for tree-basedness and it turned out that there are only 8 proper phylogenetic networks on 8 vertices that are \emph{not} tree-based (see Figure \ref{Fig_Catalog8}). However, none of them is a level-3 network, i.e. we can conclude that all proper non-binary level-3 networks are indeed tree-based.

To prove the last statement of the theorem, consider the unrooted non-binary level-4 network depicted in Figure \ref{Fig_4l-non-binary-nontreebas}. This network is not tree-based as we have already seen in the proof of Lemma \ref{minima_non-binary}. This completes the proof.
\end{proof}

\begin{remark}
Note that the last part of the proof above answers a question posed in \citet{Hendriksen2018}, asking whether there exist networks of level less than 5 that are not loosely tree-based. As the definition of loosely tree-based in \citet{Hendriksen2018} precisely corresponds to our definition of tree-based, the level-4 network depicted in Figure \ref{Fig_4l-non-binary-nontreebas} provides such an example. 
\end{remark}

\subsection{Unrooted binary tree-based networks} \label{sec_binary}
We now re-visit  tree-basedness of unrooted \emph{binary} phylogenetic networks, in order to re-establish a known result and in order to extend it by an additional proposition.

Recall that \citet{Francis2018} recently established the following result, which is the binary analog of our Theorem \ref{level_non-binary}.

\begin{theorem}[adapted from \citet{Francis2018}] \label{level4_theorem}
All proper unrooted binary level-4 network are tree-based. Moreover, networks of level greater than 4 need not be tree-based.
\end{theorem}

Note that the example given in \citet{Francis2018} for the second part of this theorem was unfortunately erroneous. Indeed, the level-5 network on $X=\{x,y\}$ depicted in Figure \ref{Fig_wrongExample} was used in \citet{Francis2018} to show that level-5 networks need not be tree-based. However, this network in fact \emph{is} tree-based as there exists a spanning tree in $N$ whose leaf set is equal to $X$. While the authors have recently independently pointed out this error and published an erratum \citep{Francis2018erratum}, in the process of writing this manuscript we have further analyzed unrooted binary non-tree-based networks. Using a similar approach as in our proof of Theorem \ref{level_non-binary}, we provide an alternative proof of Theorem \ref{level4_theorem}, the details of which are given in the Appendix. More importantly, though, this  analysis enabled us to show that binary non-tree-based level-5 networks are `rare' in the sense stated by the following proposition.

\begin{figure}[htbp]
	\centering
	\includegraphics[scale=0.3]{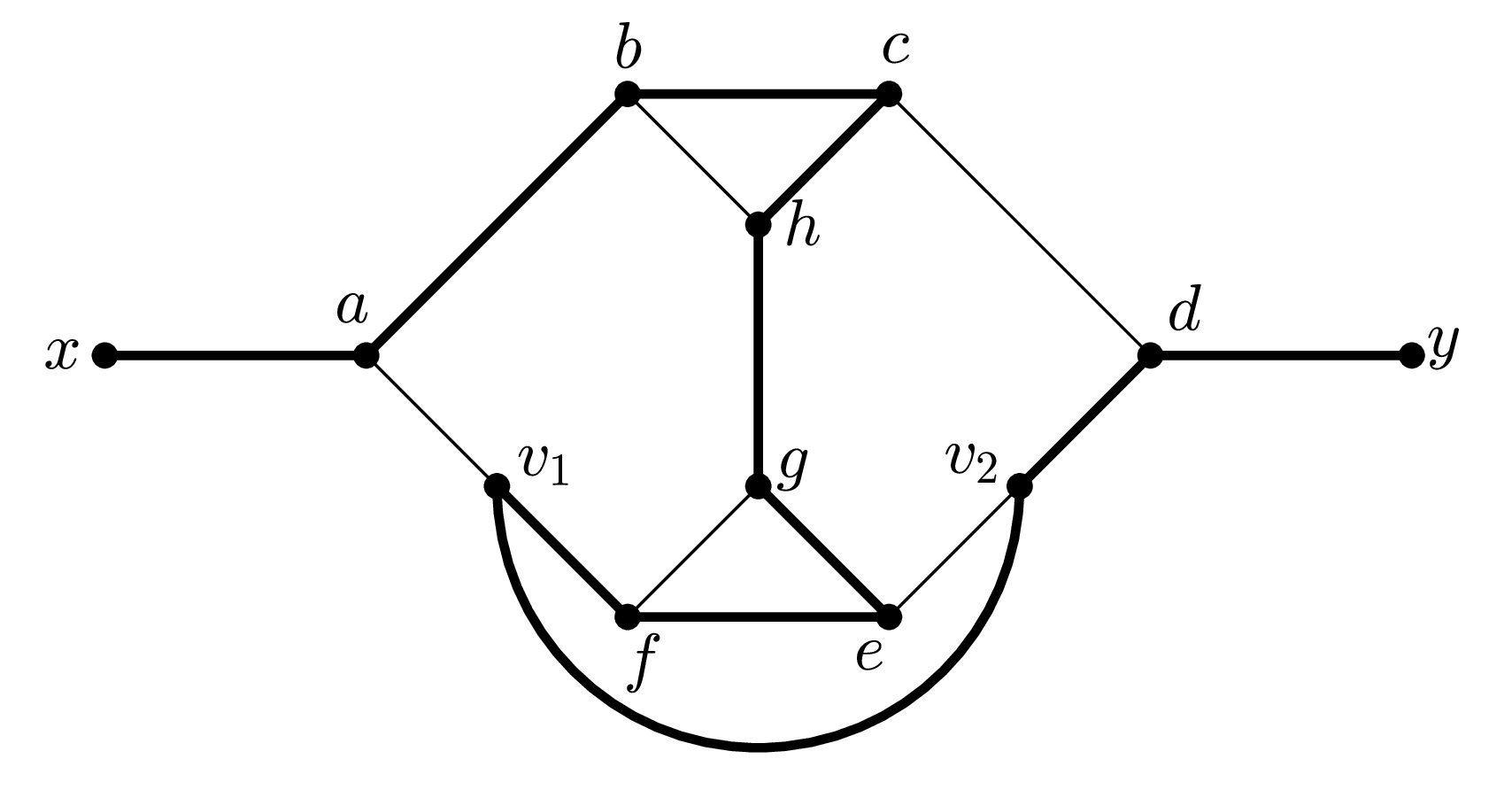}
	\caption{Level-5 network on $\{x,y\}$ claimed to not be tree-based in \citet{Francis2018}. However, this network is tree-based, because there is a spanning tree in $N$ whose leaf set is equal to $\{x,y\}$ (depicted in bold). This spanning tree is a path between $x$ and $y$ consisting of the following edges: $\{x,a\}, \, \{a,b\}, \, \{b,c\}, \, \{c,h\}, \, \{h,g\}, \, \{g,e\}, \, \{e,f\}, \, \{f,v_1\}, \, \{v_1,v_2\}, \, \{v_2,d\}$ and $\{d,y\}$.}
	\label{Fig_wrongExample}
\end{figure}

\begin{proposition}\label{prop_2networks_only}
There are precisely two minimal (in the number of vertices) proper unrooted binary phylogenetic level-5 networks that are not tree-based. Both of them contain exactly 12 vertices. In particular, all other proper unrooted binary phylogenetic level-5 networks that are not tree-based contain more than 12 vertices.
\end{proposition}

The proof of this proposition is given in the Appendix. The two minimal networks, however, are depicted in Figure \ref{Fig_correctExample}. 

\begin{remark} Note that both networks in \ref{Fig_correctExample} are symmetric in the sense that swapping their leaves $x$ and $y$ leads to an isomorphic network. So in fact, there are really only two minimal  \emph{phylogenetic} networks fulfilling the required properties from Proposition \ref{prop_2networks_only}, even though our proof technique described in the Appendix is based on unlabeled graphs.  \end{remark}

\section{Discussion} \label{sec_discussion}  
The main aim of this manuscript was to generalize some of the results presented in \citet{Francis2018} for binary unrooted tree-based networks to non-binary ones. While some results are straightforward to generalize, others need to be adjusted. Differences between non-binary and binary networks include, for instance, that unlike in the binary case, level-4 networks need not necessarily be tree-based in the non-binary case. This provides the answer to Question 5.3 in \citet{Hendriksen2018}, asking whether there are non-binary networks of level less than 5 that are not tree-based.

While the main focus of this manuscript was on non-binary networks we also re-visited binary ones. In particular, we gave two examples showing that binary level-5 networks are not always tree-based, as the example given in \citet{Francis2018} was unfortunately erroneous. Even though one correct example has in the meantime independently been published by the authors in an erratum (cf. \citet{Francis2018erratum}), our study additionally shows that binary level-5 networks that are not tree-based are `rare' in the sense that there are only two minimal ones. Furthermore, our study provides an alternative proof for the fact that all proper unrooted binary networks up to level 4 are tree-based (Theorem 1 in \citet{Francis2018}).

Tree-based phylogenetic networks are an interesting topic for future research. Note that they are not only of high biological relevance -- they also link mathematical phylogenetics to classic graph theory, like e.g. Hamiltonian paths and cycles, cubic graphs, and classic results like the ones in \citet{Zamfirescu1976}. But not only does the study of phylogenetic networks benefit from connections to classic graph theory -- in fact, the opposite is also true: In graph theory and various applications, finding spanning trees with few leaves has been a topic of high interest (cf. \citet{tsugaki2007,Rivera_Campo_2012,SALAMON2008164}). Note that every connected graph has a spanning tree, and in all spanning trees, any vertices of the graph of degree 1 must be contained as leaves. So for phylogenetic networks, it is obvious that all leaves of the network are necessarily also leaves of any spanning tree. Tree-basedness, however, implies that there are no more leaves than this minimum requirement suggests. In this regard, support trees of tree-based phylogenetic networks are just spanning trees with the smallest possible number of leaves. Many techniques, like e.g. Hamiltonicity, which have already been used to classify spanning trees with few leaves (cf. \citet{SALAMON2008164}) are therefore also useful in the study of tree-basedness. We are sure that this connection to graph theory will prove fruitful for future studies, too.

\begin{figure}[htbp]
	\centering
	\includegraphics[scale=0.25]{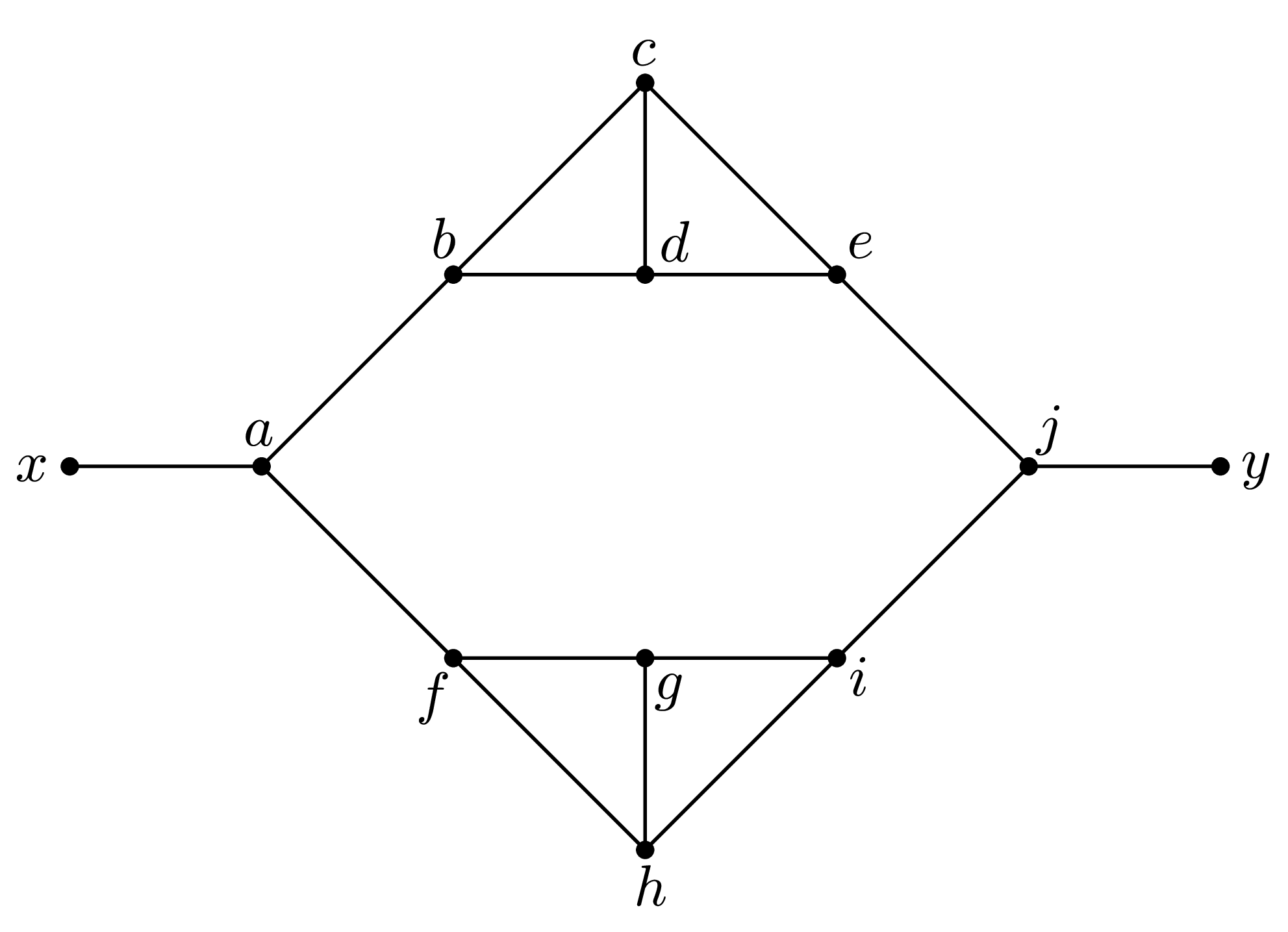} 
	\includegraphics[scale=0.25]{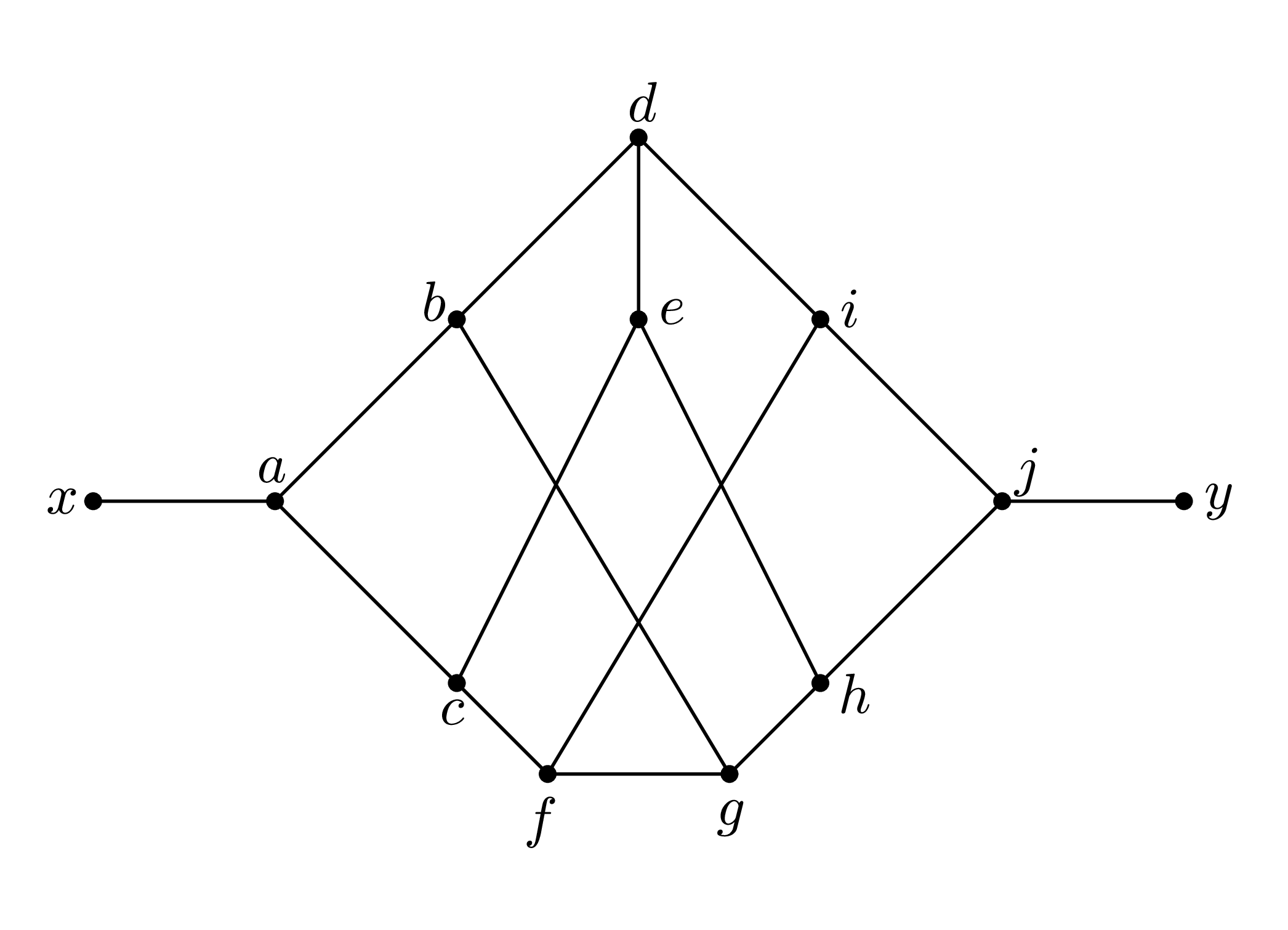}
	\caption{The only two proper level-5 networks on $X=\{x,y\}$ with 12 vertices that are not tree-based.}
	\label{Fig_correctExample}
\end{figure}

\section*{Acknowledgement} 
We thank two anonymous reviewers of an earlier version of this manuscript for their valuable suggestions and feedback. Moreover, the first author thanks the joint research project {\bf \emph{DIG-IT!}} supported by the European Social Fund (ESF), reference: ESF/14-BM-A55-0017/19, and the Ministry of Education, Science and Culture of Mecklenburg-Vorpommern, Germany. Moreover, the fourth author was supported by the National Science Foundation for Young Scientists in China (Grant No. 11901227).
Additionally, the third and fifth author thank the state Mecklenburg-Western Pomerania for the Landesgraduierten-Studentship. Moreover, the second author thanks the University of Greifswald for the Bogislaw-Studentship and the fifth author thanks the German Academic Scholarship Foundation for a studentship.

\bibliographystyle{abbrvnat}
\bibliography{References2.bib}

\begin{thebibliography}{17}
\providecommand{\natexlab}[1]{#1}
\providecommand{\url}[1]{{#1}}
\providecommand{\urlprefix}{URL }
\expandafter\ifx\csname urlstyle\endcsname\relax
  \providecommand{\doi}[1]{DOI~\discretionary{}{}{}#1}\else
  \providecommand{\doi}{DOI~\discretionary{}{}{}\begingroup
  \urlstyle{rm}\Url}\fi
\providecommand{\eprint}[2][]{\url{#2}}

\bibitem[{Brinkmann et~al(2013)Brinkmann, Coolsaet, Goedgebeur, and
  M{\'{e}}lot}]{Brinkmann2013}
Brinkmann G, Coolsaet K, Goedgebeur J, M{\'{e}}lot H (2013) House of {G}raphs:
  A database of interesting graphs. Discrete Applied Mathematics
  161(1-2):311--314, \doi{10.1016/j.dam.2012.07.018}

\bibitem[{Bussemaker et~al(1976)Bussemaker, Cobeljic, Cvetkovic, and
  Seidel}]{Bussemaker1976}
Bussemaker F, Cobeljic S, Cvetkovic D, Seidel J (1976) Computer investigation
  of cubic graphs. EUT report. WSK, Dept. of Mathematics and Computing Science,
  Technische Hogeschool Eindhoven

\bibitem[{Choya et~al(2005)Choya, Jansson, Sadakane, and Sung}]{Choy2005}
Choya C, Jansson J, Sadakane K, Sung WK (2005) Computing the maximum agreement
  of phylogenetic networks. Theoretical Computer Science 335(01):93--107,
  \doi{10.1016/j.tcs.2004.12.012}

\bibitem[{Fischer et~al(2018)Fischer, Galla, Herbst, Long, and Wicke}]{paper1}
Fischer M, Galla M, Herbst L, Long Y, Wicke K (2018) Classes of treebased
  phylogenetic networks. \urlprefix\url{https://arxiv.org/abs/1810.06844},
  \eprint{arXiv:1810.06844}

\bibitem[{Francis and Steel(2015)}]{Francis2015}
Francis A, Steel M (2015) Which {P}hylogenetic {N}etworks are {M}erely {T}rees
  with {A}dditional {A}rcs? Systematic Biology 64(5):768--777,
  \doi{10.1093/sysbio/syv037}

\bibitem[{Francis et~al(2018{\natexlab{a}})Francis, Huber, and
  Moulton}]{Francis2018erratum}
Francis A, Huber KT, Moulton V (2018{\natexlab{a}}) Correction to:
  {T}ree-{B}ased {U}nrooted {P}hylogenetic {N}etworks. Bulletin of Mathematical
  Biology 80(2):404--416

\bibitem[{Francis et~al(2018{\natexlab{b}})Francis, Huber, and
  Moulton}]{Francis2018}
Francis A, Huber KT, Moulton V (2018{\natexlab{b}}) Tree-{B}ased {U}nrooted
  {P}hylogenetic {N}etworks. Bulletin of Mathematical Biology 80(2):404--416,
  \doi{10.1007/s11538-017-0381-3}

\bibitem[{Gambette et~al(2012)Gambette, Berry, and Paul}]{Gambette2012}
Gambette P, Berry V, Paul C (2012) Quartets and unrooted phylogenetic networks.
  Journal of Bioinformatics and Computational Biology 10(04):1250,004,
  \doi{10.1142/S0219720012500047}, pMID: 22809417

\bibitem[{Gusfield and Bansal(2005)}]{Gusfield2005}
Gusfield D, Bansal V (2005) A {F}undamental {D}ecomposition {T}heory for
  {P}hylogenetic {N}etworks and {I}ncompatible {C}haracters. In: Miyano S,
  Mesirov J, Kasif S, Istrail S, Pevzner PA, Waterman M (eds) Research in
  Computational Molecular Biology, Springer Berlin Heidelberg, Berlin,
  Heidelberg, pp 217--232

\bibitem[{Harris et~al(2000)Harris, Hirst, and Mossinghoff}]{Harris2000}
Harris JM, Hirst JL, Mossinghoff MJ (2000) Combinatorics and Graph Theory.
  Springer

\bibitem[{Hendriksen(2018)}]{Hendriksen2018}
Hendriksen M (2018) Tree-based unrooted nonbinary phylogenetic networks.
  Mathematical Biosciences 302:131--138, \doi{10.1016/j.mbs.2018.06.005}

\bibitem[{Inc.(2017)}]{Mathematica}
Inc WR (2017) Mathematica, {V}ersion 10.3. Champaign, IL, 2017

\bibitem[{Jetten and van Iersel(2018)}]{Jetten2018}
Jetten L, van Iersel L (2018) Nonbinary {T}ree-{B}ased {P}hylogenetic
  {N}etworks. IEEE/ACM Trans Comput Biol Bioinformatics 15(1):205--217,
  \doi{10.1109/TCBB.2016.2615918}

\bibitem[{Rivera-Campo(2012)}]{Rivera_Campo_2012}
Rivera-Campo E (2012) Spanning trees with small degrees and few leaves. Applied
  Mathematics Letters 25(10):1444–1446, \doi{10.1016/j.aml.2011.12.021},
  \urlprefix\url{http://dx.doi.org/10.1016/j.aml.2011.12.021}

\bibitem[{Salamon and Wiener(2008)}]{SALAMON2008164}
Salamon G, Wiener G (2008) On finding spanning trees with few leaves.
  Information Processing Letters 105(5):164 -- 169,
  \doi{https://doi.org/10.1016/j.ipl.2007.08.030},
  \urlprefix\url{http://www.sciencedirect.com/science/article/pii/S0020019007002463}

\bibitem[{Tsugaki and Yamashita(2007)}]{tsugaki2007}
Tsugaki M, Yamashita T (2007) Spanning {T}rees with {F}ew {L}eaves. Graphs and
  Combinatorics 23(5):585--598

\bibitem[{Zamfirescu(1976)}]{Zamfirescu1976}
Zamfirescu T (1976) On longest paths and circuits in graphs. {MATHEMATICA}
  {SCANDINAVICA} 38:211, \doi{10.7146/math.scand.a-11630}

\end{thebibliography}

\section{Appendix} \label{sec_Appendix}
\setcounter{figure}{0}
\renewcommand{\thefigure}{\Roman{figure}}

\subsection{Supplementary lemmas}
\begin{lemma}\label{binarycutvertex}
Let $N$ be an unrooted binary phylogenetic network. Then every cut vertex is incident to a cut edge.
\end{lemma}
\begin{proof}
Let $N$ be an unrooted binary phylogenetic network with cut vertex $v$.
Then, $v$ has degree three, because leaves cannot be cut vertices. We call the three edges incident to $v$ $e_1, e_2$ and $e_3$ as depicted in Figure \ref{threecomponents}.
\begin{figure}[H]
\centering
\includegraphics[scale=0.4]{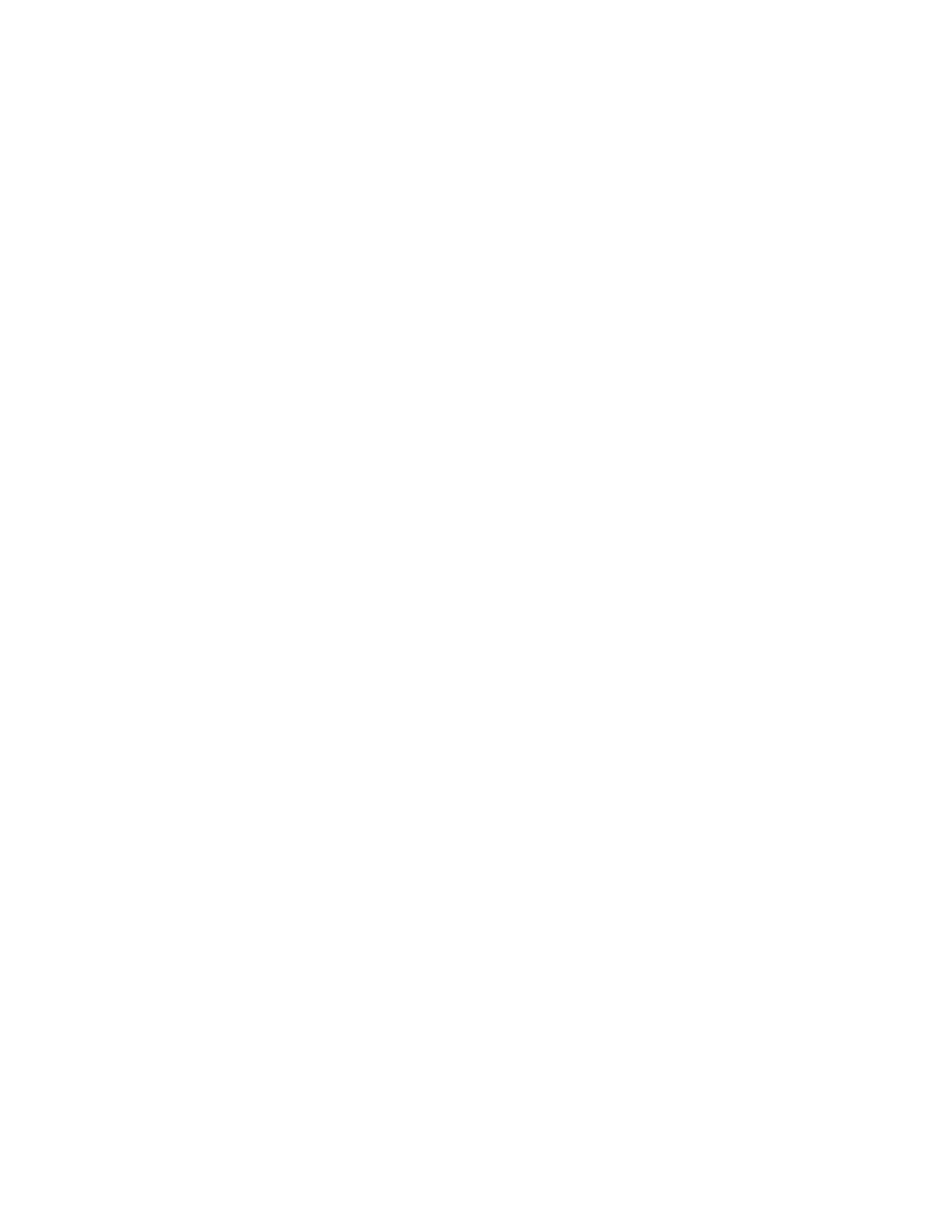}
\caption{Unrooted binary phylogenetic network $N$ with cut vertex $v$ and edges $e_1,e_2$ and $e_3$ leading to components $C_1,C_2$ and $C_3$.}
\label{threecomponents}
\end{figure}
It is possible that $e_1, e_2$ and $e_3$ lead to three different components $C_1, C_2$ and $C_3$ (in this case the dashed line in Figure \ref{threecomponents} is excluded). Then, $e_1, e_2$ and $e_3$ are all cut edges and thus $v$ is incident to three different cut edges. 
\\
Otherwise, two edges lead to the same component. Without loss of generality $e_1$ and $e_3$ lead to the same component (dashed line in Figure \ref{threecomponents}). Then $e_2$ is a cut edge. Therefore, $v$ is incident to a cut edge.
\\
Note that not all edges can lead to the same component, because if that was the case $v$ would not be a cut vertex. This completes the proof.
\end{proof}

\setcounter{proposition}{1}
\begin{proposition}
Suppose $N$ is an unrooted network. Then $N$ is tree-based if and only if $B_{N}$ is tree-based for every blob $B$ in $N$.
\end{proposition}

\begin{proof} Suppose $N$ is an unrooted tree-based network on $X$. As $N$ is tree-based there exists a support tree $T$ for $N$, i.e. a spanning tree with leaf set $X$. As $T$ is a spanning tree, $T$ in particular contains all cut vertices of $N$. Moreover, it has to contain all cut edges of $N$, because otherwise $T$ would not be connected. Thus, any support tree for $N$ induces a spanning tree of $B_{N}$ and we can conclude that $B_{N}$ is tree-based. Conversely, suppose that $B_{N}$ is tree-based for every blob $B$ in $N$. Then by taking a support tree for $B_{N}$ for each blob, we can construct a support tree $T$ for $N$ by connecting the individual support trees corresponding to the blobs of $N$ via the cut-edges that connected the blobs of $N$. Thus, $N$ is tree-based.
\end{proof}

\setcounter{lemma}{2}
\begin{lemma}
Let $N$ be a network on $X$ with $\vert X \vert \geq 2$. For any $x \in X$ let $N - x$ denote the network obtained from $N$ by deleting $x$ and its incident edge, and suppressing the potentially resulting degree-2 vertex. Then, if $N - x$ is tree-based, so is $N$.
\end{lemma}

\begin{proof} 
Let $N$ be a network on $X$ with $\vert X \vert \geq 2$. Let $N-x$ be obtained from $N$ by deleting leaf $x$ and its incident edge. If this results in a degree-2 vertex $v$ (which for example is the case if $N$ is a binary network), $v$ is suppressed. Note that this might imply that $N-x$ contains parallel edges (cf. Figure \ref{fig_suppressing}) and thus, is not a phylogenetic network anymore. However, the leaf deletion cannot result in an  unconnected (multi)graph. In particular, $N-x$ is a connected (multi)graph. We will now show that if $N-x$ is tree-based, so is $N$. At this point, it is important to notice that in this manuscript tree-basedness is not only defined for phylogenetic networks, but more generally for (multi)graphs (cf. Definition \ref{def_tree-based}).
Let $T$ be a support tree for $N-x$. We now distinguish between three cases:
	\begin{enumerate}
	\item $deg(v) = 1$ in $N$: \\
	If $deg(v) = 1$ in $N$, $N$ must consist of a single edge, namely $\{v,x\}$. A single edge is trivially tree-based, so there is nothing to show. 
	\item $deg(v)$ in $N$ is strictly greater than 3: \\
	If $deg(v)$ in $N$ is strictly greater than 3, it is strictly greater than 2 in $N-x$. This implies that $v$ is not suppressed in $N-x$. Then, we can obtain a support tree for $N$ from $T$ by adding the edge $\{x,v\}$ to $T$.
	\item $deg(v)=3$ in $N$, i.e. $deg(v)=2$ in $N-x$: \\
	Let $v_1,v_2 \neq x$ denote the other two vertices adjacent to $v$ in $N$. Let $\{v_1, v_2\}$ denote the edge that results from suppressing $v$ in $N-x$. Note that this might be a parallel edge, if there already is an edge $\{v_1,v_2\}$ in $N-x$. Now, there are two cases:
		\begin{itemize}
		\item If $\{v_1,v_2\}$ is an edge in $T$ (if there are multiple edges between $v_1$ and $v_2$, $T$ will only contain one of them), then we can obtain a support tree for $N$ by subdividing this edge (i.e. re-introducing the attachment point $v$) and adding the edge $\{x,v\}$ to $T$.
		\item If $\{v_1,v_2\}$ is not an edge in $T$, we note the following. As $T$ is a support tree for $N-x$, it must contain both $v_1$ and $v_2$. Thus, we can obtain a support tree for $N$ by re-introducing vertex $v$ and the edges $\{v_1,v\}$ and $\{v,x\}$ (or $\{v_2,v\}$ and $\{v,x\}$) to $T$.
		\end{itemize}
	\end{enumerate}
This completes the proof.
\end{proof}

\subsection{Analysis of unrooted binary networks and alternative proof of Theorem 1 from \citet{Francis2018}} \label{sec_alternativeproof}

In the following we provide an alternative proof of Theorem 1 from \citet{Francis2018}. This alternative approach then also allows us to conclude that there are precisely 2 minimal proper unrooted binary level-5 networks that are not tree-based.

\setcounter{theorem}{1} 
\begin{theorem}[adapted from \citet{Francis2018}]
All proper unrooted binary level-4 networks are tree-based. Moreover, networks of level greater than 4 need not be tree-based.
\end{theorem}

However, we first need to introduce further definitions and notations. 
We begin by describing the so-called \emph{leaf connecting procedure}, which was recently introduced in \citet{paper1}.

Let $N$ be a phylogenetic network that is not a tree, 
with taxon set $X$ with $|X| \geq 2$, i.e. $N$ contains at least two leaves. The aim of the leaf connecting procedure is to turn $N$ into a graph without leaves, i.e. without degree-1 vertices. This is achieved in the following way (cf. \citet{paper1}):
	\begin{itemize}
	\item Pre-processing: 
	As long as there exists an interior vertex $u$ of $N$ such that there is more than one leaf attached to $u$, delete all of them but one. Additionally, suppress potentially resulting degree-2 vertices. In the following, we denote the resulting reduced taxon set of $N$ by $X^r$.
	\end{itemize}
	Note that this pre-processing step may have to be repeated several times, but does not influence whether a network is tree-based or not (cf. \citet{paper1}). Moreover, it is worth mentioning that if $N$ was a tree, the pre-processing step would necessarily result in a single edge, and the following leaf connecting procedure could thus not take place. This is why we consider only non-tree networks here.
	
	\begin{itemize}
	\item Leaf connecting: 
	\begin{itemize}
	\item Select two leaves $x_1$ and $x_2$ (if they exist) and denote their respective attachment points by $u_1$ and $u_2$, respectively. Now, delete $x_1$ and $x_2$ as well as their incident edges (i.e. the edges $\{x_1,u_1\}$ and $\{x_2,u_2\}$) and connect their attachment points by introducing a new edge $e:=\{u_1,u_2\}$. If this edge is a parallel edge, i.e. if there is another edge $\widetilde{e}$ connecting $u_1$ and $u_2$, add two more vertices $a$ and $b$ and replace $e$ by two new edges, namely $e_1:=\{u_1,a\}$ and $e_2:=\{a,u_2\}$. Similarly, replace $\widetilde{e}$ by two new edges, namely $\widetilde{e}_1:=\{u_1,b\}$ and $\widetilde{e}_2:=\{b,u_2\}$. Last, add a new edge $\{a,b\}$. \\
	Repeat this procedure until no pair of leaves is left.
	\item If there is one more leaf $x$ left in the end, remove $x$ and, if its attachment point $u$ then has degree 2, suppress $u$. If this results in two parallel edges, say $e=\{y,z\}$ and $\widetilde{e} = \{y,z\}$, re-introduce $u$ on edge $e$ and add a new vertex $a$ to the graph, delete $\widetilde{e}$ and introduce two new edges $\widetilde{e}_1:=\{y,a\}$ and $\widetilde{e}_2:=\{a,z\}$. Last, add an edge $\{u,a\}$.
	\end{itemize}		
\end{itemize}
Note that the order in which the leaves are connected may have an impact on the resulting graph. In general, if $|X| > 2$, there might be more than one graph that can be constructed from $N$ by the leaf connecting procedure. We denote the set of all these graphs by $\mathcal{LCON}(N)$.
An illustration of this concept is given in Figure \ref{Fig_LeafConnecting_new}.

\begin{figure}
	\centering
	\includegraphics[scale=0.15]{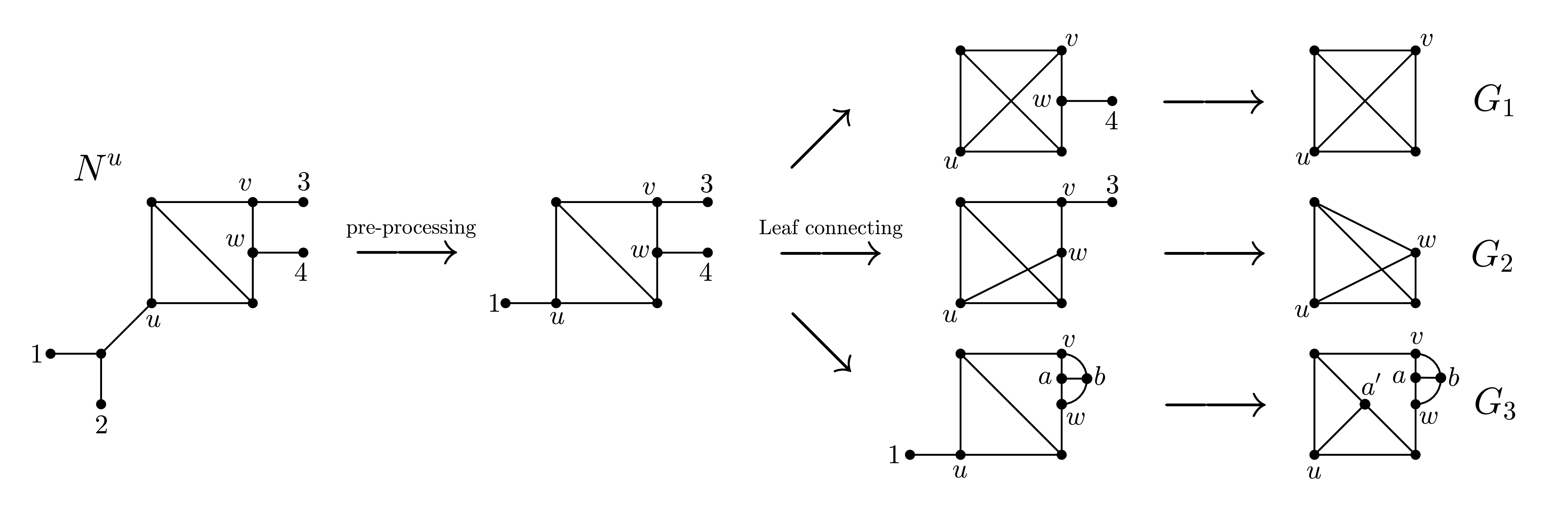}
	\caption{Network $N$ on taxon set $X=\{1,2,3,4\}$ and the graphs resulting from the leaf connecting procedure. In the pre-processing step leaf $2$ is removed and the resulting degree-2 vertex is suppressed. Then, first a pair of leaves is chosen and removed from the network, their attachment points are connected and, if necessary, new vertices and edges are introduced. Lastly, the remaining single leaf is removed, and again, if necessary new vertices and edges are introduced. This results in three graphs: $G_1$, $G_2$ and $G_3$. Note, however, that $G_1$ and $G_2$ are isomorphic. Thus, $\mathcal{LCON}(N)$ consists only of $G_1$ and  $G_3$. Moreover, note that even though new vertices were introduced to obtain $G_3$, the total number of vertices in the graph did not increase, i.e. $G_3$ contains only as many vertices as $N$ after the pre-processing step.}
	\label{Fig_LeafConnecting_new}
\end{figure}

Before we can proceed with the proof of the first part of Theorem \ref{level4_theorem}, we briefly outline our proof strategy: As in the proof of Theorem \ref{level_non-binary}, we use the fact that it is sufficient to consider the non-trivial blobs of the unrooted binary network $N$ and show that all such networks with only two leaves have a close relationship with cubic graphs via the $\mathcal{LCON}(N)$ construction, which in this case, where $N$ has only two leaves, contains a unique cubic graph which we will call $G(N)$. Moreover, we show if such a network is tree-based, $G(N)$ needs to have a Hamiltonian cycle and thus $N-X$ must contain a Hamiltonian path between the two attachment points of its leaves. 

Note that if such a network with two leaves is tree-based, we can simply attach more leaves by Lemma \ref{N-x} without losing the tree-basedness. So any network that is not tree-based can in particular {\em not} contain a subnetwork with two leaves which are connected by a Hamiltonian path. We can thus investigate Hamiltonian paths in cubic graphs a bit more in-depth and use a simple counting argument based on Lemma \ref{blob_nodes} to show that the number of vertices necessary to avoid a Hamiltonian path induces a level of $k\geq 5$. 

We begin with establishing the required relationship between unrooted binary phylogenetic networks with two leaves and cubic graphs:

\begin{observation} \label{cubic_tree-based} Let $N$ be a proper unrooted binary phylogenetic network on leaf set $X$ with $|X|=2$ and with $\mathring{E}\neq \emptyset$. Without loss of generality, let $X=\{x,y\}$ and denote the vertices adjacent to $x$ and $y$ by $u$ and $v$, respectively. Then,  $\mathcal{LCON}(N)$ contains precisely one graph $G(N)$, and this graph is cubic. Moreover, by construction the number of vertices of $G(N)$ is bounded by the number of vertices of $N$, i.e. we have $|V(G(N)) |\leq |V(N) |$, as in each step, a leaf and its attachment point get deleted, and at most two new vertices get introduced (if otherwise we would have a parallel edge). 
\end{observation}

Note that the construction of $G(N)$ does not require the suppression or deletion of any vertices other than $x$ and $y$ (as $N$ is proper, $x$ and $y$ cannot be attached to the same interior vertex, so there cannot be need for the pre-processing step; furthermore, as $N$ contains precisely two leaves ($x$ and $y$), no single remaining leaf needs to be removed in the end) and so, as we require $\mathring{E}\neq \emptyset$, $N$ cannot simply be a tree consisting of two vertices connected by a single edge. This implies that the resulting graph $G(N)$ is {\em always} cubic. Moreover, we state the following crucial proposition.

\setcounter{proposition}{3}
\begin{proposition} \label{cubic_tree-based2}
Let $N$ and $G(N)$ be as described in Observation \ref{cubic_tree-based}. Then, $N$ is tree-based if and only if $G(N)$ contains a Hamiltonian cycle using at least one edge of $G(N)$ that is not contained in $N$. 
\end{proposition}

\begin{proof}
Let $N$ be an unrooted binary phylogenetic network on leaf set $X=\{x,y\}$ and with $| \mathring{E} | \neq \emptyset$. Let $u$ and $v$ denote the vertices adjacent to $x$ and $y$, respectively. Consider the graph $G(N)$ obtained from the leaf connecting procedure. (Note that $G(N)$ might contain two new vertices, $a$ and $b$, if in the construction of $G(N)$ parallel edges occurred.)
Now, assume that $G(N)$ contains a Hamiltonian cycle using at least one edge of $G(N)$ that is not contained in $N$.
We now distinguish between two cases:
\begin{itemize}
\item $G(N)$ does not contain new vertices $a$ and $b$: \\
As $G(N)$ contains a Hamiltonian cycle using the edge $\{u,v\}$ that is not contained in $N$, this implies that there is a Hamiltonian path from $u$ to $v$ in $G(N)$. We extend this path to a support tree of $N$ by adding $x$ and $y$ as well as the edges $\{x,u\}$ and $\{y,v\}$. 
\item $G(N)$ contains new vertices $a$ and $b$: \\
As $G(N)$ contains a Hamiltonian cycle using either the edges $\{u,a\}, \, \{a,b\}, \, \{b,v\}$ or $\{u,b\}, \, \{a,b\}, \, \{a,v\}$, deleting the edge $\{a,b\}$ and suppressing $a$ and $b$ as well as one copy of the parallel edges $e=\tilde{e}=\{u,v\}$ results in a Hamiltonian path from $u$ to $v$ in this modified network. As before, we can now extend this path to a support tree of $N$ by adding $x$ and $y$ as well as the edges $\{x,u\}$ and $\{y,v\}$. This completes the first direction of the proof. 
\end{itemize}
On the other hand, if $N$ is tree-based, this implies that there is a spanning tree whose leaf set is precisely $X=\{x,y\}$. 
Again, we distinguish between two cases:
\begin{itemize}
\item If the construction of $G(N)$ does not require adding $a$ and $b$, it immediately follows that the support tree of $N$ leads to a Hamiltonian cycle in $G(N)$, as we can go from $u$ to $v$ both via the support tree, which covers all vertices of $G(N)$, or via the new edge $\{u,v\}$. Thus, we have a Hamiltonian cycle which uses a new edge.
\item If the procedure requires the introduction of $a$ and $b$, we can obtain a Hamiltonian cycle in $G(N)$ by extending the support tree of $N$ by the edges $\{u,a\}, \, \{a,b\}$ and $\{b,v\}$ and removing the edges $\{x,u\}$ and $\{v,y\}$ from the support tree. In particular, this cycle uses three new edges.
\end{itemize}
This completes the proof.
\end{proof}

We now state the following Lemma stating that any minimal proper unrooted binary non-tree-based network contains 12 vertices. This lemma is the binary analog of Lemma \ref{minima_non-binary} in the main part of the manuscript. It shows that in the binary case, more vertices are required (namely at least 12) in order to destroy tree-basedness than in the non-binary case (where we need at least 8).

\setcounter{lemma}{6}
\begin{lemma}\label{12vertices}
Any minimal proper unrooted binary non-tree-based network has 12 vertices (10 interior vertices and 2 leaves).
\end{lemma}

\begin{remark} Note that by the proofs of Theorem \ref{level4_theorem} and Proposition \ref{prop_2networks_only}, the bound of 12 vertices is tight, i.e. there are networks that achieve it. 
\end{remark}

In order to prove the lemma, we require the following statement:
\begin{lemma} \label{cubic_hamiltonian}
Let $G=(V,E)$ be a cubic graph with $\vert V \vert \leq 8$. Then, $G$ contains a Hamiltonian path from $u$ to $v$ for all edges $e=\{u,v\} \in E$. In other words, $G$ is Hamiltonian and for every edge $e \in E$, there is a Hamiltonian cycle of $G$ which contains $e$. 
 \end{lemma}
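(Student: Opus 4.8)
The plan is to reduce the statement to a finite case check and then use graph symmetry to keep that check short. First I record two reductions. The claim is only meaningful when $G$ is connected (a disconnected cubic graph has no path spanning all of $V$, let alone a Hamiltonian one), and in every application the relevant cubic graphs arise as blobs, which are connected; so I assume $G$ connected throughout. Next, by the handshaking lemma quoted above a simple cubic graph satisfies $3|V| = 2|E|$, so $|V|$ is even, and since each vertex needs three distinct neighbours we have $|V| \ge 4$; hence $|V| \in \{4,6,8\}$. Finally I note the elementary equivalence already built into the statement: for an edge $e=\{u,v\}$ a Hamiltonian path from $u$ to $v$ does not use $e$ (its endpoints are not adjacent along a path of length $\ge 2$), so adding $e$ yields a Hamiltonian cycle through $e$, and conversely. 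Thus it suffices to prove that every edge of $G$ lies on some Hamiltonian cycle.

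The idea that makes the finite check manageable is a symmetry argument. If $\phi \in \mathrm{Aut}(G)$ and $C$ is a Hamiltonian cycle of $G$, then $\phi(C)$ is again a Hamiltonian cycle and it contains $\phi(e)$ for every $e \in C$. Consequently, if a single edge $e_0$ lies on a Hamiltonian cycle, then so does every edge in the orbit of $e_0$ under $\mathrm{Aut}(G)$. In particular, for an \emph{edge-transitive} Hamiltonian graph it is enough to exhibit one Hamiltonian cycle, and in general it is enough to verify the claim for one representative edge from each orbit of $\mathrm{Aut}(G)$ acting on $E$.

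With these tools the argument becomes a short verification over the connected cubic graphs on $4$, $6$, and $8$ vertices. For $|V|=4$ the only such graph is $K_4$, which is edge-transitive and Hamiltonian, so symmetry settles it. For $|V|=6$ there are exactly two graphs: $K_{3,3}$, again edge-transitive and Hamiltonian, and the triangular prism, whose edges fall into two orbits (the six triangle edges and the three rungs); I exhibit one Hamiltonian cycle through a triangle edge and one through a rung. For $|V|=8$ there are exactly five connected cubic graphs; the cube $Q_3$ is edge-transitive and Hamiltonian, while each of the remaining four has only a few edge-orbits, so for each I identify those orbits and write down an explicit Hamiltonian cycle through a representative of each. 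Every one of these eight graphs is Hamiltonian (the smallest connected cubic graphs that fail this are strictly larger, e.g. the bridged examples on ten vertices and the Petersen graph), so no orbit is ever left uncovered.

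The main obstacle is bookkeeping rather than conceptual: ensuring the enumeration of the five eight-vertex cubic graphs is exhaustive, correctly identifying each graph's edge-orbits, and producing a genuine Hamiltonian cycle through each orbit representative. The symmetry reduction is precisely what keeps this from ballooning, replacing a check over all twelve edges of an eight-vertex cubic graph by a check over one or two orbit representatives. I expect each individual verification to be routine, so the real care lies in the completeness of the graph list (guaranteed by the handshaking parity constraint together with the standard enumeration of small cubic graphs) and in confirming that each exhibited cycle really is Hamiltonian and really contains the intended edge.
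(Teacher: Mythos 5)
Your proposal is correct and follows essentially the same route as the paper: both use the handshaking lemma to reduce to the finite list of connected cubic graphs on $4$, $6$, and $8$ vertices (one, two, and five graphs, respectively) and then verify that every edge lies on a Hamiltonian cycle, which is equivalent to the Hamiltonian-path formulation exactly as you note. Your two additions — making the connectivity hypothesis explicit (a genuine point, since two disjoint copies of $K_4$ form a cubic graph on $8$ vertices with no Hamiltonian cycle, so the lemma implicitly assumes connectedness) and organizing the per-graph check by $\mathrm{Aut}(G)$-orbits of edges — are sound refinements of the paper's verification, which instead cites the literature for Hamiltonicity and reports an exhaustive edge-by-edge check.
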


\begin{proof} As the number of vertices in a cubic graph is even by Proposition \ref{prop_cubic_even} and as the smallest cubic graph contains four vertices, we only need to consider the one cubic graph with four vertices, the two cubic graphs with six vertices and the five cubic graphs with eight vertices (all these graphs are depicted in the Appendix in Figures \ref{figcubic1}, \ref{figcubic2} and \ref{figcubic3}). The fact that they are all Hamiltonian can be found in the literature (cf. \citet{Bussemaker1976}), but can also easily be verified by considering all mentioned 8 graphs. We also verified the fact that each edge is contained in at least one Hamiltonian cycle exhaustively.
This completes the proof.
\end{proof}

We can now prove Lemma \ref{12vertices}.

\begin{proof}[Proof of Lemma~\ref{12vertices}] First recall that the number of vertices in an unrooted binary phylogenetic network is always even unless if $N$ simply consists of only one vertex (cf. Proposition \ref{prop_cubic_even}). 

Now assume that $N$ is a proper unrooted binary non-tree-based network. As $N$ is non-tree-based, in particular $N$ does not only consist of one vertex or of two vertices connected by an edge. So as the number of vertices has to be even, the total number of vertices has to be at least four.

Moreover, consider $|X|$. If $|X|=1$, then $N$ cannot be proper (see Remark \ref{all_leaves_one_vertex}), so we must have $|X|\geq 2$. 

In summary, we know so far that $N$ has at least four vertices, at least two of which are leaves. First, suppose $N$ has exactly four vertices, \emph{two} of which are leaves. This implies that there are two interior vertices, each of them having degree 2. This means that $N$ is a path. In particular, $N$ is not binary and thus not a binary network. Now, suppose $N$ has exactly four vertices, \emph{three} of which are leaves. Then $N$ is a tree on 3 leaves. In particular, $N$ is tree-based, which is a contradiction. Thus, as the number of vertices in an unrooted binary phylogenetic network is even, we can conclude that $|V| \geq 6$.

Now assume that $N$ has strictly fewer than 12 vertices in total. As $N$ is binary and as we have already seen that binary networks have an even number of vertices, this means that $N$ has at most 10 vertices in total. We now distinguish two main cases, which can both be subdivided into two subcases:

\begin{itemize}
\item First assume $|X|=2$. Without loss of generality, let $X=\{x,y\}$ and let $u$ and $v$ denote the vertices adjacent to $x$ and $y$, respectively. Now, consider the leaf connecting procedure. As $|X|=2$, $\mathcal{LCON}(N)$ consists of precisely one element, which we denote by $G(N)$. Note that $G(N)$ is a cubic graph as $N$ is binary. Moreover, recall that the construction of $G(N)$ might have required the introduction of new vertices, say $a$ and $b$, to avoid parallel edges. We now distinguish between two cases.

\begin{itemize}\item If in the construction of $G(N)$ no vertices $a$ and $b$ had to be added, then $G(N)$ contains precisely $|V|-2$ vertices. As $N$ by assumption contains fewer than 12 vertices, $G(N)$ contains fewer than 10 vertices. Thus, as $G(N)$ is cubic and therefore contains an even number of vertices, $G(N)$ contains at most 8 vertices. However, by Lemma \ref{cubic_hamiltonian}, we know that up to 8 vertices there is a Hamiltonian path from $u'$ to $v'$ for every edge $\{u',v'\}$ in a cubic graph. So in particular, $G(N)$ contains a Hamiltonian path from $u$ to $v$, i.e. from the attachment point of the first leaf to the attachment point of the second leaf. Adding edge $\{u,v\}$ to this path yields a Hamiltonian cycle using this new edge, which was not contained in $N$, and thus, by Proposition \ref{cubic_tree-based2}, $N$ is tree-based, which is a contradiction. 
\item If in the construction of $G(N)$ vertices $a$ and $b$ had to be added, then $G(N)$ contains precisely $|V|$ vertices ($x$ and $y$ have been deleted, but $a$ and $b$ have been added). As $N$ by assumption contains fewer than 12 vertices, $G(N)$ also contains fewer than 12 vertices. As the number of vertices in any cubic graph is even, $G(N)$ contains at most 10 vertices. As above, if $G(N)$ contains at most 8 vertices, there is a Hamiltonian path from $u$ to $a$, which can be extended to a Hamiltonian cycle by adding edge $\{a,u\}$, and thus, by Proposition \ref{cubic_tree-based2}, $N$ is tree-based, which is a contradiction. 
So let us consider the case where $G(N)$ contains precisely 10 vertices. Note that there are only 19 different cubic graphs with 10 vertices, and only two of them are not Hamiltonian (cf. \citet{Bussemaker1976}). Now if $G(N)$ is one of the Hamiltonian graphs, it contains a Hamiltonian cycle. We now argue that each such cycle must use edge $\{a,b\}$. Note that by construction of $G(N)$ through leaf connection $a$ is only adjacent to $u$, $v$ and $b$, and $b$ only to $u$, $v$ and $a$. Therefore, the Hamiltonian cycle will connect $u$ and $v$ in two ways, namely with one path visiting all vertices except for $a$ and $b$, and additionally with a path only visiting $a$ and $b$. For instance, the paths $u,a,b,v$ or $u,b,a,v$ would be possible. In all such cases, edge $\{a,b\}$ is necessarily contained in all Hamiltonian cycles. So deleting edge $\{a,b\}$ leads to a Hamiltonian path from $a$ to $b$. Subsequently, suppressing $a$ and deleting $b$ as well the edges $\{u,b\}$ and $\{v,b\}$, leads to a Hamiltonian path from $u$ to $v$. Using the same arguments as above, this implies that $N$ is tree-based, which would be a contradiction. 
So $G(N)$ has to be one of the two non-Hamiltonian cubic graphs with 10 vertices. These two graphs are depicted in Figure \ref{Fig_Cubic10}.

\begin{figure}[htbp]
	\centering
	\includegraphics[scale=0.275]{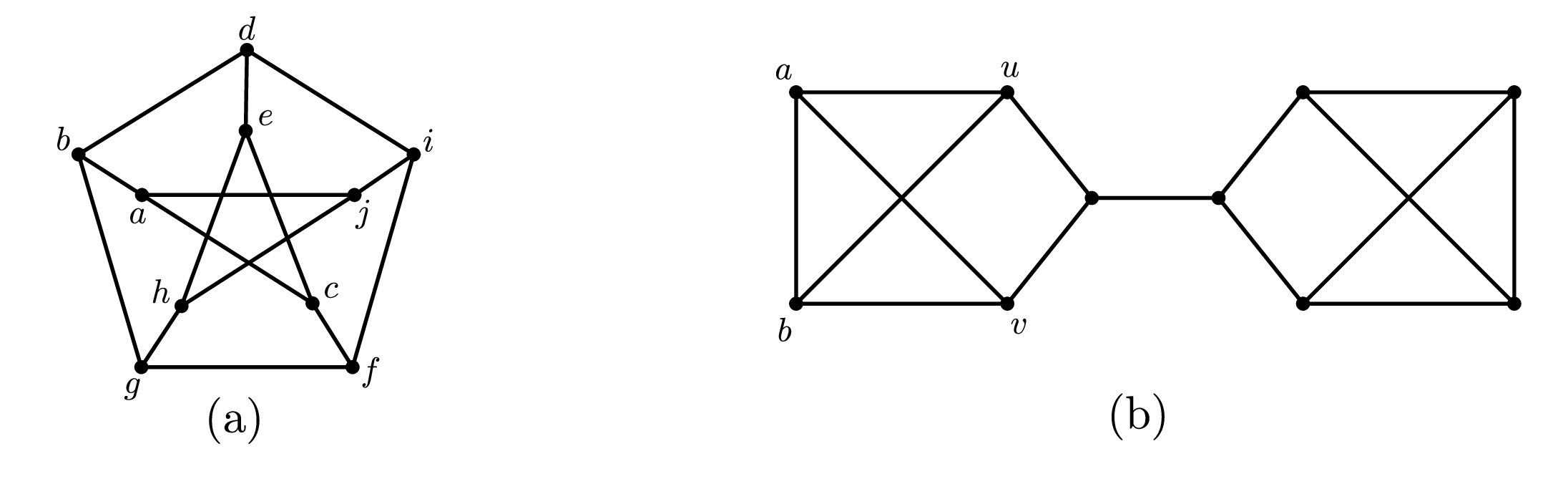}
	\caption{There are only two non-Hamiltonian cubic graphs with 10 vertices, where (a) is the so-called Petersen graph. The vertex labels in (a) refer to the labels as needed in the proof of Theorem \ref{level4_theorem}. In particular, if we connect vertices $a$ and $j$ and delete leaves $x$ and $y$ in the second graph of Figure \ref{Fig_correctExample}, the resulting graph is the Petersen graph as labeled here. }
	\label{Fig_Cubic10}
\end{figure}

Note that the first one, namely the Petersen graph, contains no pair $a$ and $b$ of vertices such that $a$ and $b$ are adjacent to one another and to the same two other vertices $u$ and $v$. So as we assume that such vertices $a$ and $b$ were added during the construction of $G(N)$, $G(N)$ cannot be the Petersen graph. The other one of these two graphs, which is depicted in Figure \ref{Fig_Cubic10} (b),  has two possible positions for the pair $a$ and $b$, but it also has a cut edge, and the positions for $a$ and $b$ (one of which is depicted in Figure \ref{Fig_Cubic10} (b)) are such that both $a$ and $b$ would be at the same side of the cut edge. Thus, by the construction procedure leading to $G(N)$, also $x$ and $y$ must have been on the same side of the cut edge, but then $N$ cannot be proper. This is a contradiction.   
\end{itemize}
In summary, if $|X|=2$, $N$ must contain at least 12 vertices.

\item Now assume $|X|>2$. If necessary, we first perform the pre-processing step of the leaf connecting procedure introduced in Section \ref{sec_preliminaries} and denote the resulting reduced taxon set of $N$ by $X^r$. Note that $\mathcal{LCON}(N)$ possibly contains more than one graph. So let $G(N)$ in the following be an arbitrary element of $\mathcal{LCON}(N)$. We now distinguish two cases.

\begin{itemize}
\item Suppose $|X^r|$ is even. Consider $G(N)$ and $\widetilde{N}$, where $\widetilde{N}$ is the second to last graph in the construction of $G(N)$ according to the definition of $\mathcal{LCON}(N)$. In particular, $\widetilde{N}$ is the graph which we get when only two leaves are left, which we would have to connect in order to receive the final graph $G(N)$. Thus, by construction $\widetilde{N}$ has at most as many vertices as $N$, so by assumption fewer than 12, and it has two leaves. But as $N$ is not tree-based by assumption, neither is $\widetilde{N}$, because any support tree for $\widetilde{N}$ would lead to a support tree of $N$. To see this, we distinguish between two cases:
\begin{itemize}
\item If the support tree of $\widetilde{N}$ only contains edges that are both present in $\widetilde{N}$ and in $N$, we can obtain a support tree for $N$ by re-attaching the leaves at their former positions. 
\item Otherwise, suppose that the support tree of $\widetilde{N}$ contains at least one edge that is not present in $N$. Note that there are two potential types of edges that can be present in $\widetilde{N}$ but not in $N$:
	\begin{itemize}
	\item Edges that were introduced to avoid multiple edges between two attachment points, say $u$ and $v$, of leaves, i.e. the edges $\{u,a\}, \, \{u,b\}, \, \{a,v\}, \, \{b,v\}$ and $\{a,b\}$. If the support tree of $\widetilde{N}$ uses any of these edges, we can obtain a support tree for $N$ as follows: Delete these edges as well as vertices $a$ and $b$ from the support tree. Additionally, add a new edge $e=\{u,v\}$ to the support tree of $\widetilde{N}$ (note that this is allowed as $e=\{u,v\}$ must have been contained in $N$, which led to the introduction of $a$ and $b$) and re-attach the leaves incident to $u$ and $v$ in $N$ to $u$ and $v$, respectively. We can repeat this procedure for all edges of this type.
	\item Edges $e=\{u,v\}$ between two attachment points ($u$ and $v$) of leaves. If the support tree of $\widetilde{N}$ uses such an edge, we can obtain a support tree for $N$ as follows: First of all, delete the edge $\{u,v\}$ from the support tree of $\widetilde{N}$. Note that this disconnects the support tree. Let $T_1$ and $T_2$ denote its two connected components and assume that $u$ is in $T_1$ and $v$ is in $T_2$. Moreover, note that both $u$ and $v$ are of degree 3 by construction, thus $T_1$ and $T_2$ cannot be single vertices. 
	As $N$ is a connected graph, in particular there exists a vertex $u' \neq u$ in $T_1$ and a vertex $v' \neq v$ in $T_2$ such that $u'$ and $v'$ are connected by an edge $e'=\{u',v'\}$ in $N$ (if $T_1$ and $T_2$ were only connected via $\{u,v\}$, $N$ would not have been connected, as $\{u,v\}$ is only present in $\widetilde{N}$ and not in $N$). We now add $e'=\{u',v'\}$ to the support tree of $\widetilde{N}$, and re-attach the leaves incident to $u$ and $v$ in $N$ to $u$ and $v$, respectively. Again, we can repeat this procedure for all edges of this type.
	\end{itemize}
\end{itemize}
In all cases, we can construct a support tree for $N$ from a support tree of $\widetilde{N}$. This is a contradiction as $N$ is not tree-based. Thus, $\widetilde{N}$ has to be non-tree-based.
However, then $\widetilde{N}$ would be a non-tree-based network with two leaves and strictly fewer than 10 interior vertices, which contradicts the first part of the proof.
\item Suppose $|X^r|$ is odd. By construction, as we assume $N$ has at most 10 vertices, also $G(N)$ can have at most 10 vertices. This is due to the fact that in each step during the construction of $G(N)$, either two leaves are deleted or, in the last step, one leaf and its attachment point are deleted. So in all steps, two vertices are deleted and at most two new vertices are added (if parallel edges need to be avoided), so the total number of vertices cannot increase.

Now, if the resulting graph $G(N)$ has at most eight vertices, we already know by Lemma \ref{cubic_hamiltonian} that for each edge $e=\{u',v'\}$ it contains a Hamiltonian cycle from $u'$ to $v'$. Thus, using the same arguments as in the case where $|X|=2$ combined with Lemma \ref{N-x}, $N$ must have been tree-based, which is a contradiction\footnote{For instance, we can apply these arguments to the case where, on the way to constructing $G(N)$, the resulting network has 3 leaves and the last pair gets connected, before we deal with the last singleton leaf. At least one of the edges resulting from connecting this last pair of leaves must be contained in any Hamiltonian cycle and thus leads to a Hamiltonian path when we disregard it. }. If, on the other hand, $G$ has precisely 10 vertices, then again, as in the case where $|X|=2$, we only need to consider the two cubic graphs with 10 vertices which are non-Hamiltonian depicted in Figure \ref{Fig_Cubic10}. The Petersen graph as before does not have any pair of vertices $a$ and $b$ that could be suppressed so that we have parallel edges, which implies that no vertices have been added during the deletion of leaves. This means that all 10 vertices of the Petersen graph were already there in $N$, plus at least three leaves. So in total, $N$ would have at least 13 vertices, which is a contradiction to the assumption that $N$ has fewer than 12 vertices.

The other non-Hamiltonian cubic graph with 10 vertices, however, namely the one depicted in Figure \ref{Fig_Cubic10} (b) has the property that wherever we attach at least two leaves, the network is immediately tree-based whenever it is proper: If the leaf set is such that it is distributed at both sides of the cut edge, the network is tree-based, and if all leaves are on the same side of the cut edge, the network is not proper (whether or not we delete the candidate vertices $a$ and $b$ as depicted in Figure \ref{Fig_Cubic10} (b), which may have been added during the deletion of the leaves, does not matter). Both scenarios contradict the assumption that the network is proper but not tree-based. 
\end{itemize}
\end{itemize}

So in all cases, the result is a contradiction, so every network with fewer than 12 vertices is tree-based. This completes the proof.
\end{proof}

We are now in the position to prove Theorem \ref{level4_theorem}. The first part of the proof is identical to the proof presented in \citet{Francis2018}, but in the second part we use a different argument, in particular, we do not use so-called level-$k$ generators.

\begin{proof}[Proof (Theorem \ref{level4_theorem})]
We have to show that all proper level-0,1,2,3 and 4 networks are tree-based.
The first part of the proof is analogous to the first part of the proof of Theorem \ref{level_non-binary}, i.e. we reduce the analysis to simple networks $B_{N}$ with exactly two leaves and show that all proper binary level-0,1,2,3, and 4 networks are tree-based.
As in the non-binary case, it is immediately clear that any level-$k$ network is tree-based if $k=0$ or $k=1$:  
For $k=0$ the network is a tree and is thus tree-based. For $k=1$, at most one edge has to be removed from each non-trivial blob to obtain a tree; 
this tree is a support tree, because removing at most one edge from each non-trivial blob cannot induce any new leaves (same argument as in the non-binary case). 
Thus, let us now consider $k \geq 2$.
Due to Lemma \ref{12vertices} we know that any minimal proper non-tree-based network has at least 12 vertices, 2 of which are leaves. This implies that any $B_{N}$ that is not tree-based has at least 12 vertices.\footnote{Note that by Lemma \ref{binarycutvertex}, as blobs do not contain cut edges and as we are in the binary case, there can also be no cut vertices (as these are always incident to cut edges in this case). So $B_{N}$ must be proper, which indeed justifies the usage of Lemma \ref{12vertices}. }
Additionally, by Lemma \ref{blob_nodes} we know that the number of vertices in a non-trivial blob incident to two cut edges (and thus corresponding to a simple network $B_{N}$ with two leaves) in a level-$k$ network is bounded from above by $2k$.  
Now, as we assume that all $B_{N}$ have exactly two leaves, any $B_{N}$ that is not tree-based has to correspond to a non-trivial blob $B$ with at least 10 vertices. Thus,
$$ 10 \leq n \leq 2k,$$ where $n$ denotes the number of vertices of $B_N$.
This immediately implies that $k \geq 5$, thus there cannot be an unrooted binary non-tree-based level-$k$ network with $k \leq 4$. 

To prove the last statement of the theorem, consider either one of the two level-5 networks depicted in Figure \ref{Fig_correctExample}. These networks can be seen to not be tree-based as follows.
If they were tree-based, then there would be a path from $x$ to $y$ visiting every vertex exactly once. Any such path must begin with the edge $\{x,a\}$ and end with the edge $\{j,y\}$. Now, for the network at the top of Figure \ref{Fig_correctExample}, it is straightforward to see that every path visiting both the vertices at the top (i.e. vertices $b,c,d,e$) and the vertices at the bottom (i.e. vertices $f,g,h,i$) must visit either vertex $a$ or $j$ twice, which is a contradiction. For the second network in Figure \ref{Fig_correctExample} this is a bit harder to see, but it can be verified as follows: Connect the two leaves to construct $\mathcal{LCON}(N)$, which in this case (as there are only two leaves) contains precisely one graph, say $G$. $G$ is in fact isomorphic to the Petersen graph, cf. Figure \ref{Fig_Cubic10}(a). In particular, $G$ is not Hamiltonian. Thus, there is also no Hamiltonian cycle using the edge $\{a,j\}$ in $G$, which in turn implies that there cannot be a Hamiltonian path from $a$ to $j$ in $G$. Thus, there can also not be a support tree $T$ for $N$ -- otherwise, removing $x$ and $y$ would induce a Hamiltonian path from $a$ to $j$ along $T$ in $G$. This completes the proof. 
\end{proof}

\setcounter{proposition}{2}

\begin{proposition}
There are precisely two minimal (in the number of vertices) proper unrooted binary phylogenetic level-5 networks that are not tree-based. Both of them contain exactly 12 vertices. In particular, all other proper unrooted binary phylogenetic level-5 networks that are not tree-based contain more than 12 vertices.
\end{proposition}

\begin{proof}
By Lemma \ref{12vertices}, we know that 12 vertices are required for a network to be non-tree-based. Thus, the two proper unrooted binary non-tree-based phylogenetic level-5 networks depicted in Figure \ref{Fig_correctExample} are minimal. It remains to show that these two networks are the only non-tree-based level-5 networks with 12 vertices.
We verified this by an exhaustive search with Mathematica \citet{Mathematica}, which was conducted in the following way: 
First, we obtained a list of all connected simple graphs with 10 vertices (11716571 in total) from the \enquote{House of Graphs} database (cf. \citet{Brinkmann2013}). These were then analyzed for potential binary networks with 10 interior vertices and 2 leaves by checking whether they contained exactly 8 vertices of degree 3 and 2 vertices of degree 2 (which we called $u$ and $v$ and to which we subsequently attached leaves) using the Mathematica function \texttt{VertexDegree[$\cdot$]}. The resulting 113 graphs were analyzed for tree-basedness in the following way: 
	\begin{itemize}
	\item We attached one leaf to each of the two degree-2 vertices $u$ and $v$.
	\item The two leaves were then connected according to the leaf connecting procedure (note that as there are only 2 leaves,  $\mathcal{LCON}(N)$ contains only one graph).
	\item We then used the Mathematica function \texttt{FindHamiltonianCycle[$\mathcal{LCON}(N)$, All]} to find all Hamiltonian cycles of $\mathcal{LCON}(N)$. We then checked whether one of them used at least one edge in $E(\mathcal{LCON}(N)) \setminus E(N)$.  If so, this Hamiltonian cycle corresponds to a Hamiltonian path from $u$ to $v$ in $\mathcal{LCON}(N)$, meaning that $N$ is tree-based (cf. Proposition \ref{cubic_tree-based2}). 
	\end{itemize}
This left us with 10 non-tree-based networks, which were then filtered for proper networks. It turned out that 8 of them were not proper, i.e. there are exactly 2 proper binary phylogenetic networks with 12 vertices, both of which are level-5 networks. They are the ones depicted in Figure \ref{Fig_correctExample}.

Furthermore, both networks in Figure \ref{Fig_correctExample} are symmetric in the sense that swapping their leaves $x$ and $y$ leads to an isomorphic network, respectively. So, there are indeed only two minimal proper unrooted binary \emph{phylogenetic} networks that are not tree-based. More precisely, both non-leaf-labeled graphs obtained by the exhaustive search described above correspond to precisely one leaf-labeled network each.
\end{proof}

\subsection{Supplementary figures}
\begin{figure}[H]
\centering
\includegraphics[scale=0.5]{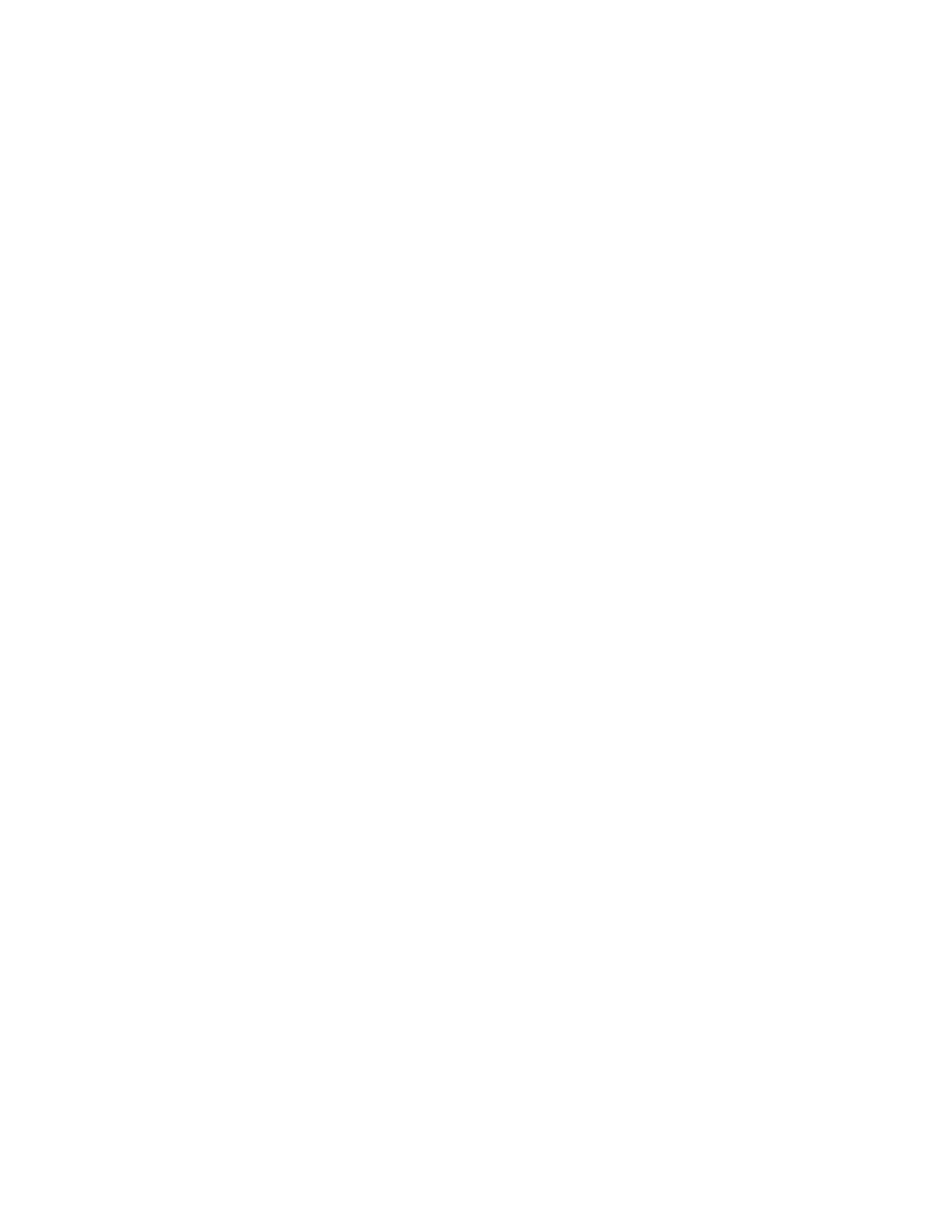}
\caption{Left: There exists exactly one cubic graph with 4 vertices. Right: For $e=\{u,v\}$ there exits a Hamiltonian path from $u$ to $v$ indicated by dashed lines.}
\label{figcubic1}
\end{figure}
\begin{figure}[H]
\centering
\includegraphics[scale=0.5]{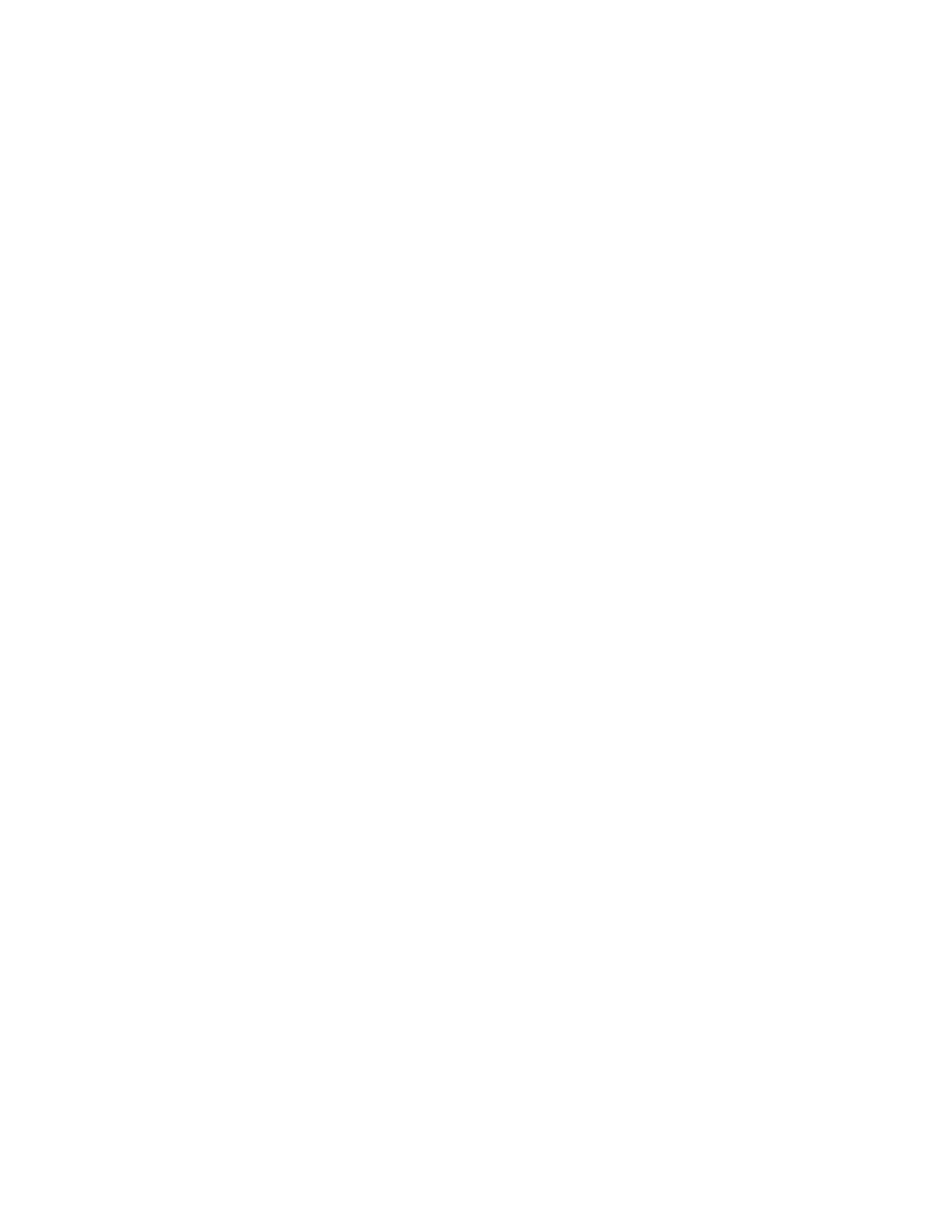}
\caption{All cubic graphs with 6 vertices.}
\label{figcubic2}
\end{figure}
\begin{figure}[H]
\centering
\includegraphics[scale=0.35]{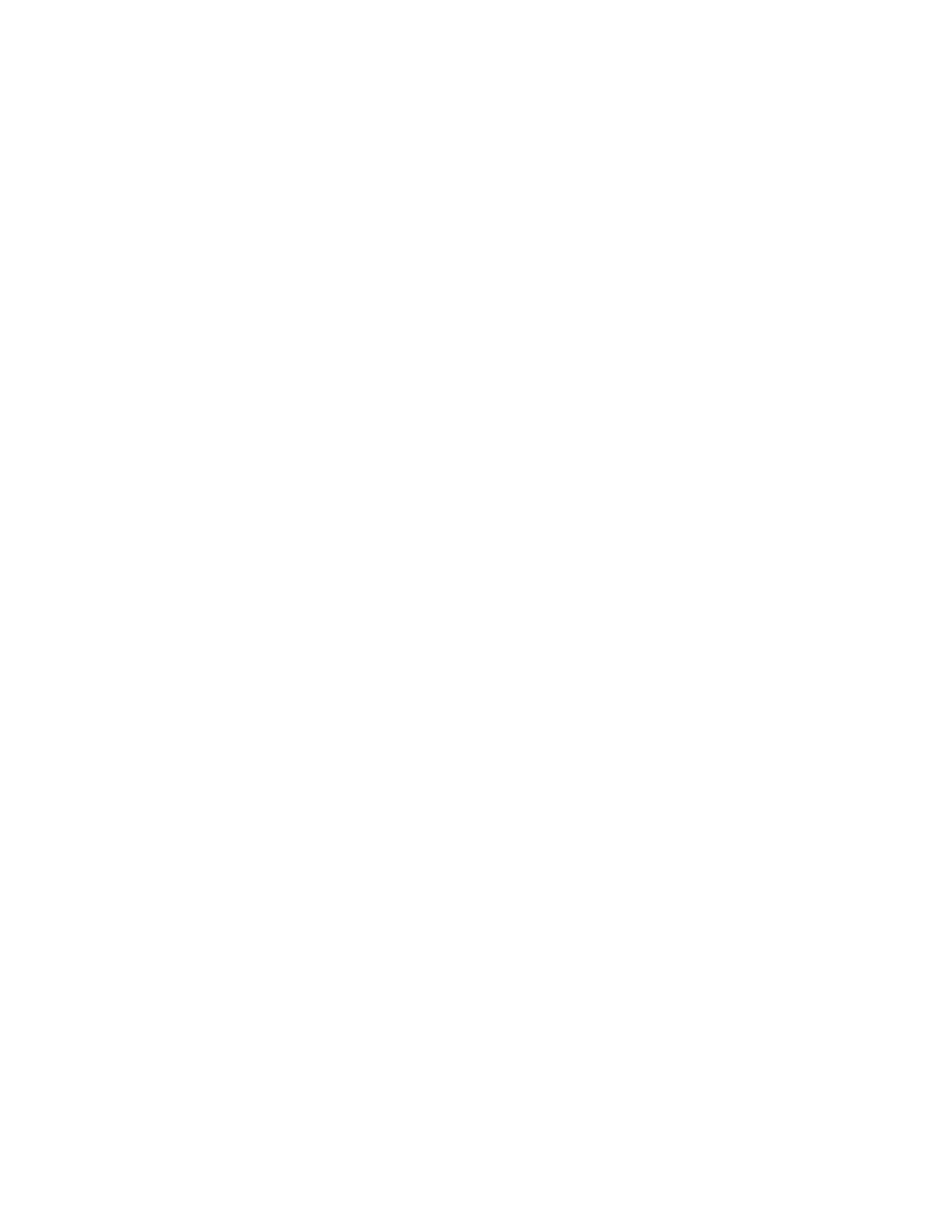}
\caption{All cubic graphs with 8 vertices.}
\label{figcubic3}
\end{figure}

\subsubsection{Catalog of all proper tree-based networks with up to 7 vertices}
In the following all proper tree-based networks with up to 7 vertices are depicted. For each network, a support tree is shown in bold lines and the additional network edges are given by dashed lines.

\begin{figure}[H]
\centering
\includegraphics[scale=0.4]{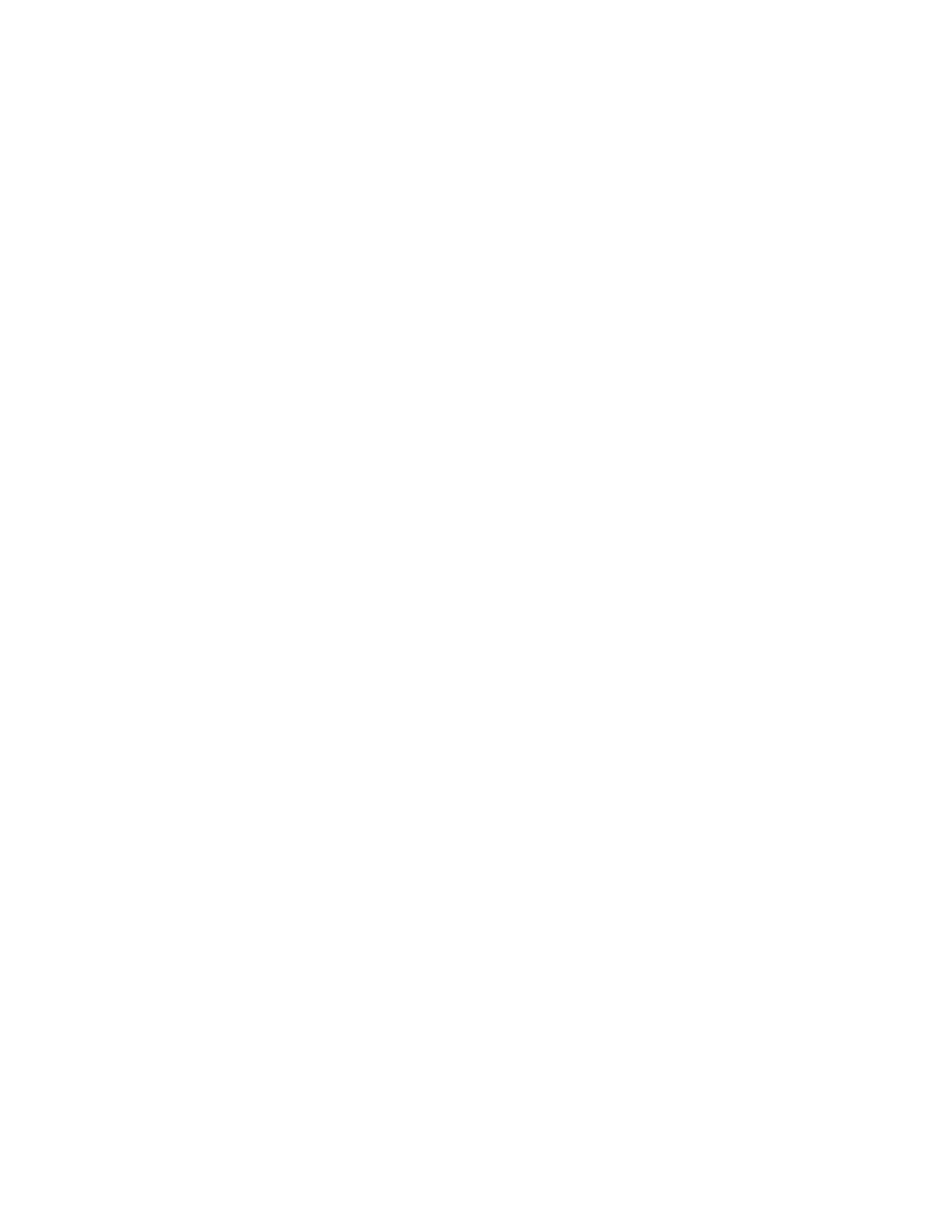} \hspace{15mm}
\includegraphics[scale=0.4]{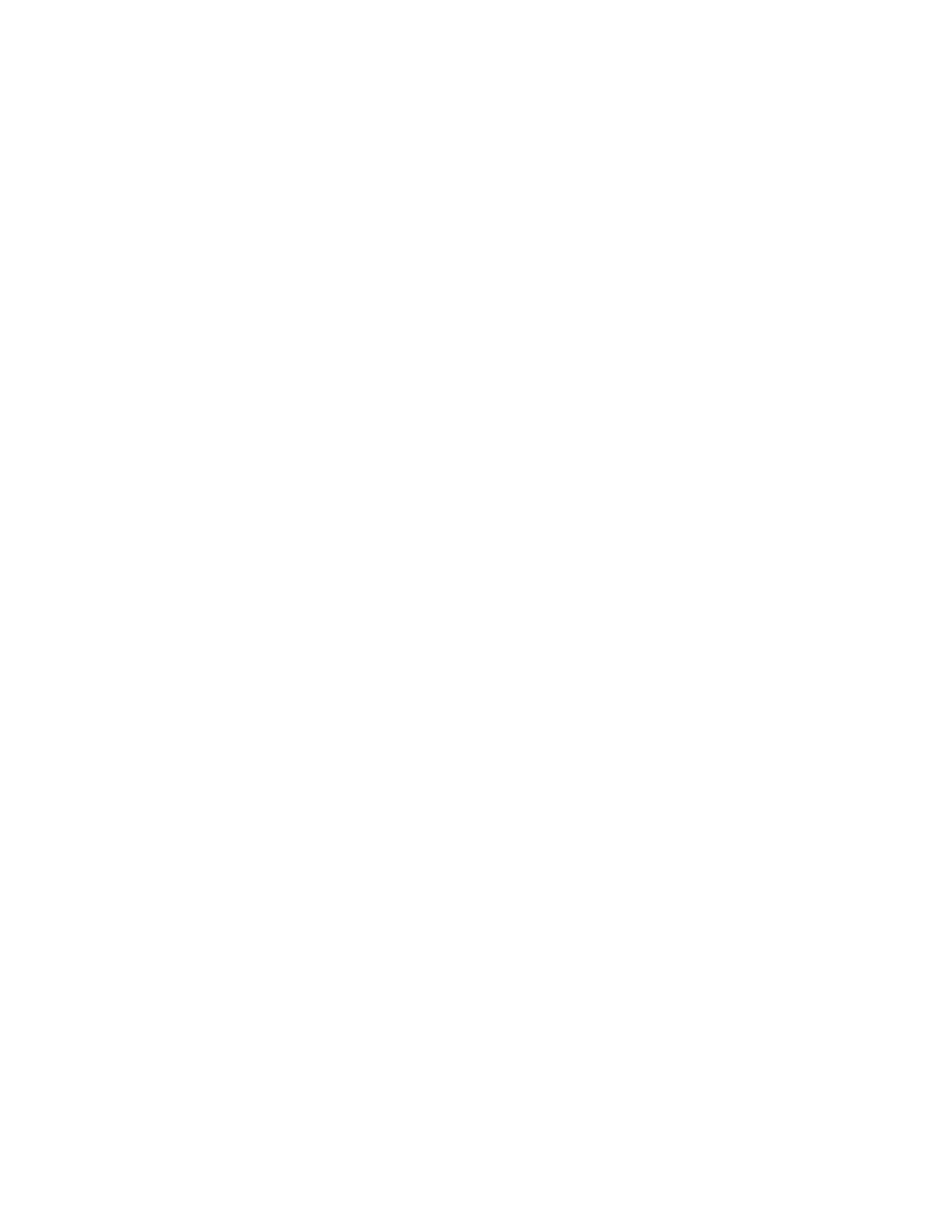} \hspace{15mm}
\includegraphics[scale=0.4]{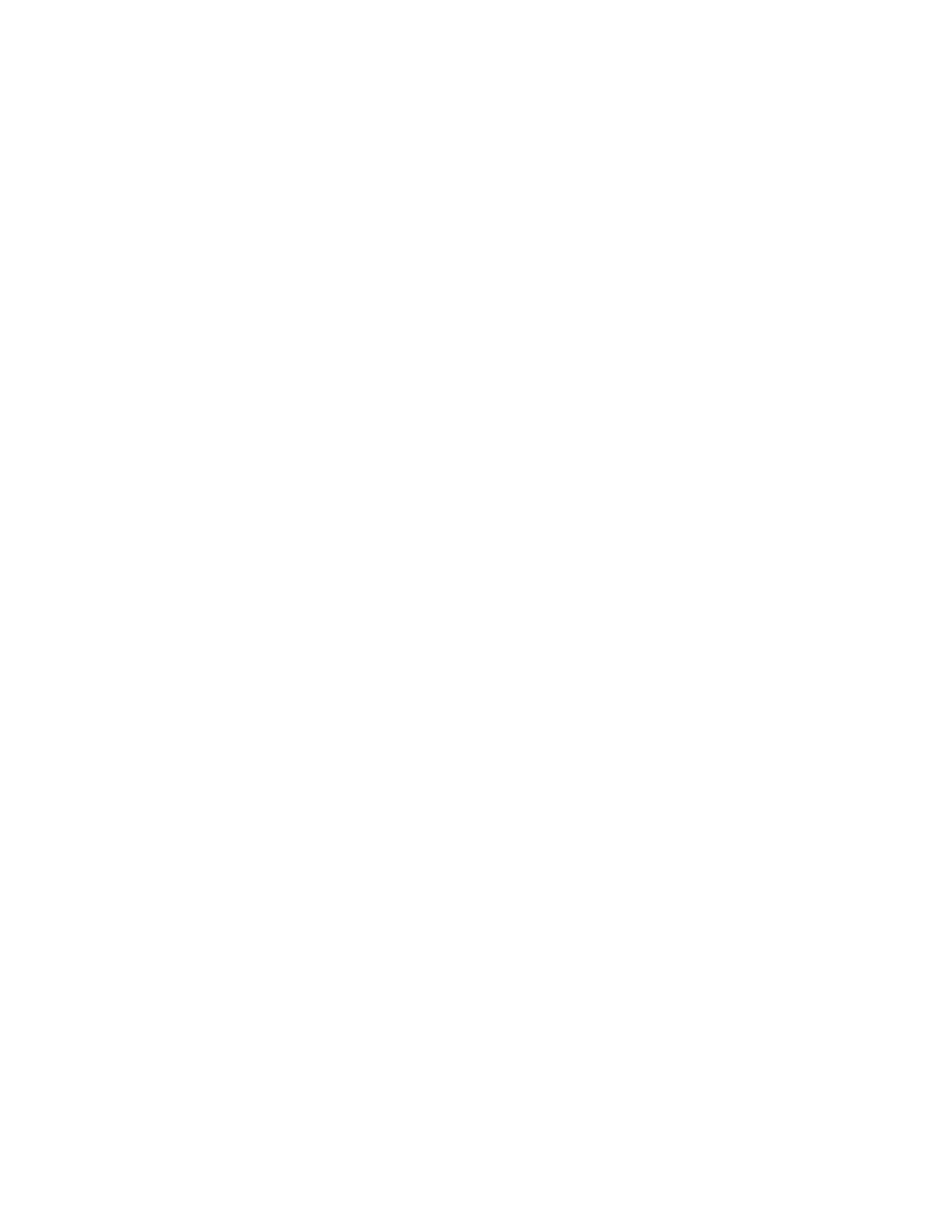} \\ \vspace{7mm}
\includegraphics[scale=0.4]{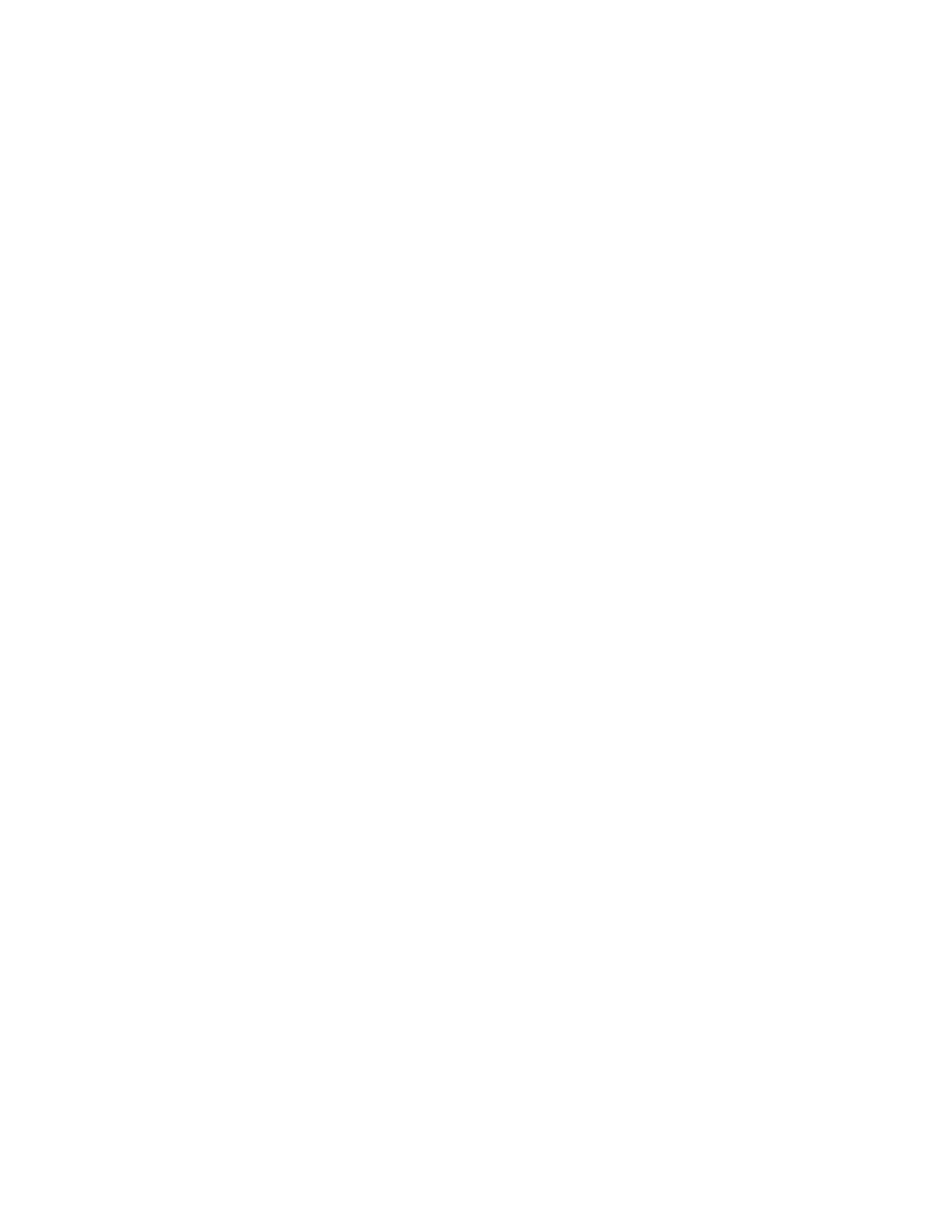} \hspace{15mm}
\includegraphics[scale=0.4]{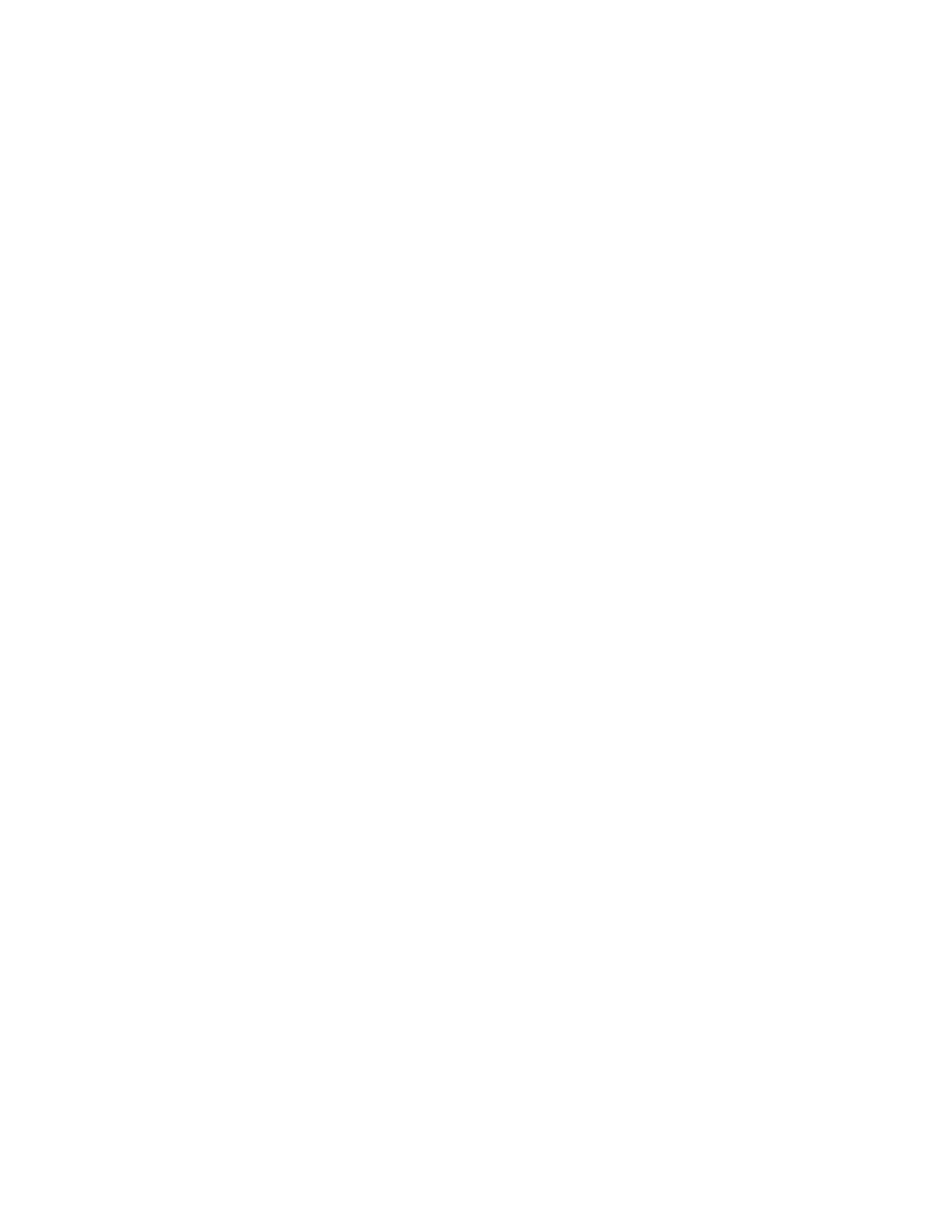} \hspace{15mm}
\includegraphics[scale=0.4]{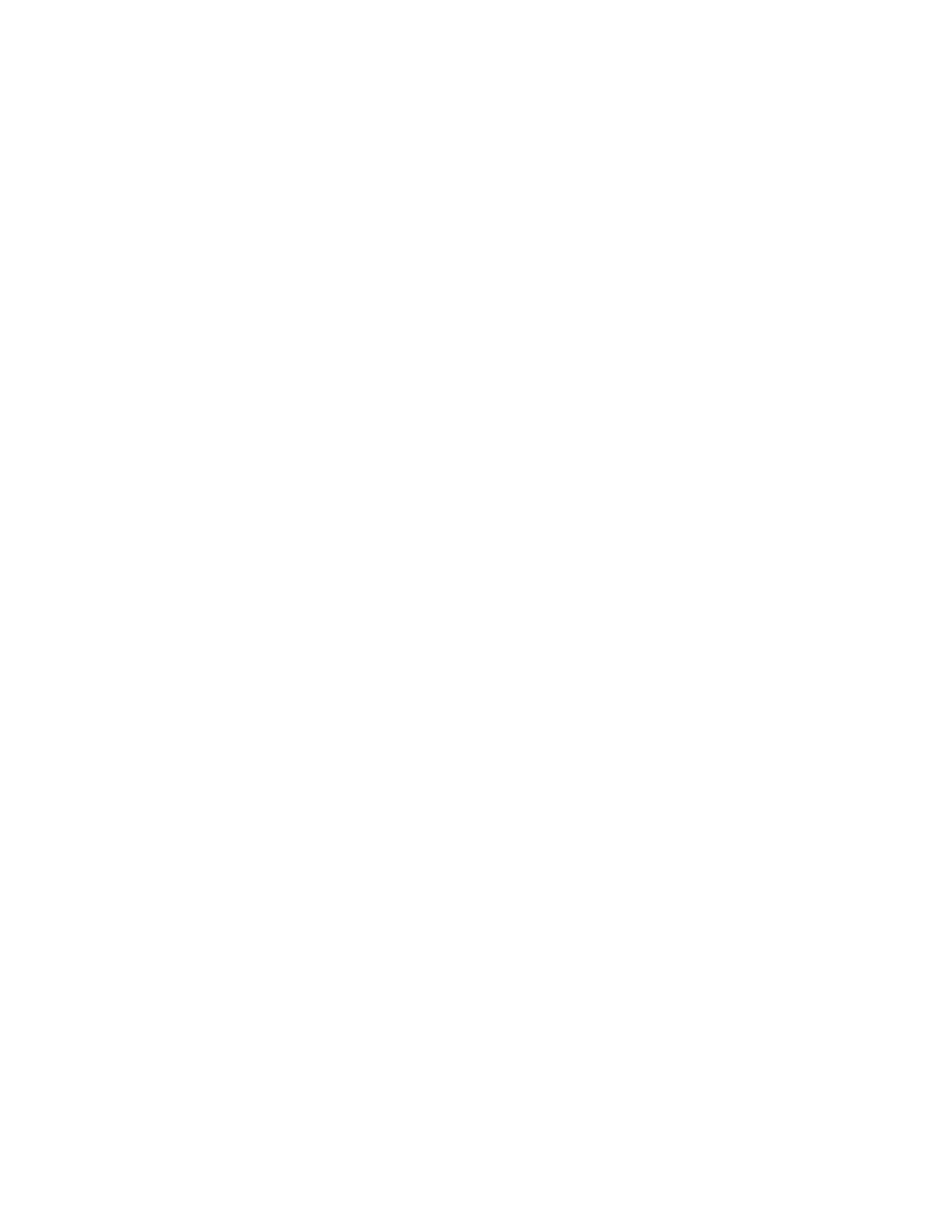} \\ \vspace{7mm}
\includegraphics[scale=0.4]{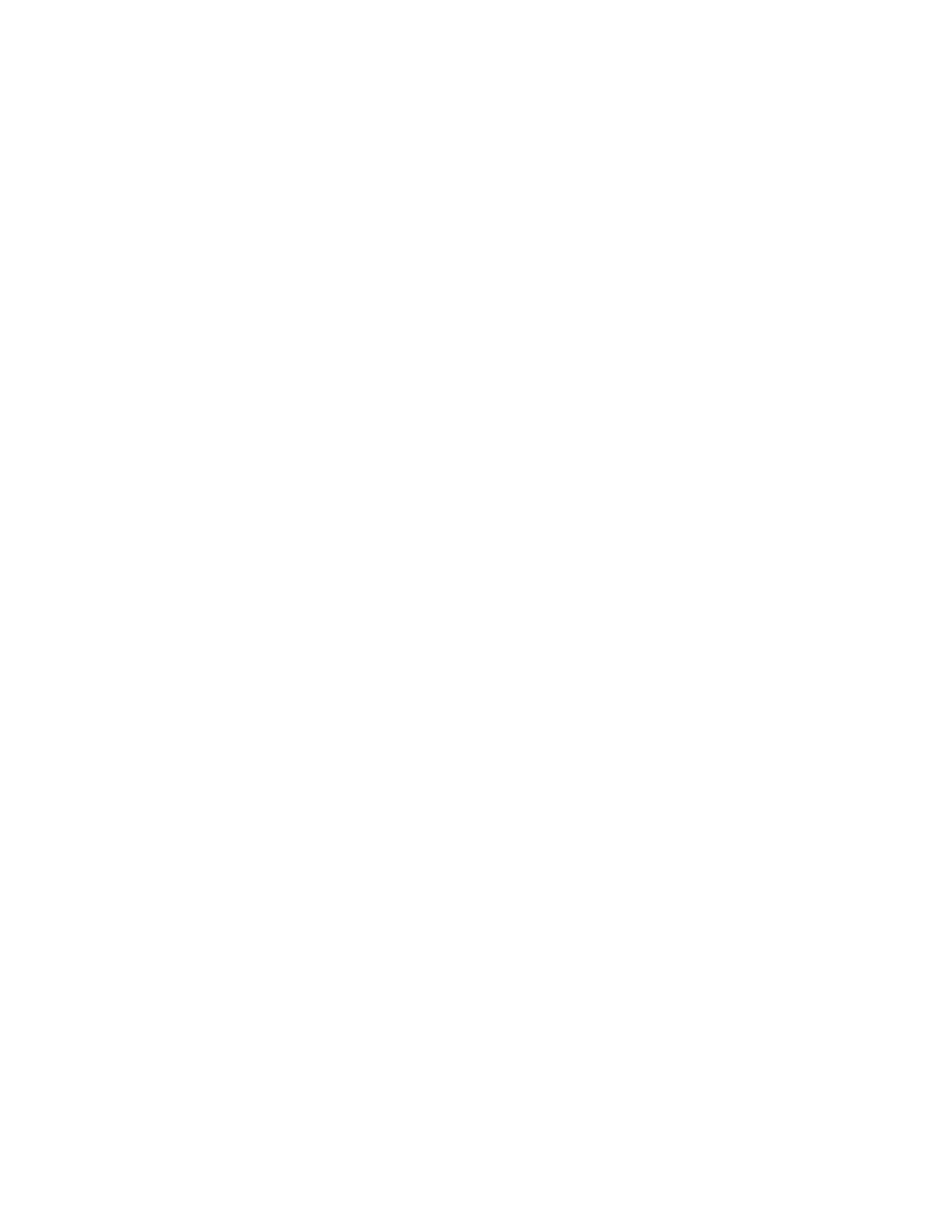} \hspace{15mm}
\includegraphics[scale=0.4]{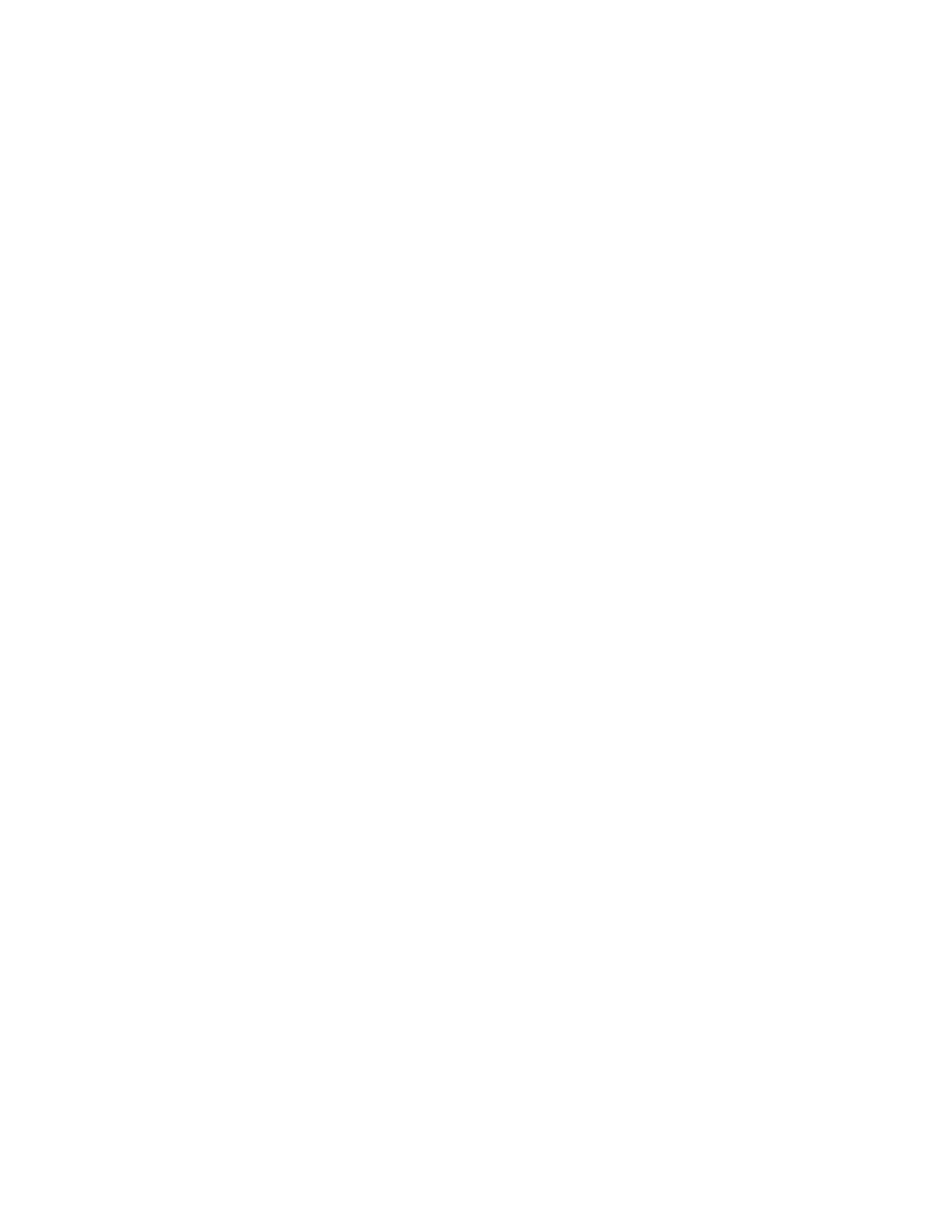} \hspace{15mm}
\includegraphics[scale=0.4]{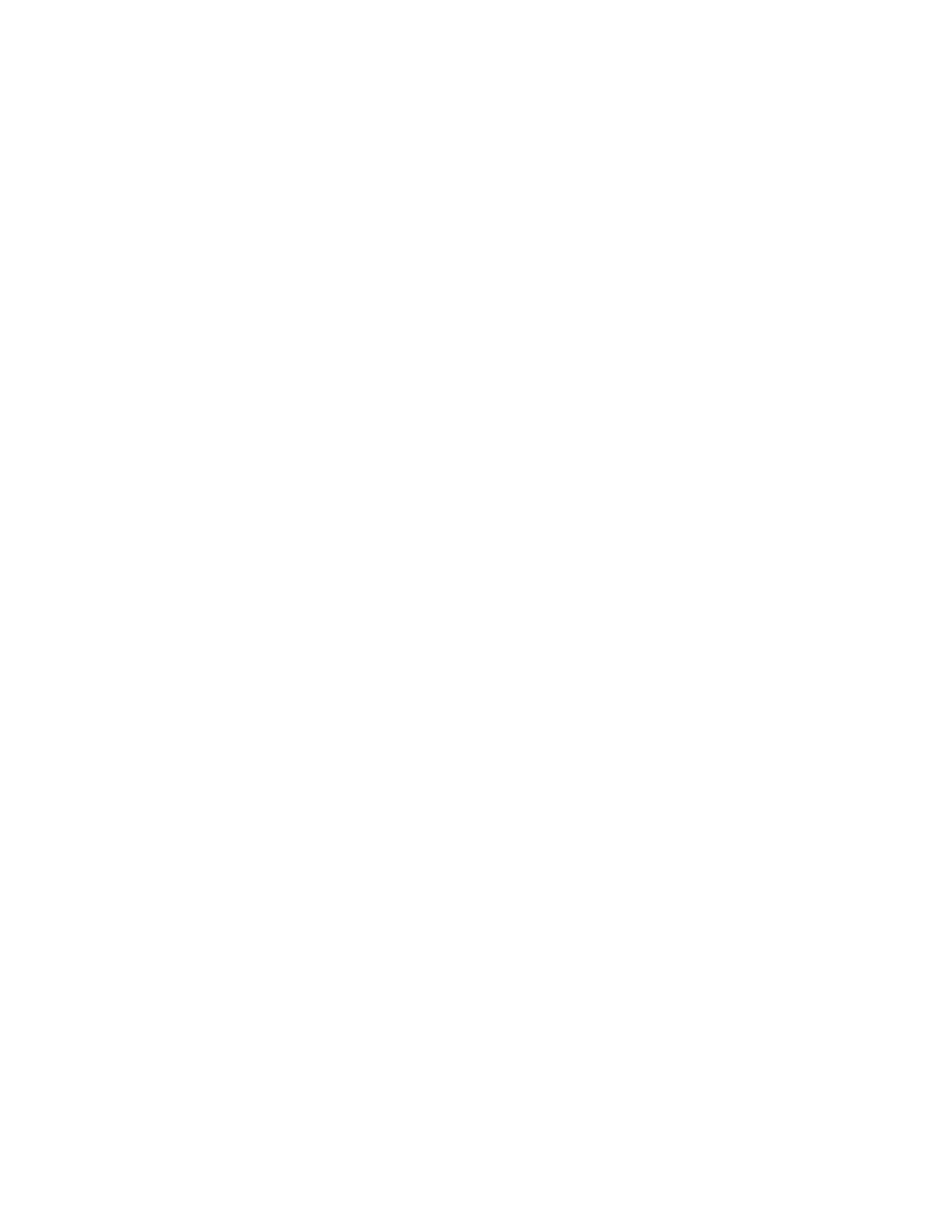} \\ \vspace{7mm}
\caption{All proper tree-based networks with up to 6 vertices (up to leaf labeling).}
\label{Catalog1}
\end{figure}

\begin{figure}[H]
\centering
\includegraphics[scale=0.3]{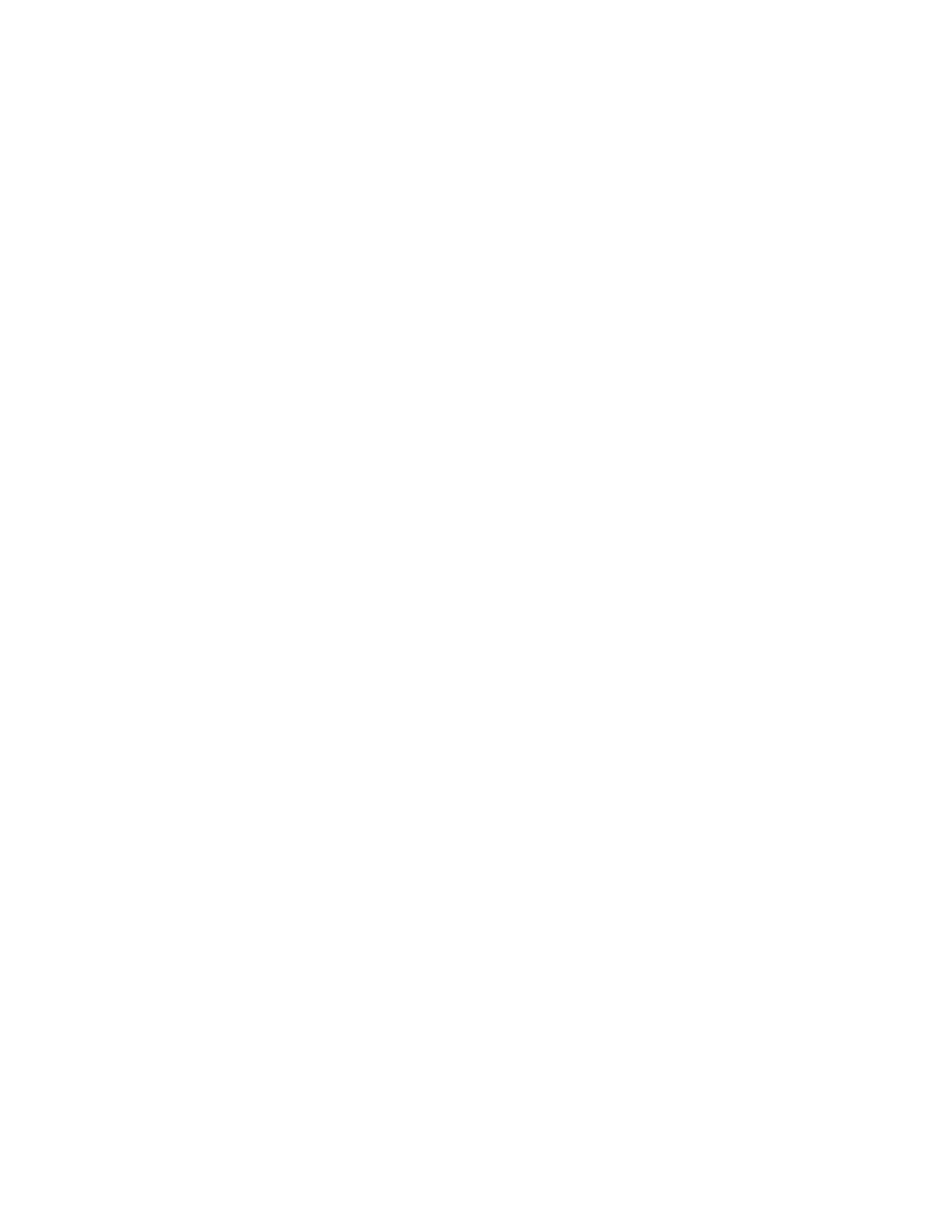} \hspace{15mm}
\includegraphics[scale=0.3]{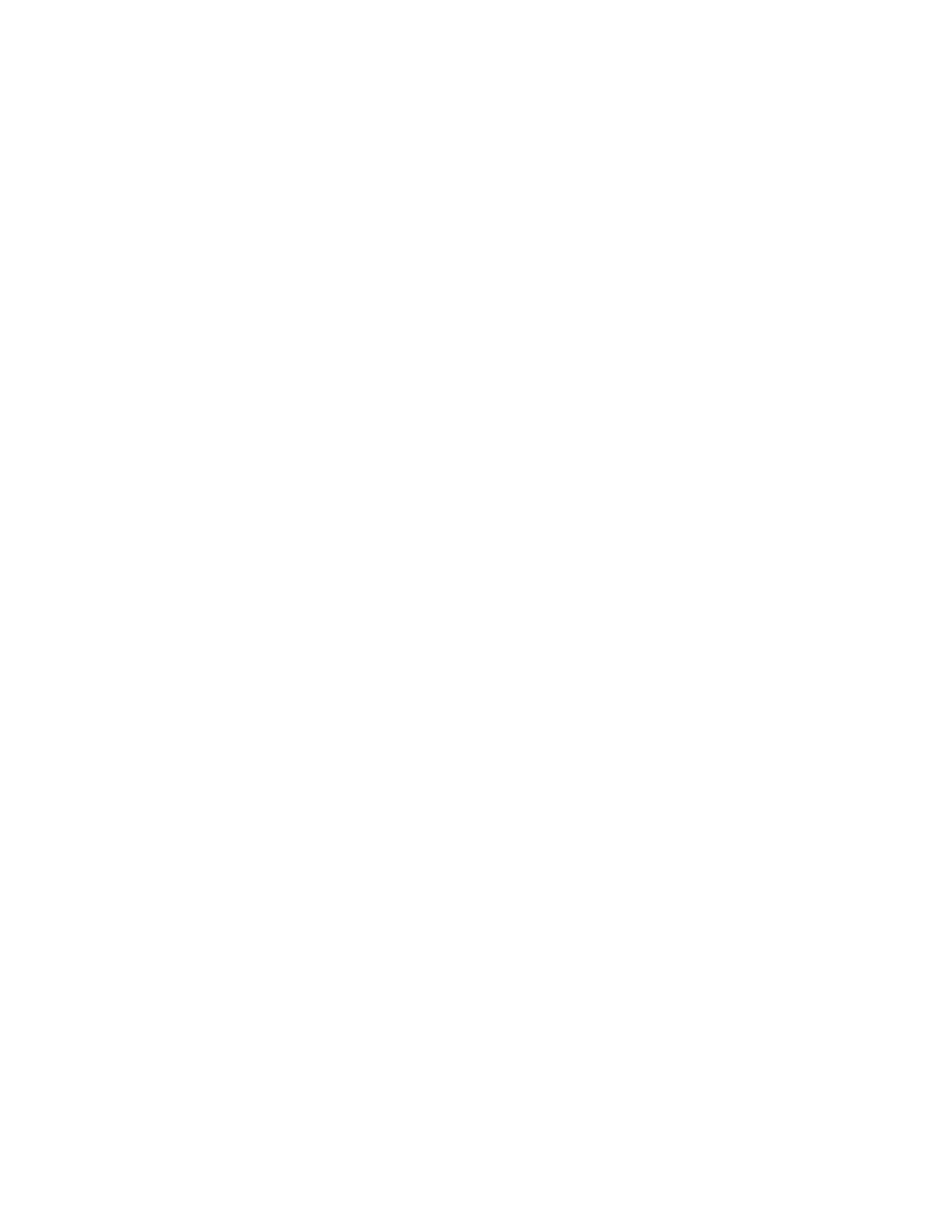} \hspace{15mm}
\includegraphics[scale=0.3]{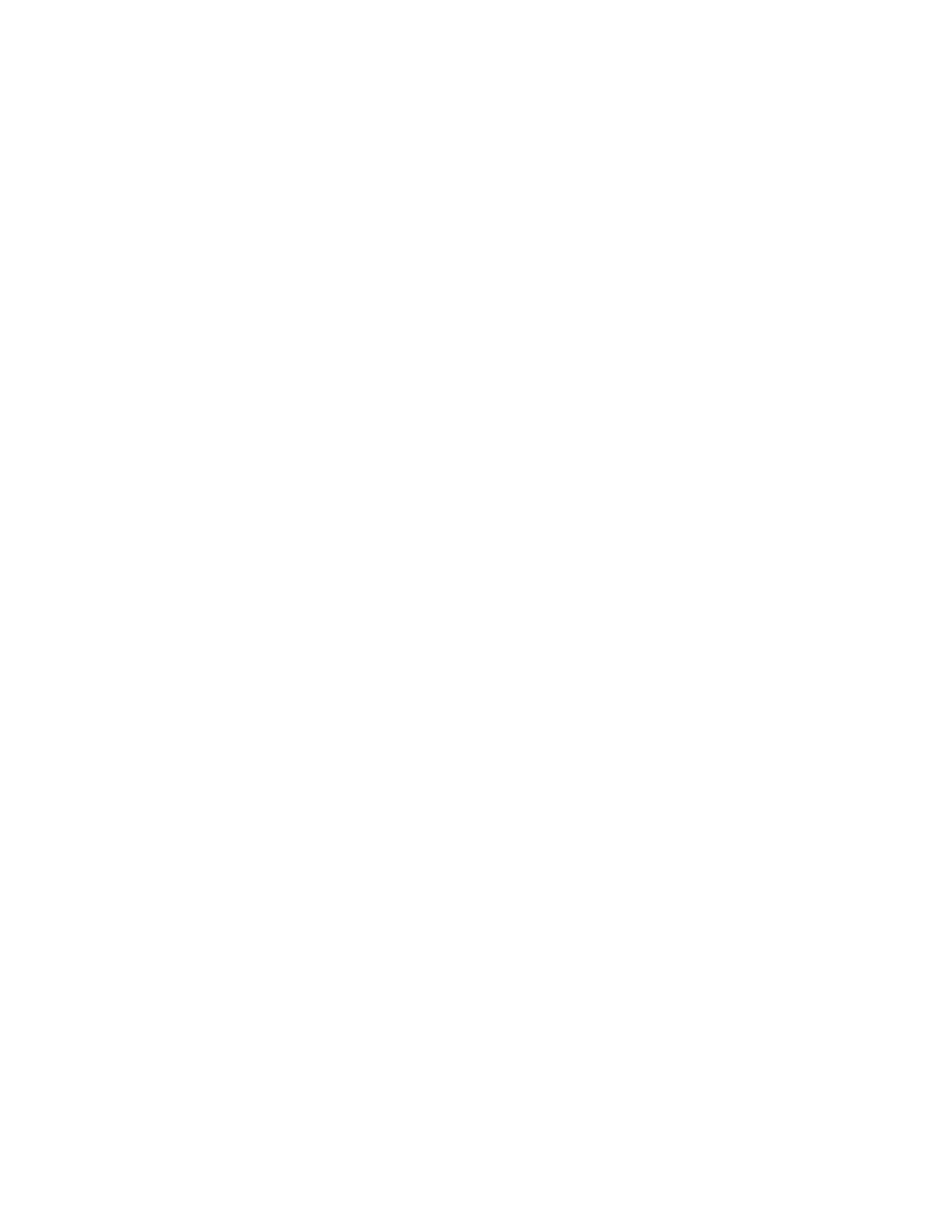} \\ \vspace{7mm}
\includegraphics[scale=0.3]{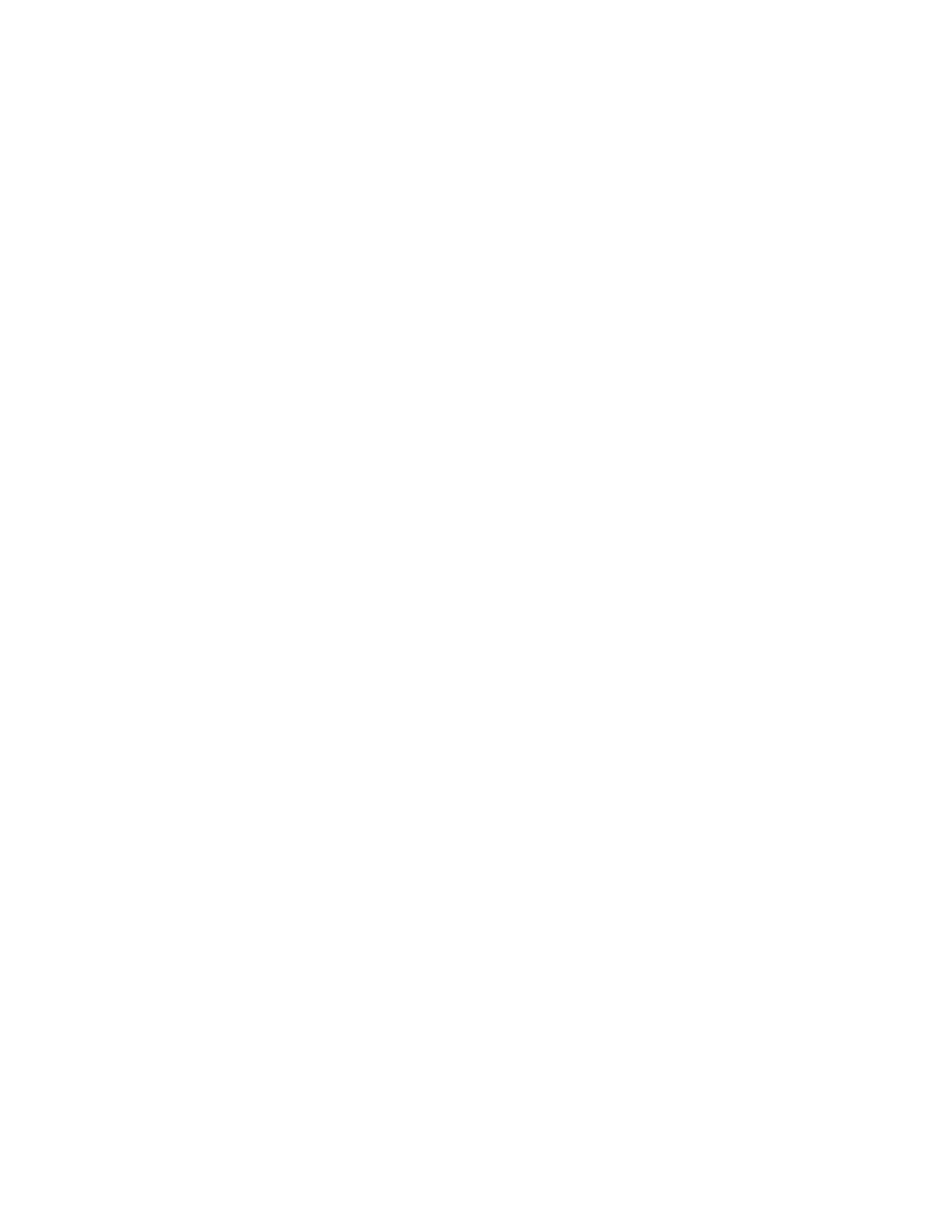} \hspace{15mm}
\includegraphics[scale=0.3]{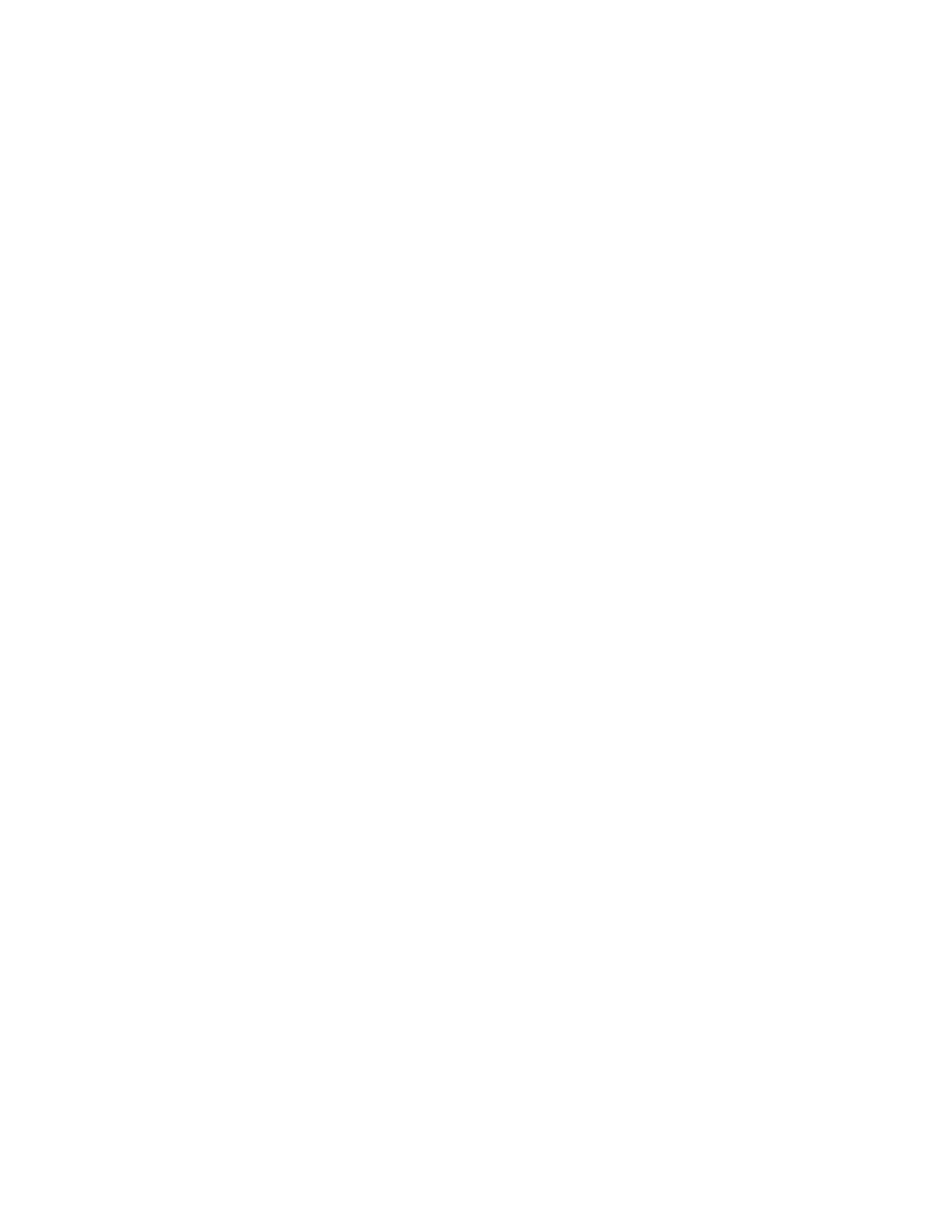} \hspace{15mm}
\includegraphics[scale=0.3]{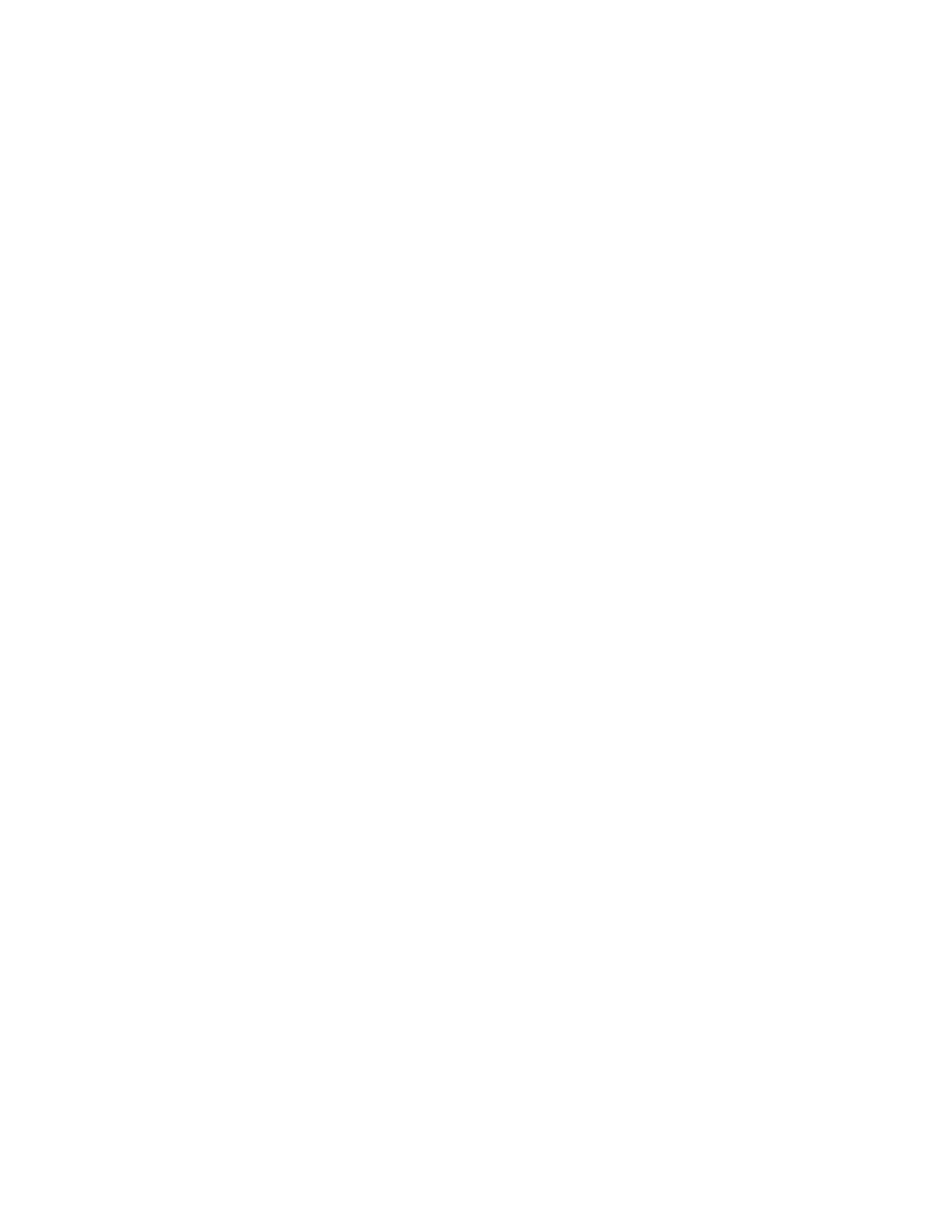} \\ \vspace{7mm}
\includegraphics[scale=0.3]{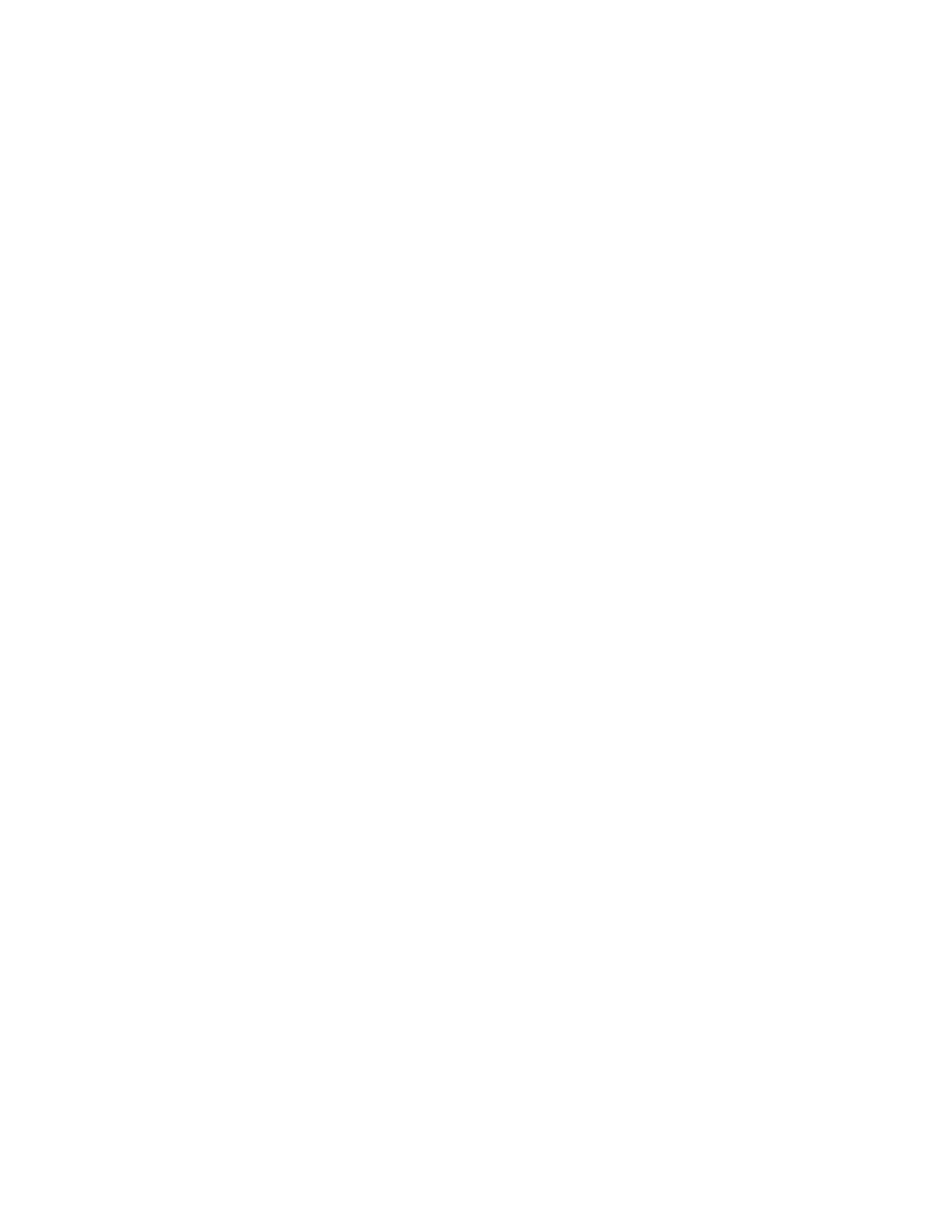} \hspace{15mm}
\includegraphics[scale=0.3]{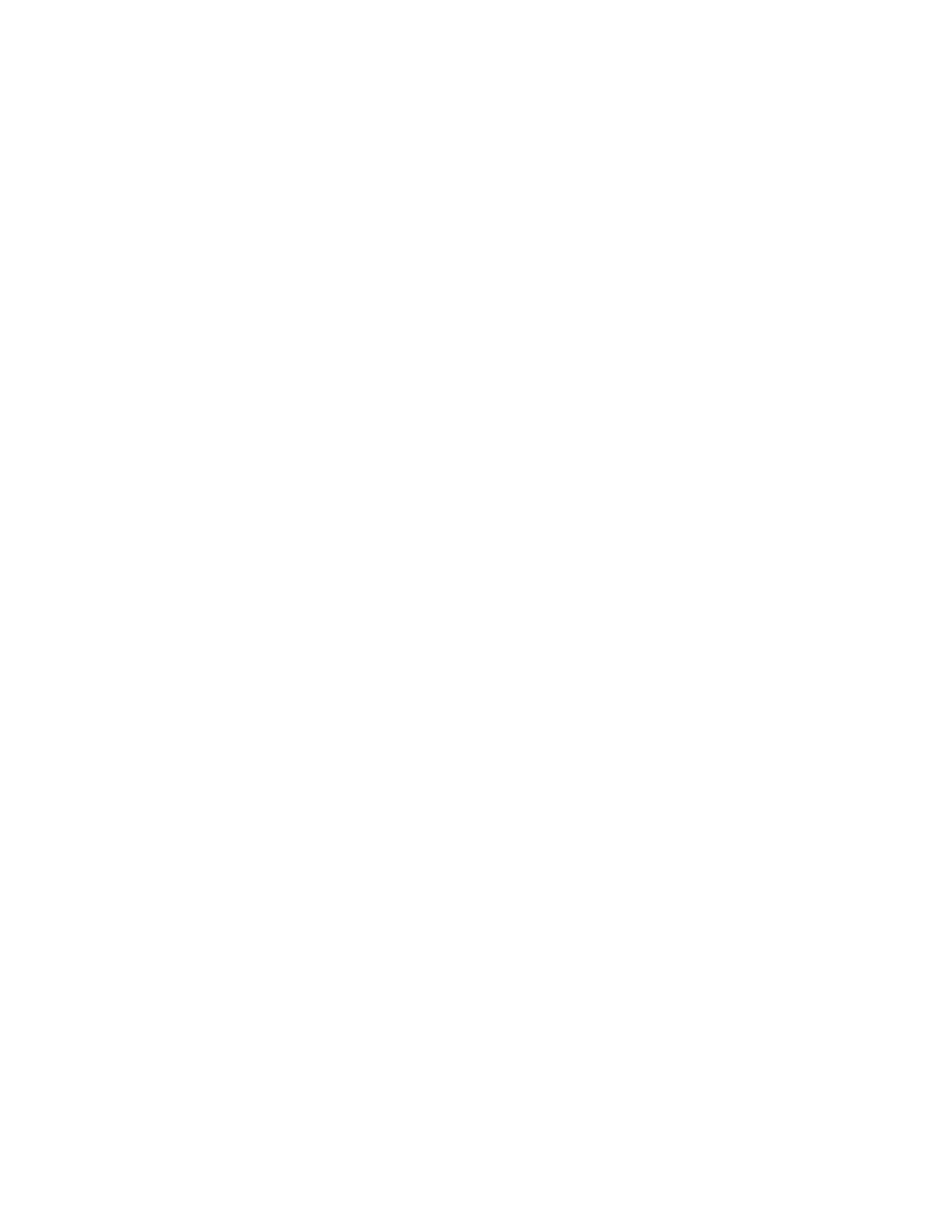} \hspace{15mm}
\includegraphics[scale=0.3]{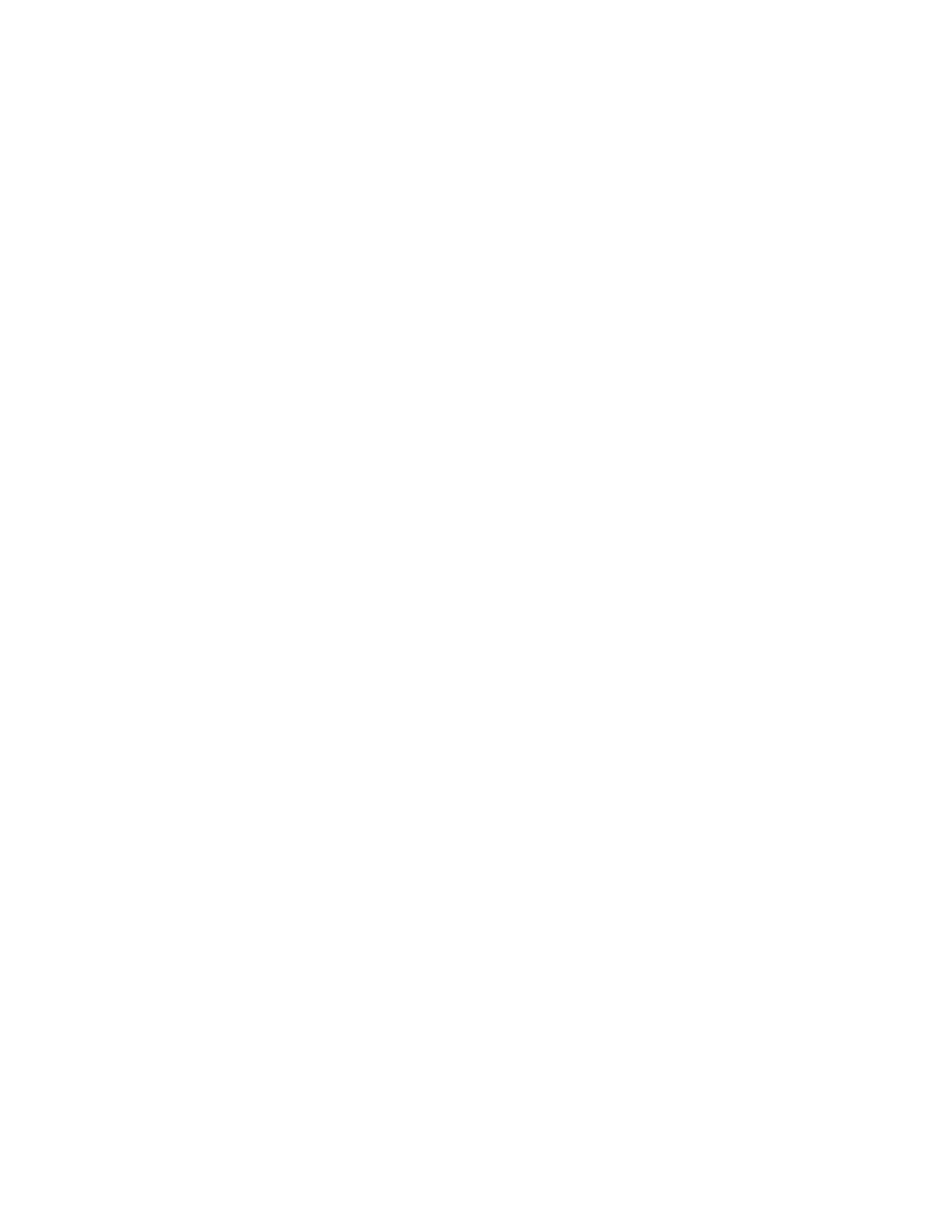} \\ \vspace{7mm}
\includegraphics[scale=0.3]{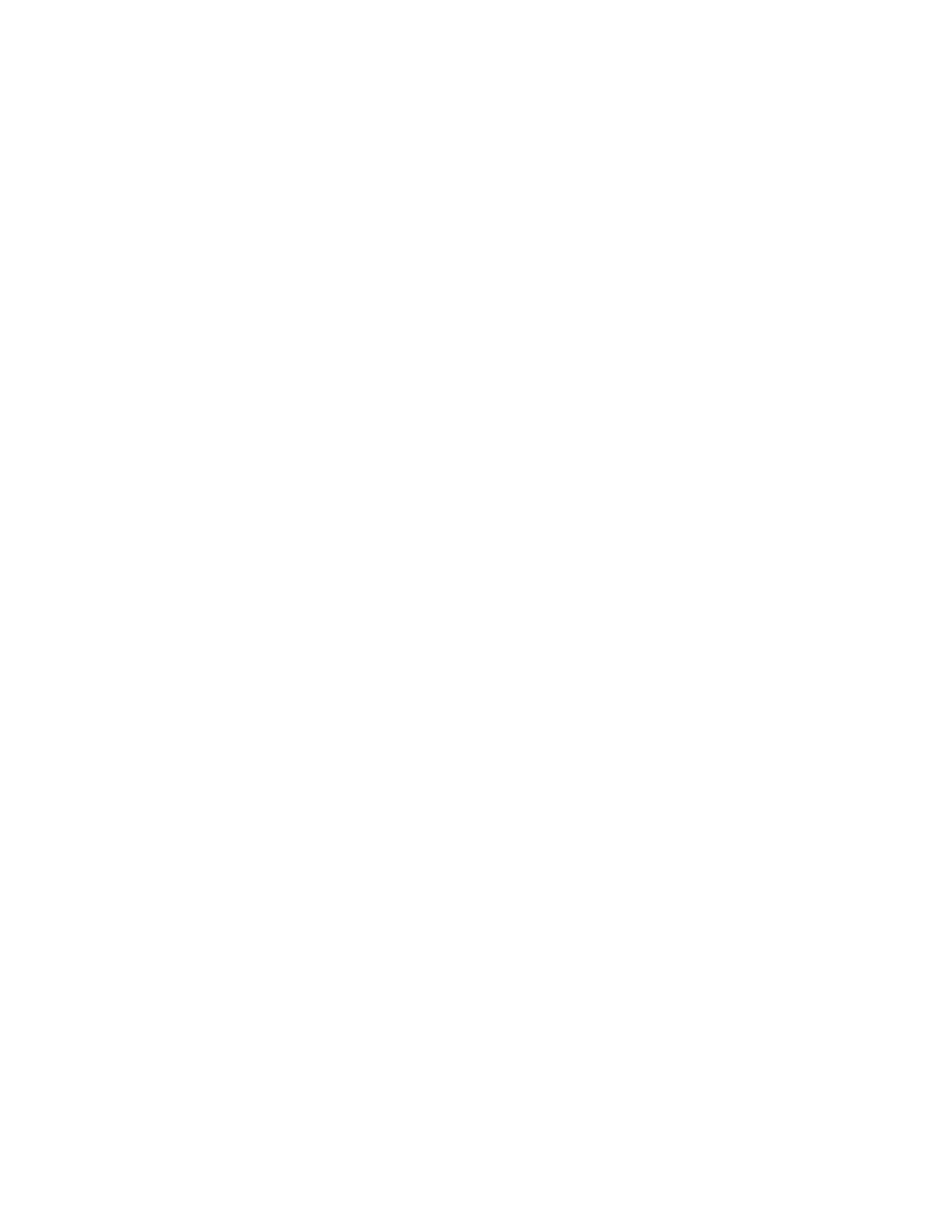} \hspace{15mm}
\includegraphics[scale=0.3]{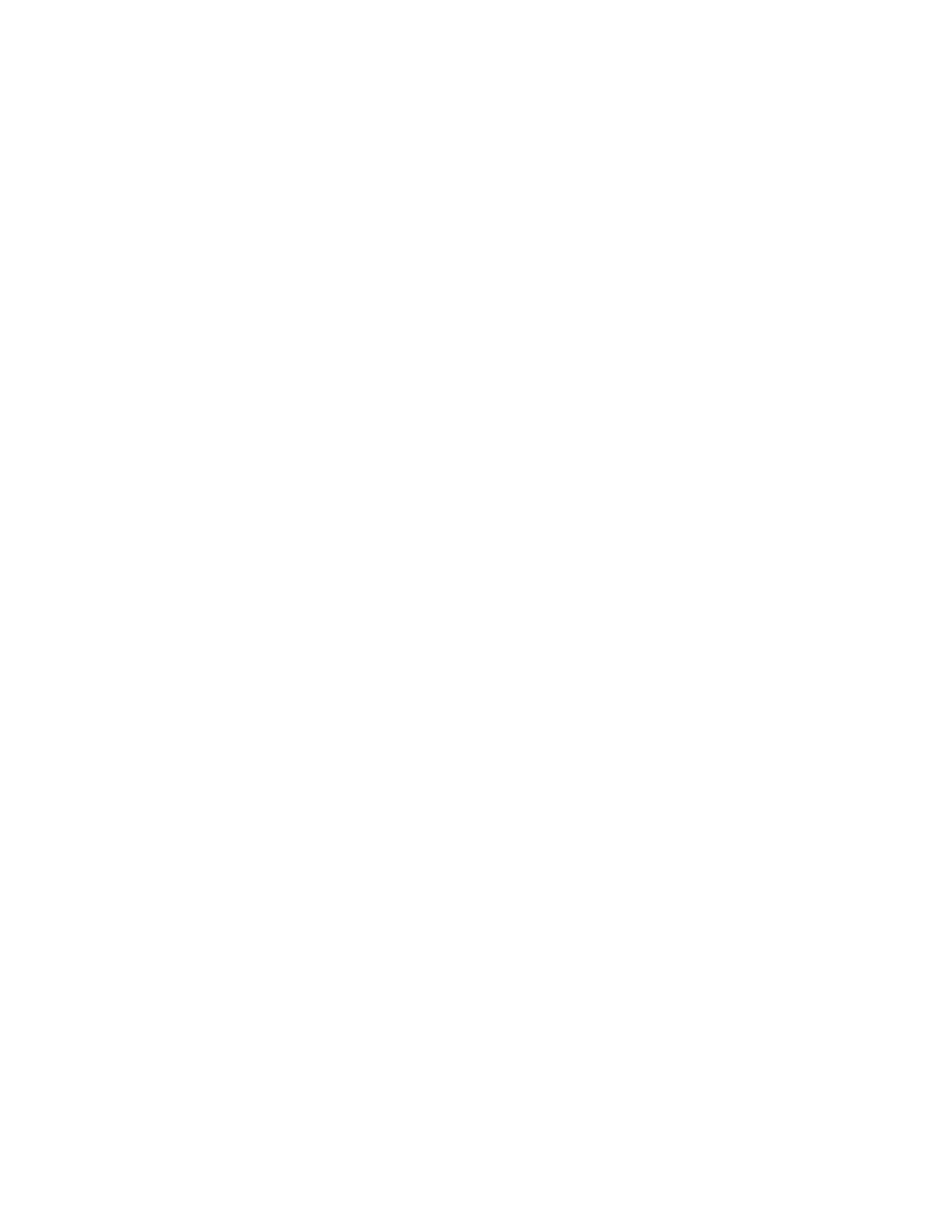} \hspace{15mm}
\includegraphics[scale=0.3]{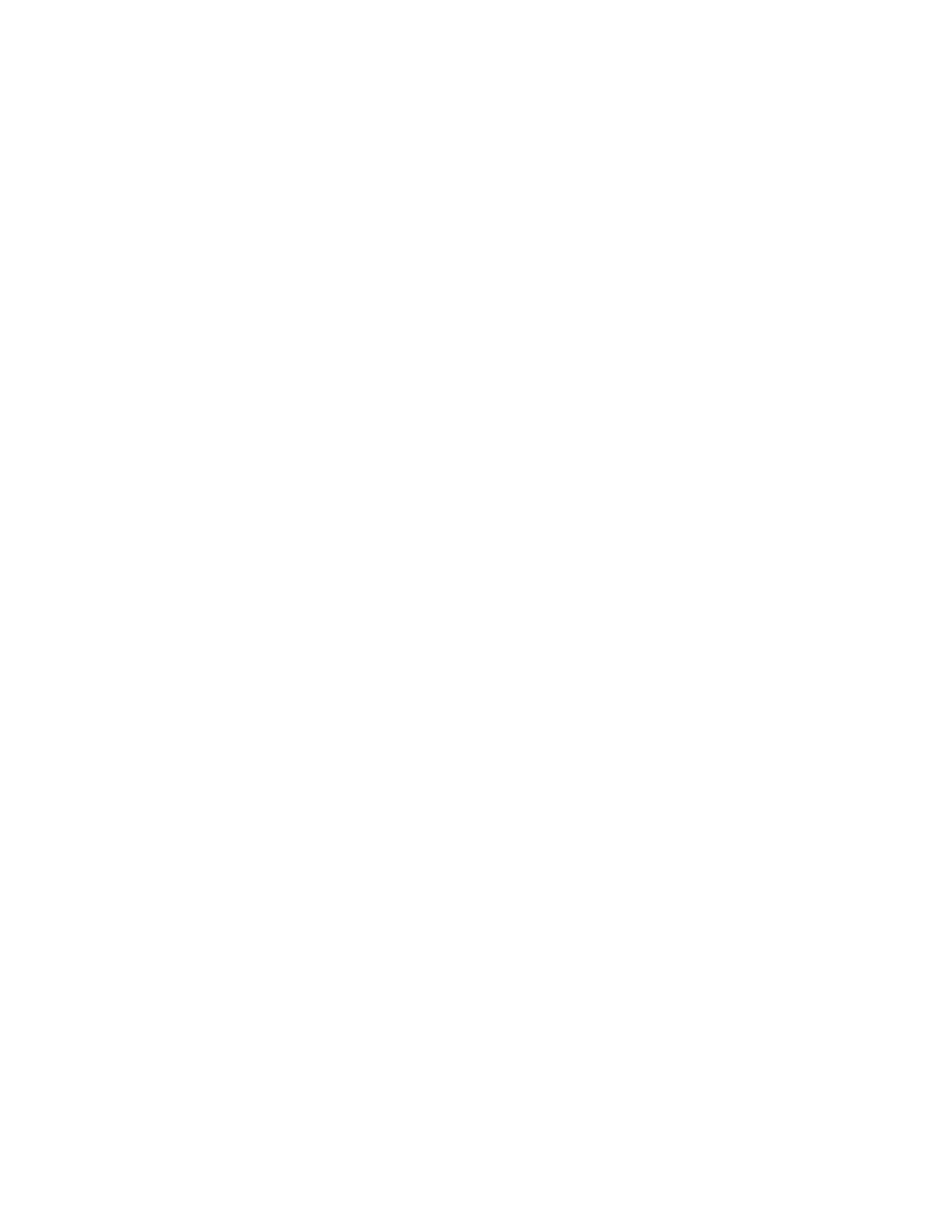} \\ \vspace{7mm}
\includegraphics[scale=0.3]{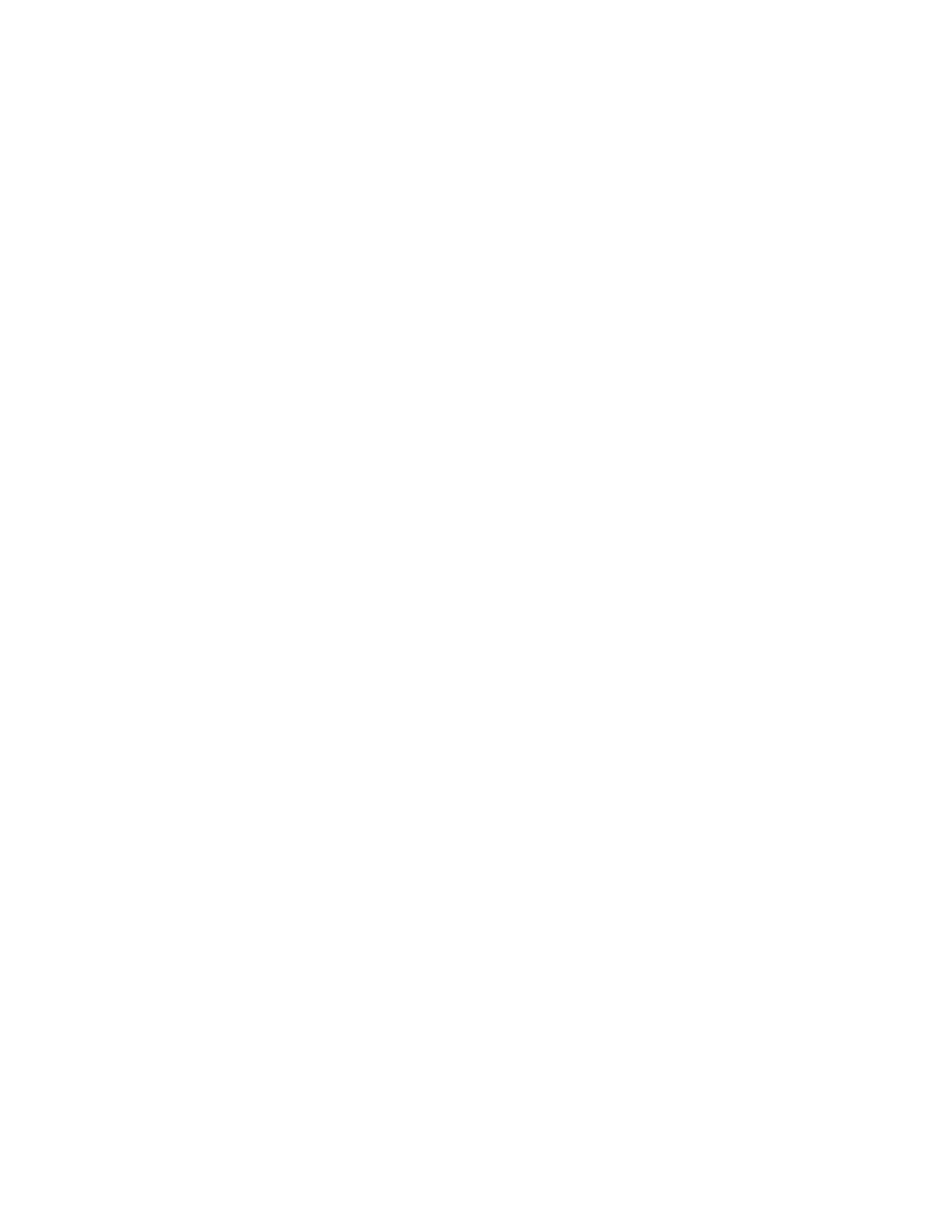} \hspace{15mm}
\includegraphics[scale=0.3]{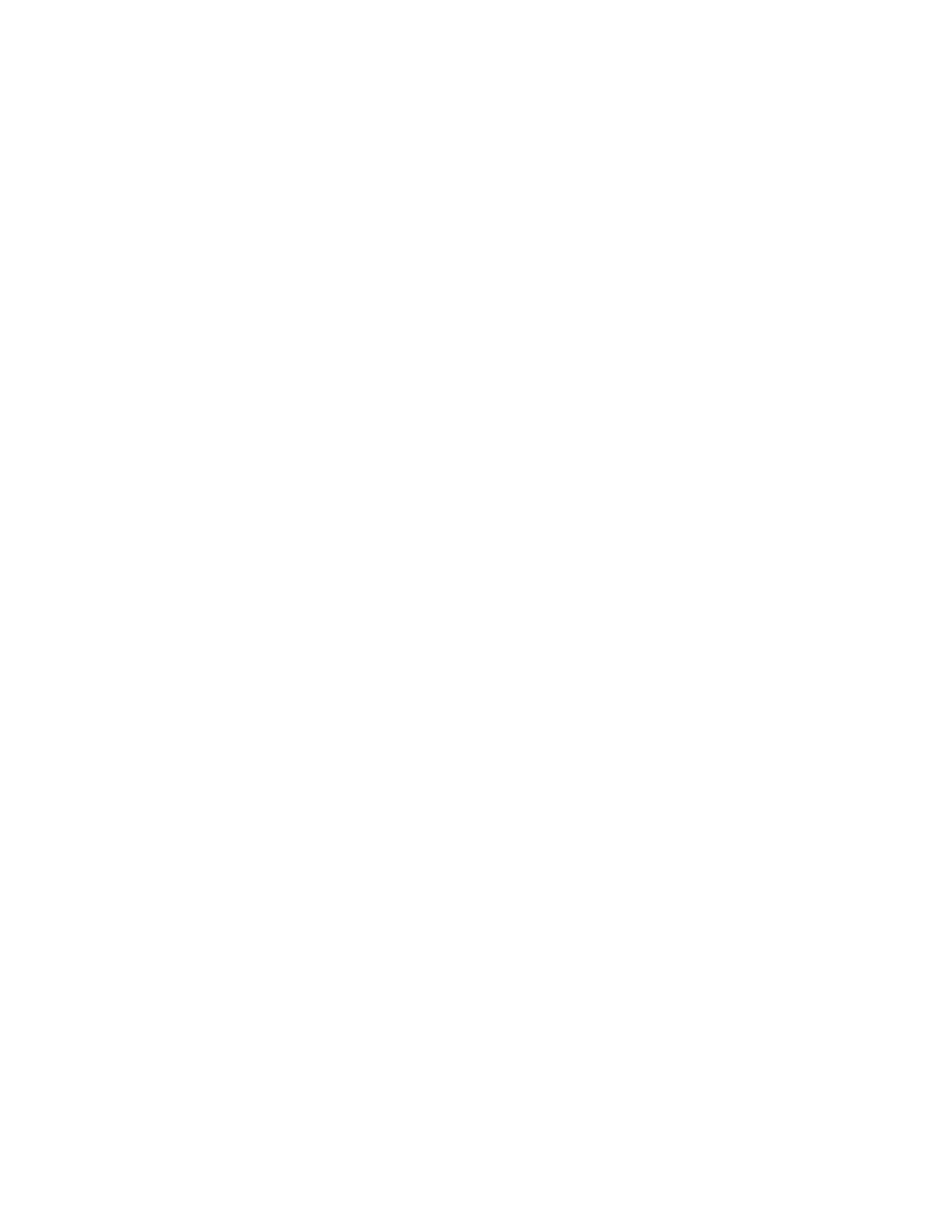} \hspace{15mm}
\includegraphics[scale=0.3]{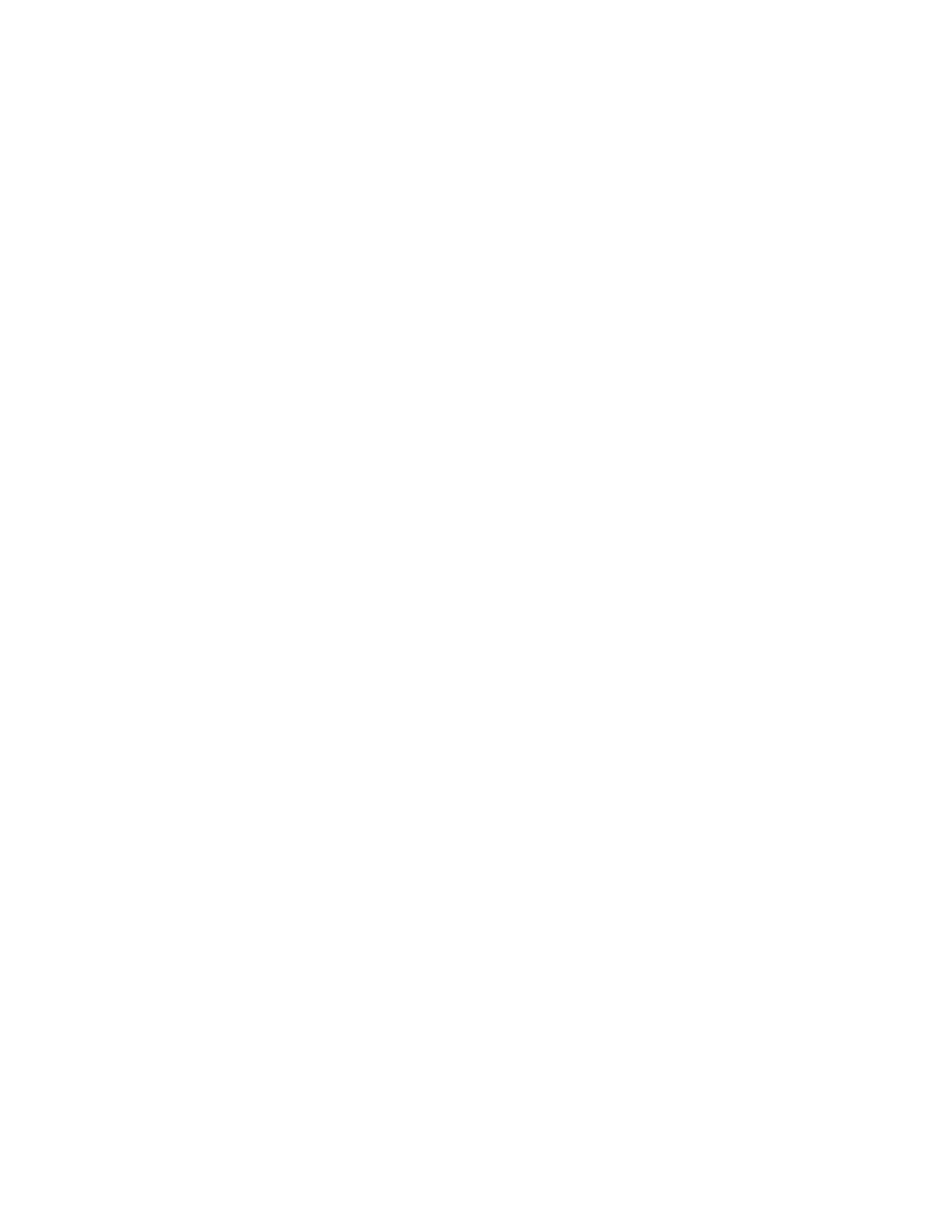} \\ \vspace{7mm}
\includegraphics[scale=0.3]{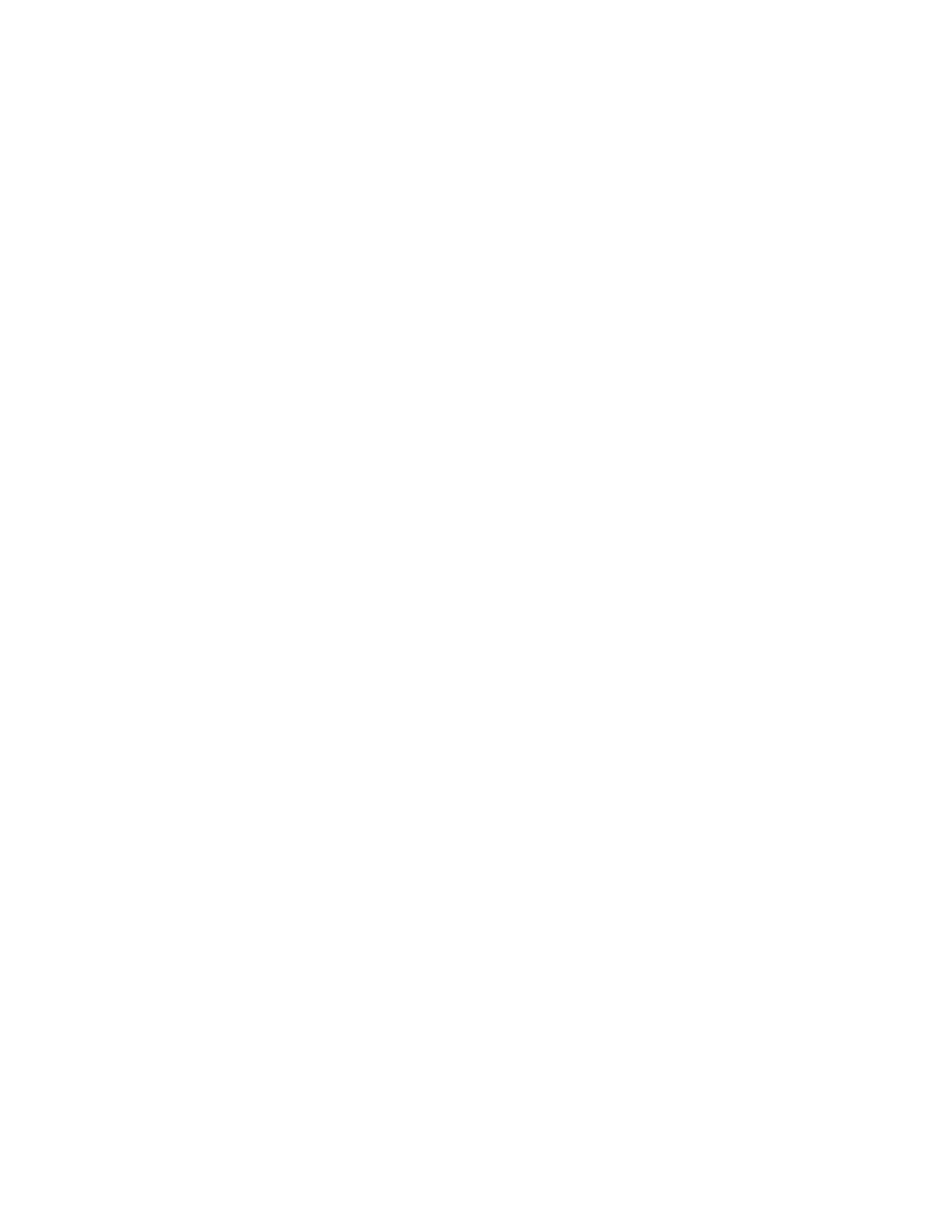} \hspace{15mm}
\includegraphics[scale=0.3]{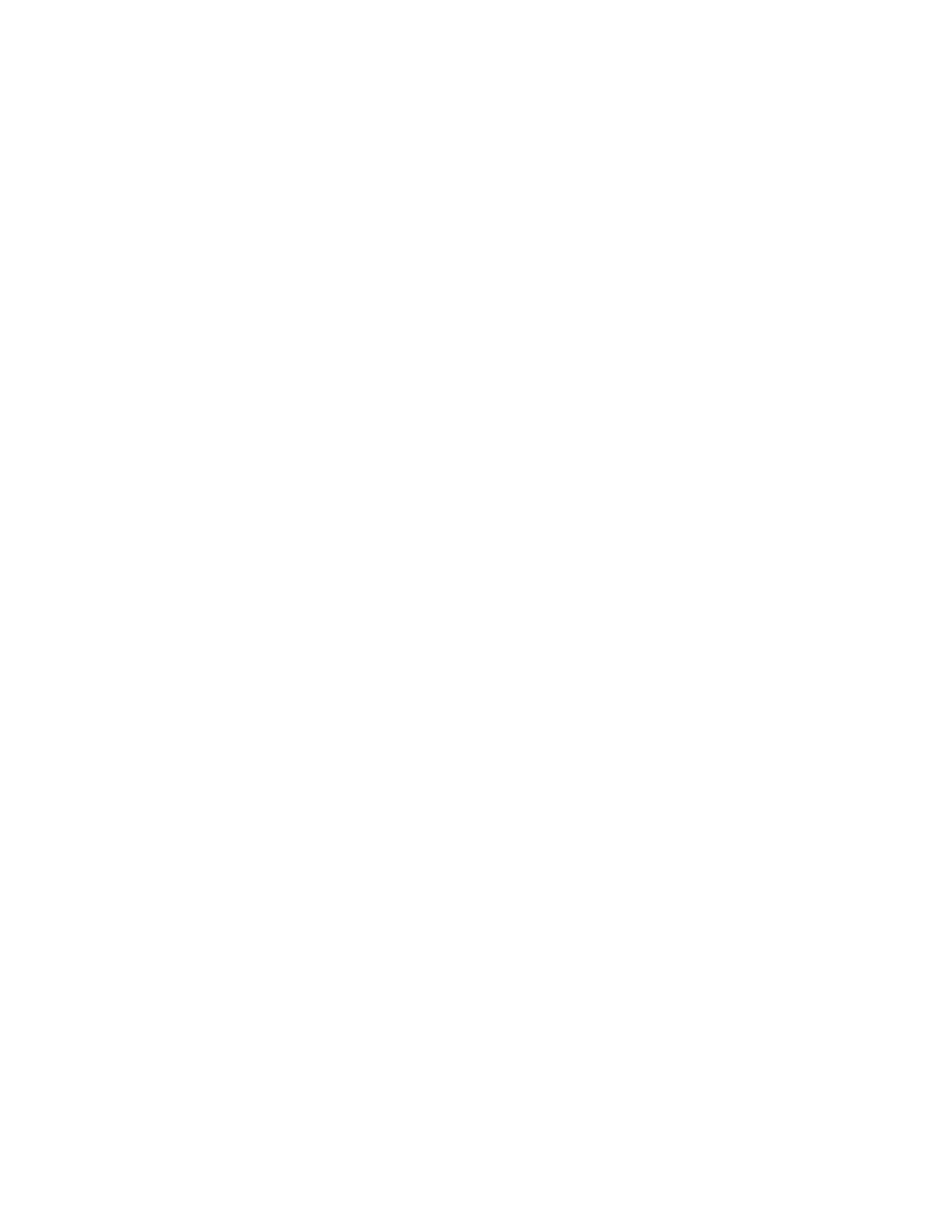} \hspace{15mm}
\includegraphics[scale=0.3]{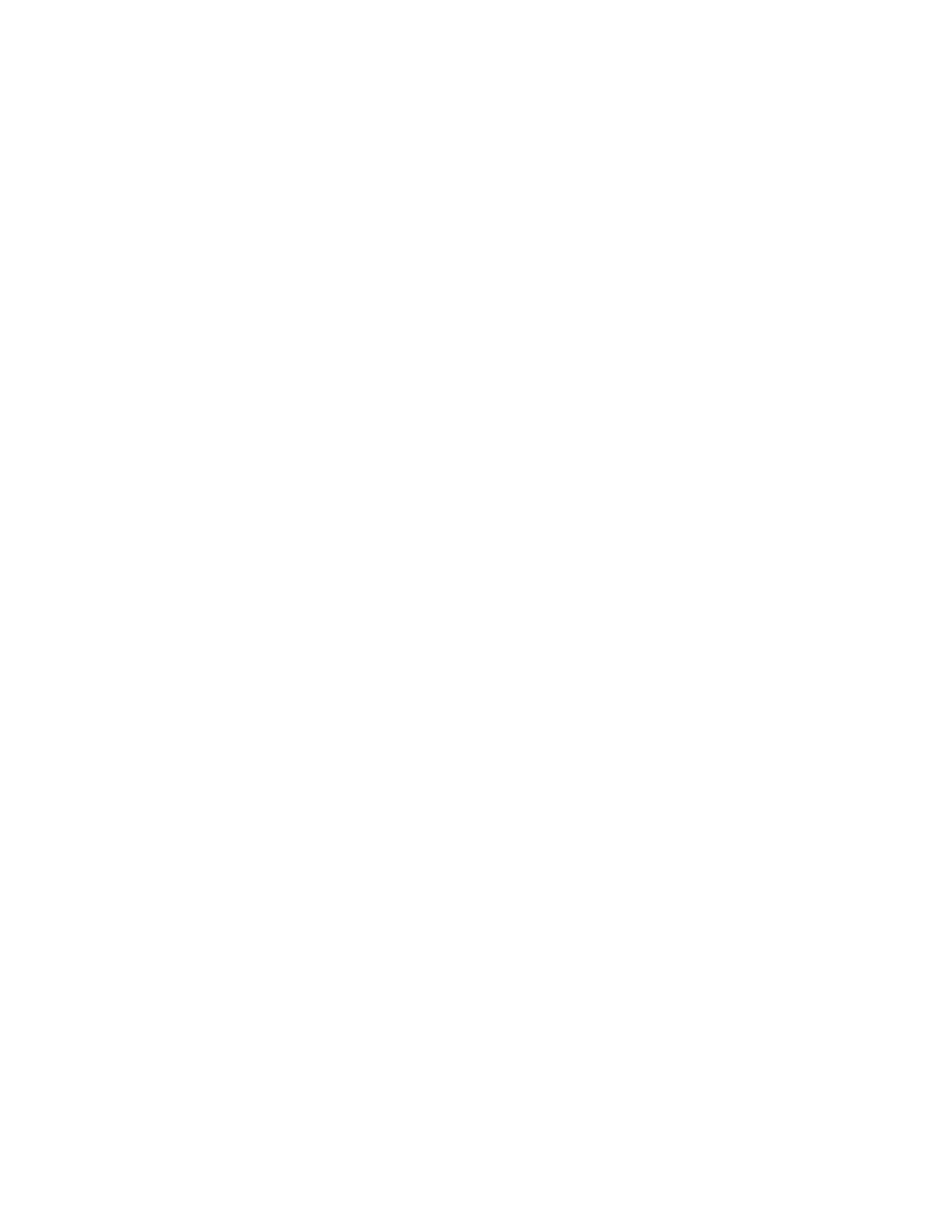} \\ \vspace{7mm}
\includegraphics[scale=0.3]{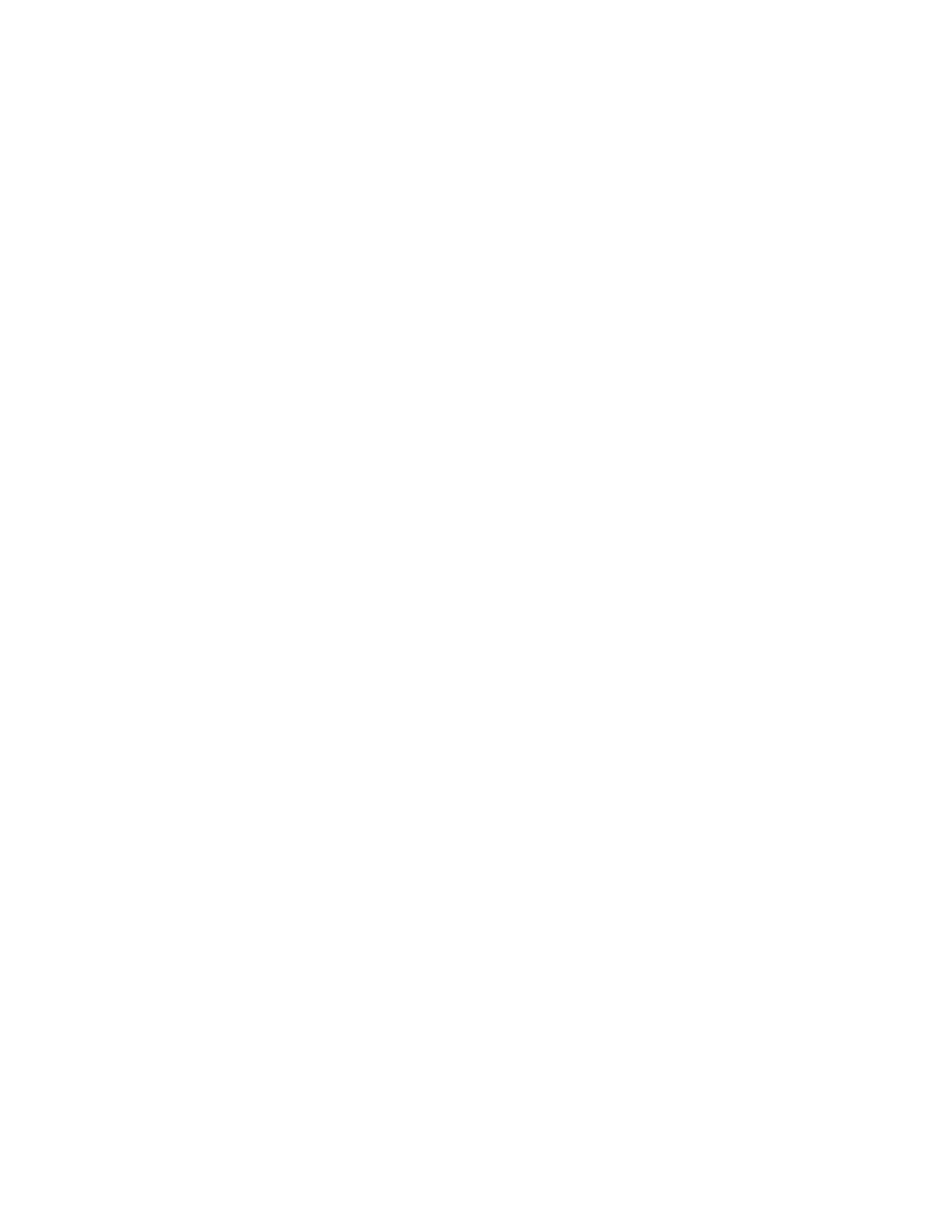}
\caption{All proper tree-based networks with 7 vertices (up to leaf labeling).}
\label{Catalog2}
\end{figure}

\subsubsection{Catalog of all proper non-tree-based networks with 6 interior vertices and 2 leaves}
\begin{figure}[htbp]
\centering
\includegraphics[scale=0.2]{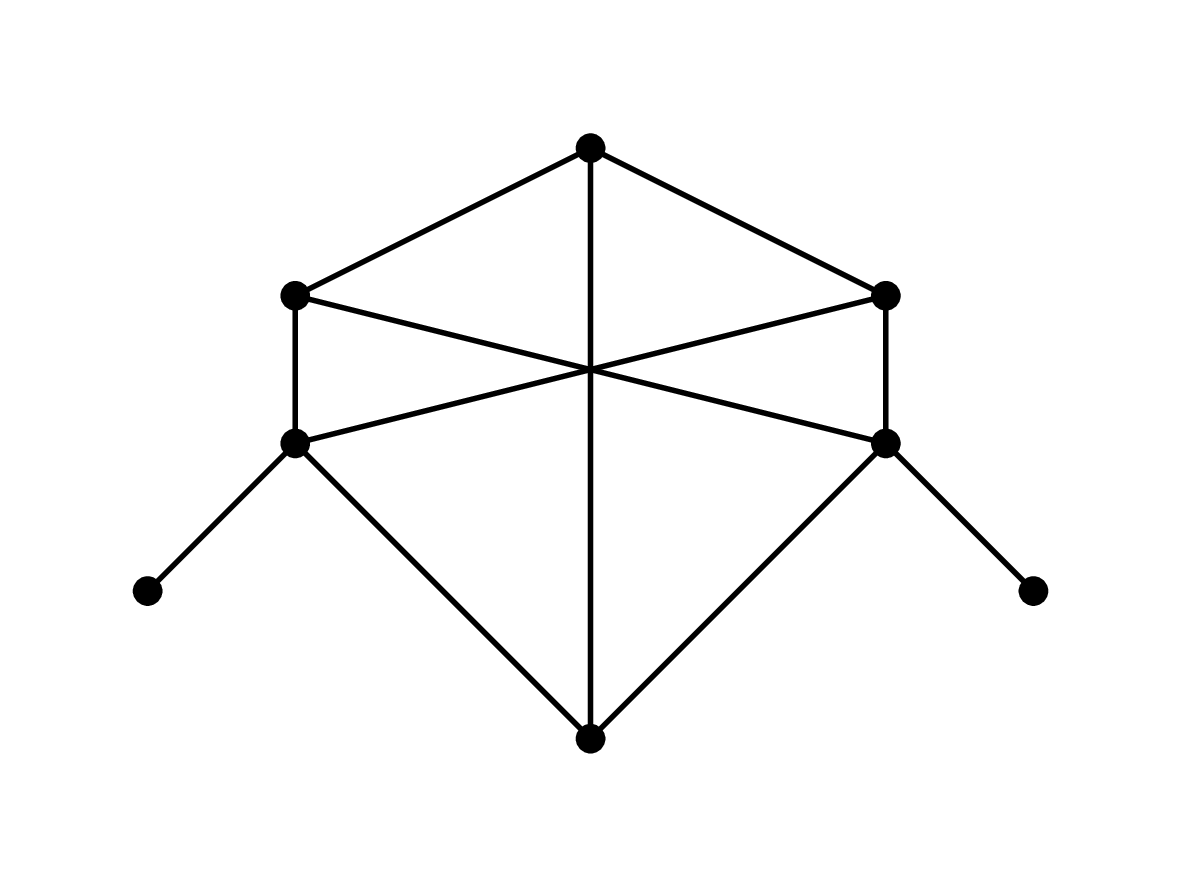}\hspace{15mm}
\includegraphics[scale=0.2]{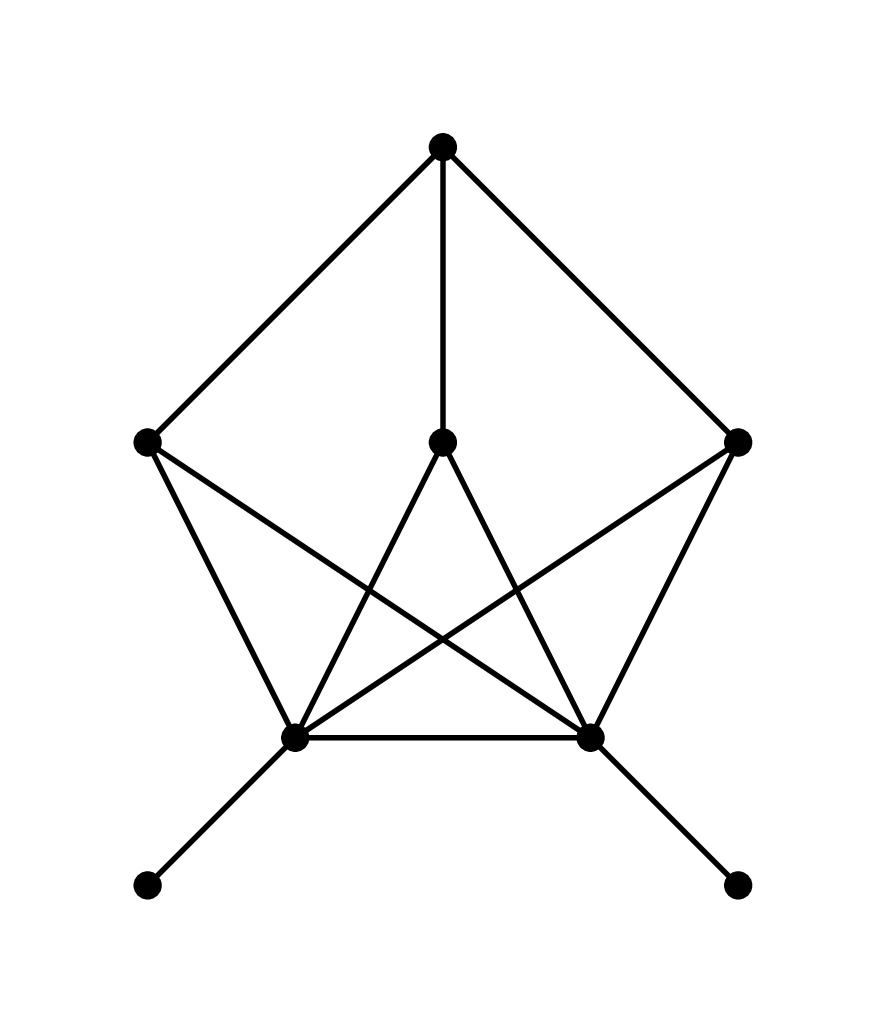}
\includegraphics[scale=0.2]{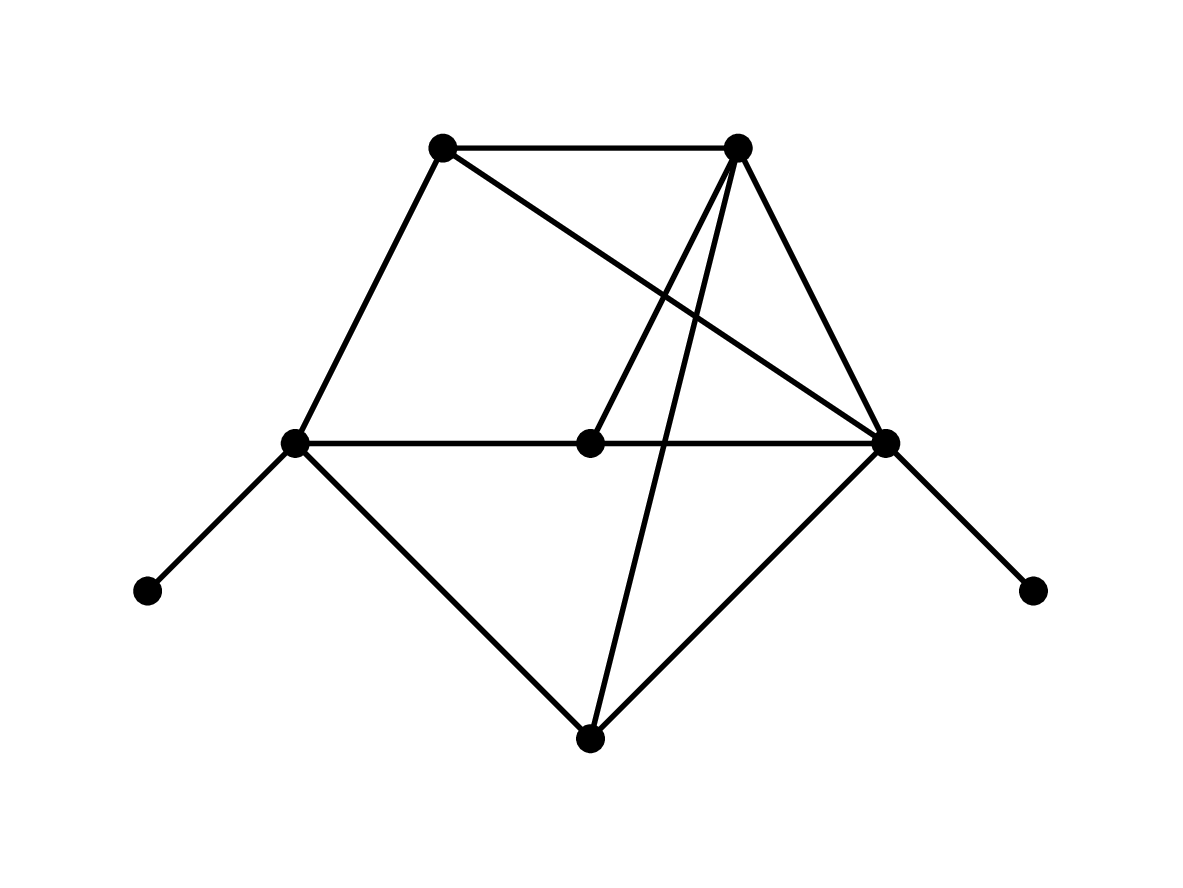}\hspace{15mm}
\includegraphics[scale=0.2]{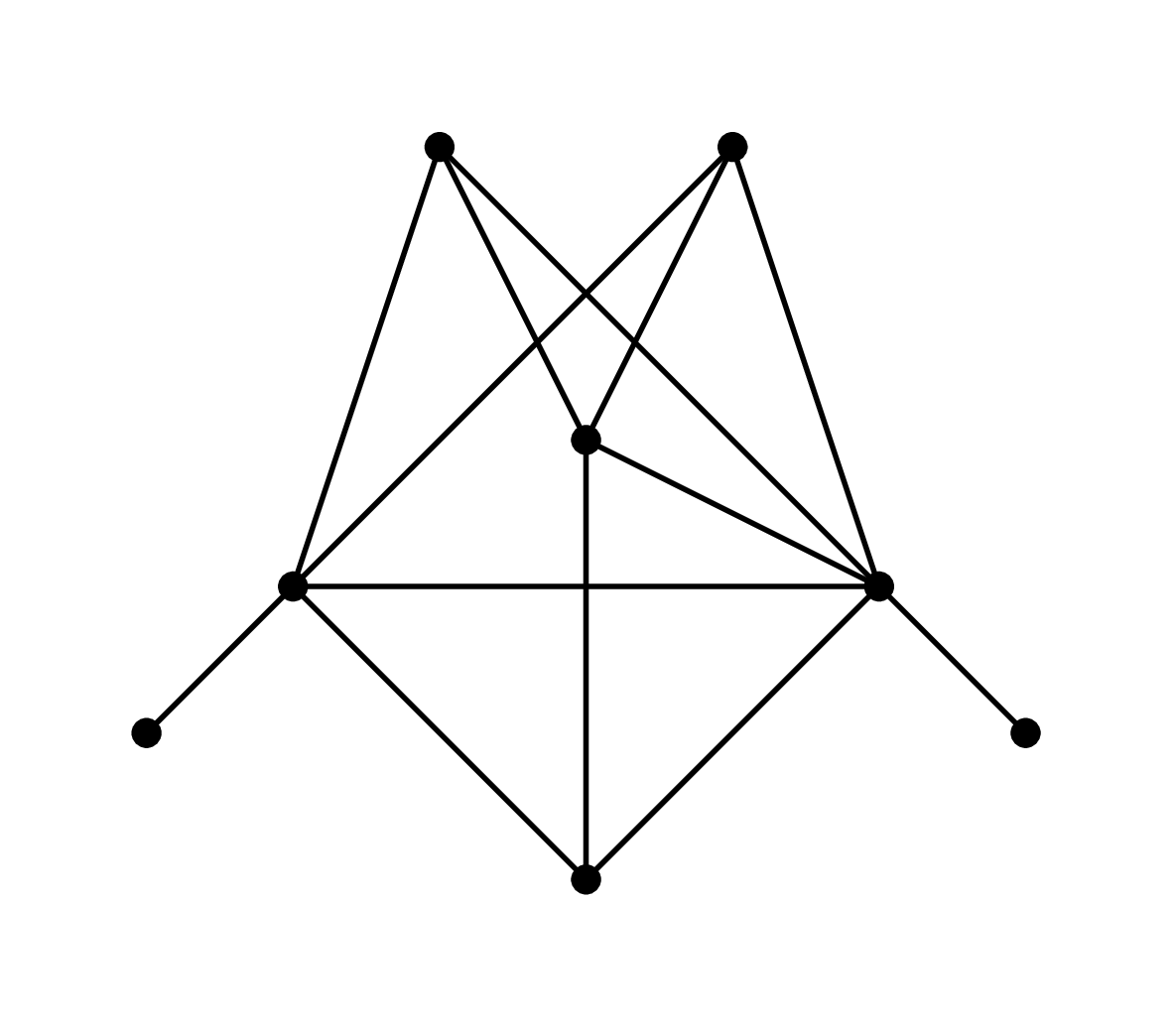}
\includegraphics[scale=0.2]{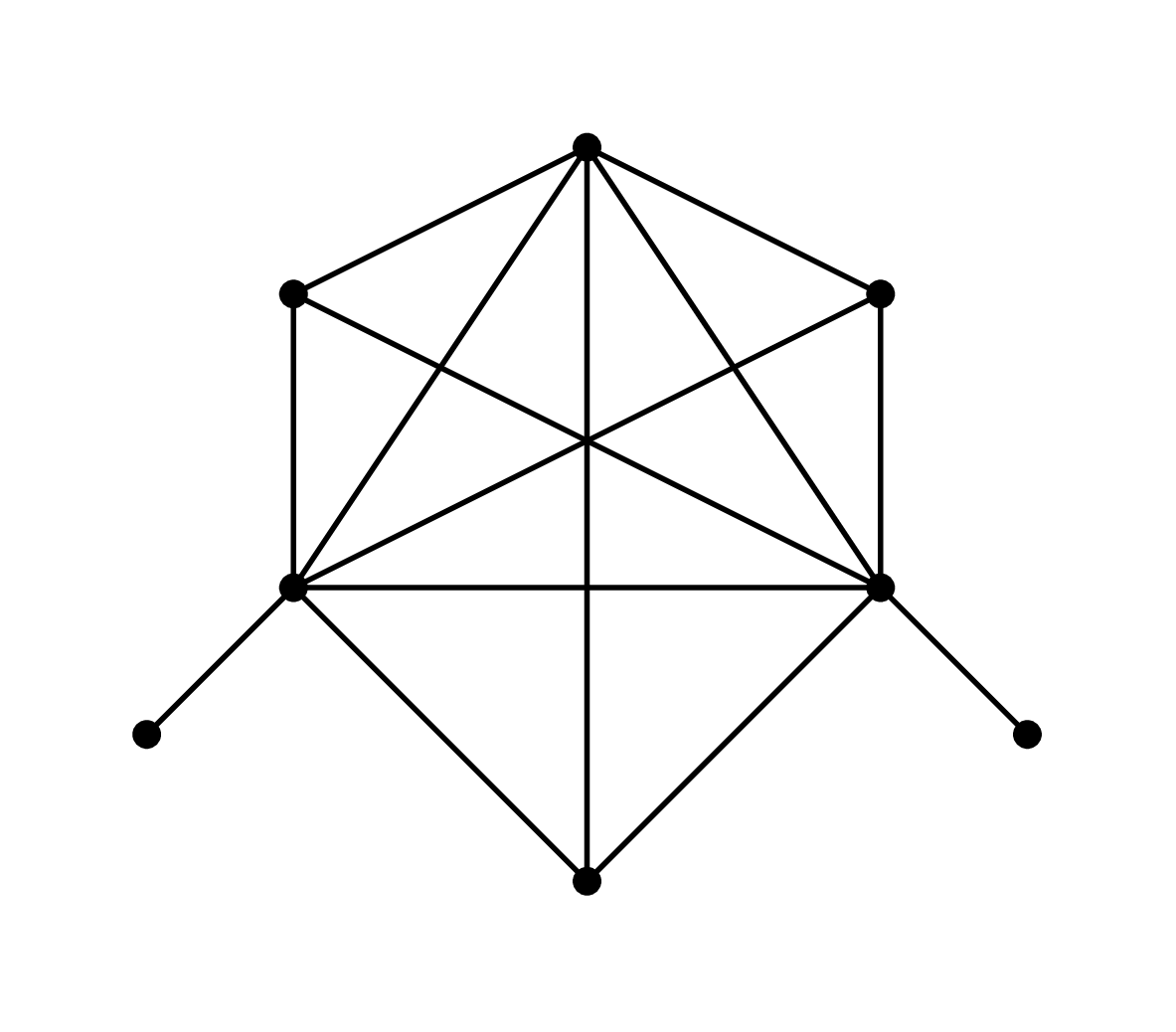}\hspace{15mm}
\includegraphics[scale=0.2]{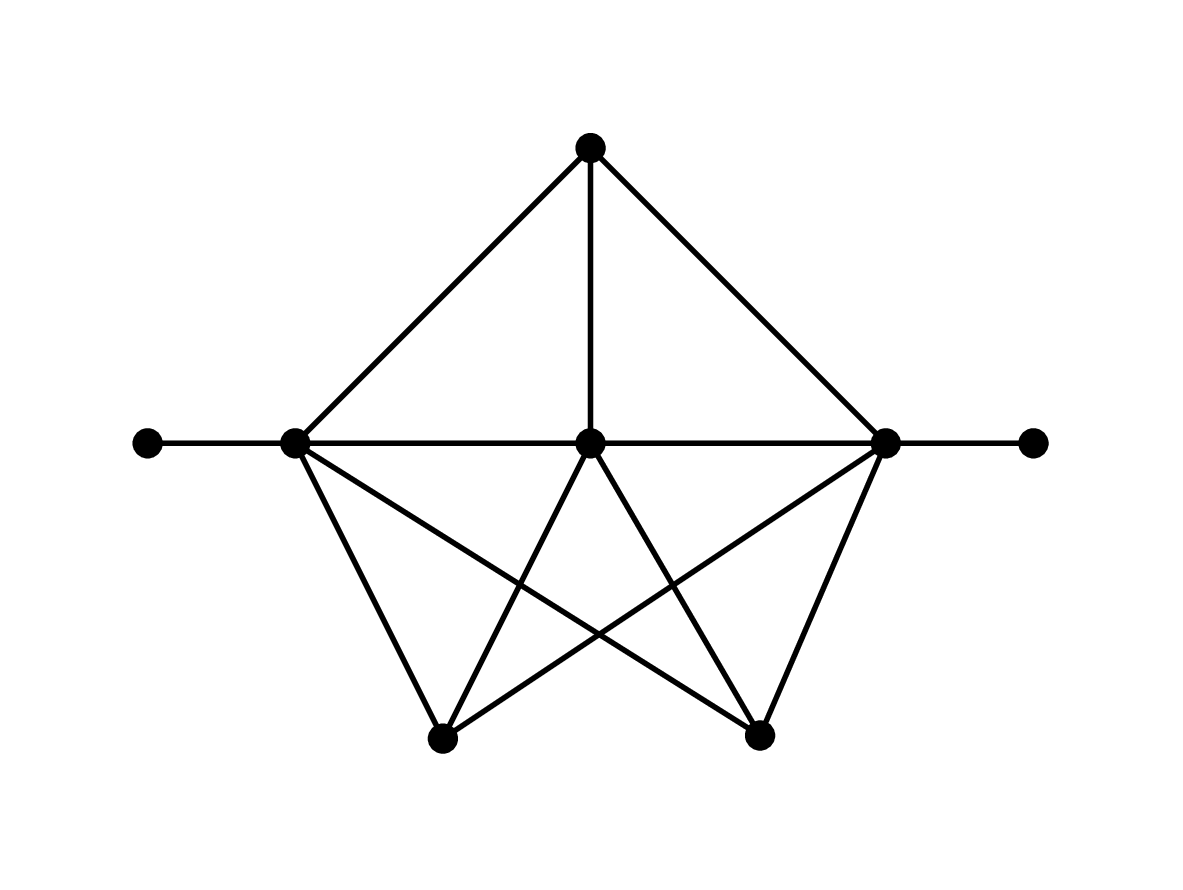}
\includegraphics[scale=0.2]{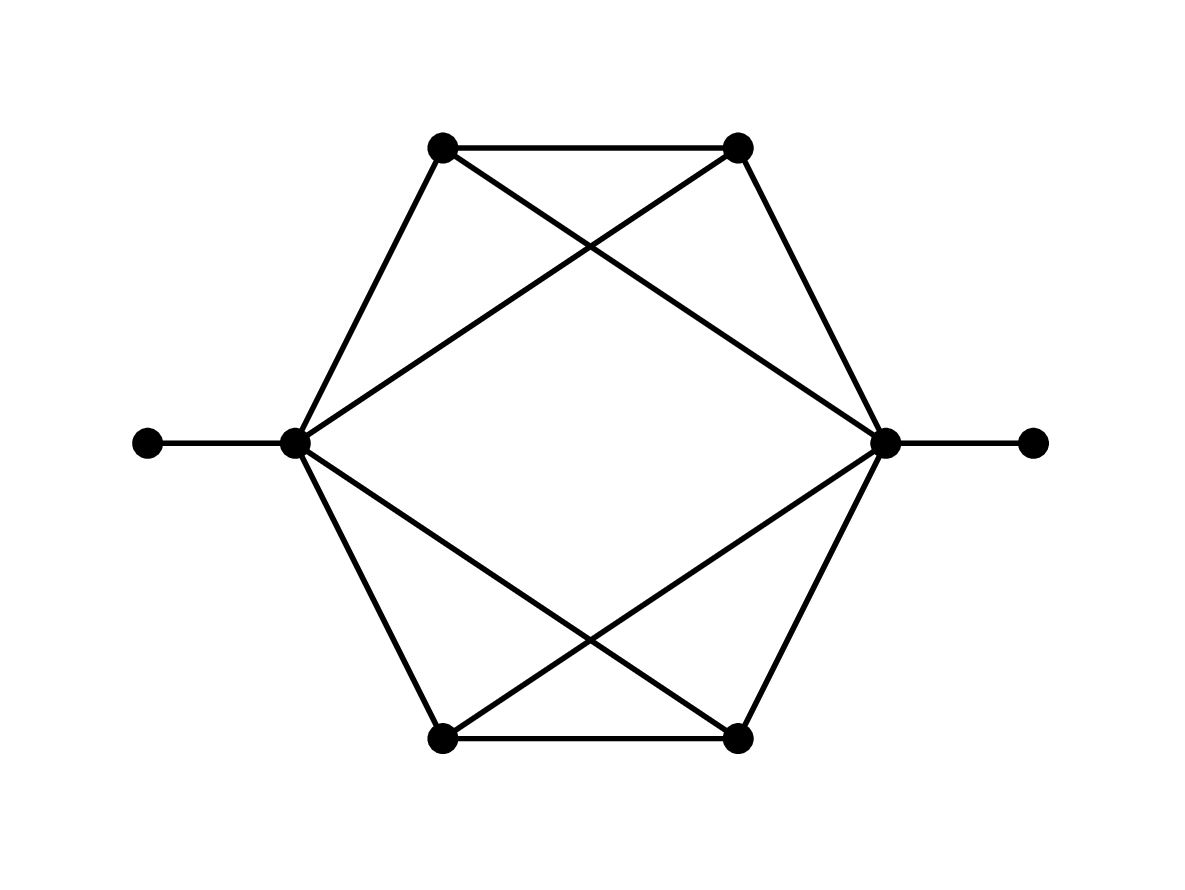}\hspace{15mm}
\includegraphics[scale=0.2]{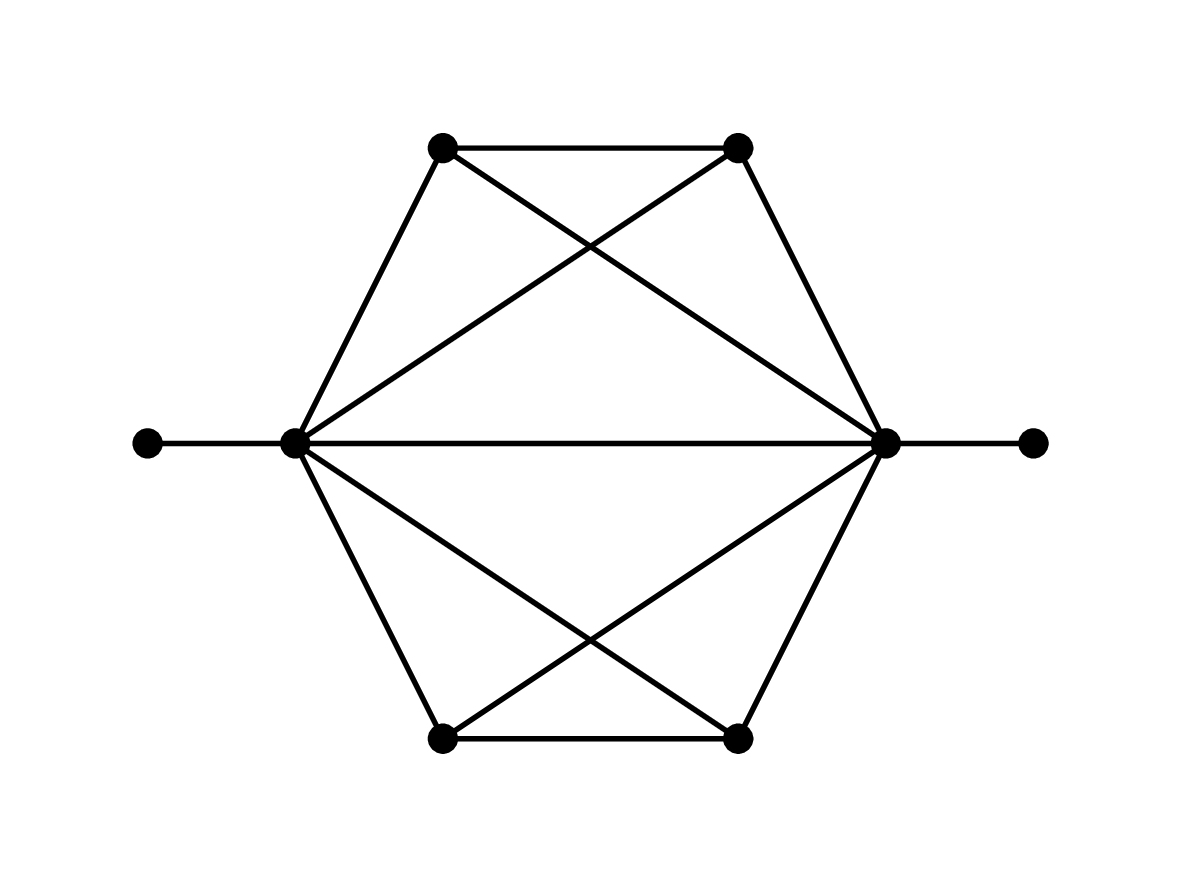}
\caption{All proper unrooted non-tree-based networks on 8 vertices (up to leaf labeling), 2 of which are leaves.}
\label{Fig_Catalog8}
\end{figure}

\end{document}